\providecommand{\U}[1]{\protect\rule{.1in}{.1in}}
\newtheorem{theorem}{Theorem}
\newtheorem{lemma}[theorem]{Lemma}
\newtheorem{proposition}[theorem]{Proposition}
\newtheorem{remark}[theorem]{Remark}
\newenvironment{proof}[1][Proof]{\noindent\textbf{#1.} }{\ \rule{0.5em}{0.5em}}
\begin{document}

\title{Convertibility of Observables}
\author{Keiji Matsumoto\\Quantum Computation Group, National Institute of Informatics, \ \\2-1-2 Hitotsubashi, Chiyoda-ku, Tokyo 101-8430 \\e-mail : keiji@nii.ac.jp}
\maketitle

\begin{abstract}
Some problems of quantum information, cloning, estimation and testing of
states, universal coding etc., are special example of the following `state
convertibility' problem. In this paper, we consider the dual of this problem,
'observable conversion problem'. Given families of operators $\left\{
L_{\theta}\right\}  _{\theta\in\Theta}$ and $\left\{  M_{\theta}\right\}
_{\theta\in\Theta}$\ , we ask whether there is a completely positive (sub)
unital map which sends $L_{\theta}$ to $M_{\theta}$ for each $\theta$. We give
necessary and sufficient conditions for the convertibility in some special cases.

\end{abstract}

\section{Introduction}

\subsection{Problem treated in the paper}

Some problems of quantum information, cloning, estimation and testing of
states, universal coding etc., are special example of the following `state
convertibility' problem. Consider parameterized families of density operators
$\mathcal{E}=\left\{  \rho_{\theta}\right\}  _{\theta\in\Theta}$,
$\,\,\mathcal{F}=\,\left\{  \sigma_{\theta}\right\}  _{\theta\in\Theta}$, each
on a finite dimensional Hilbert space $\mathcal{H}$ and $\mathcal{K}$,
respectively. Then the question is whether there is a completely trace
preserving positive (CPTP) map $\Phi$ such that
\begin{equation}
\forall\theta\in\Theta,\,\left\Vert \Phi\left(  \rho_{\theta}\right)
-\sigma_{\theta}\right\Vert _{1}\leq e_{\theta}, \label{convert-state-e}%
\end{equation}
where $e_{\theta}$ are small non-negative numbers. Its errorless version is to
find whether there is a CPTP map $\Lambda$ such that%
\begin{equation}
\forall\theta\in\Theta,\,\,\Phi\left(  \rho_{\theta}\right)  =\sigma_{\theta}.
\label{convert-state}%
\end{equation}

In this paper we consider its `dual' problem, or `observable convertibility'
problem. Denote by $\mathcal{L}\left(  \mathcal{H}\right)  $ the set of all
the linear operators over $\mathcal{H}$ ($\dim\mathcal{H\,<\infty}$ unless
otherwise mentioned) and $I_{\mathcal{H}}$ is the identity element of
$\mathcal{L}\left(  \mathcal{H}\right)  $. Consider sets of positive
operators,
\[
\mathcal{\hat{E}}=\left\{  L_{\theta}\right\}  _{\theta\in\Theta
},\,\,\mathcal{\hat{F}}=\,\left\{  M_{\theta}\right\}  _{\theta\in\Theta},
\]
where $\left\vert \Theta\right\vert <\infty$ ($\Theta=\left\{  1,\cdots
,\left\vert \Theta\right\vert \right\}  $) and $L_{\theta}$ and $M_{\theta}$
is acting on $d$-dimensional space $\mathcal{H}$ and $d^{\prime}$-dimensional
space $\mathcal{K}$. respectively. Our question is whether there is a complete
positive (CP) (sub)unital or unital map $\Lambda$ such that \
\begin{equation}
\Lambda\left(  L_{\theta}\right)  =M_{\theta},\,\forall\theta\in\Theta.
\label{convert}%
\end{equation}

\subsection{Motivation}

One application of the problem treated here is the question of the order
structure of POVMs treated in \cite{Bucemietal}: If POVM $\left\{
E_{i}\right\}  _{i\in I}$ and $\left\{  F_{i}\right\}  _{i\in I}$ \ satisfies
\[
\forall i\in I,\,\Lambda\left(  E_{i}\right)  =F_{i}%
\]
for certain CP unital map $\Lambda$, the latter can be made from the former by
a physical transformation. Thus $\left\{  E_{i}\right\}  _{i\in I}$ is more
useful than $\left\{  F_{i}\right\}  _{i\in I}$ for any tasks, obviously. To
check the relation holds, one instead can check (\ref{convert}) by setting
$\Theta=I\backslash\left\{  i_{0}\right\}  $, and $L_{\theta}=c_{\theta
}E_{\theta}$, $M_{\theta}=c_{\theta}F_{\theta}$, where $c_{\theta}%
\in\mathbb{R}$. In this case, $\Lambda$ considered is a CP unital map
$\Lambda$.

Sometimes, we are interested in sub-POVMs, or sets of positive operators with
\[
\sum_{\theta\in\Theta}L_{\theta}\leq I_{\mathcal{H}}\text{.}%
\]
For example, in case of detection of unknown states $\{\rho_{\theta}%
\}_{\theta\in\Theta}$ , sometimes we allow the answer "I don't know". Then, if
$L_{\theta}$ corresponds to the answer `$\rho_{\theta}$ is the true state',
the sum $\sum_{\theta\in\Theta}L_{\theta}$ is smaller than or equal to
$I_{\mathcal{H}}$, and $I_{\mathcal{H}}-\sum_{\theta\in\Theta}L_{\theta}$
corresponds to "I don't know". In this case, transformation by a CP subunital
map is of interest. Suppose (\ref{convert}) holds for a CP subunital map
$\Lambda$ and $\sum_{\theta\in\Theta}M_{\theta}$ is smaller than or equal to
$I_{\mathcal{K}}$ $.$Then the measurement corresponding to $\left\{
M_{\theta}\right\}  _{\theta\in\Theta}$ is realized by the one corresponding
to $\left\{  L_{\theta}\right\}  _{\theta\in\Theta}$ in the following manner.
Given an input state $\rho$, we perform the measurement
\[
\rho\rightarrow\left\{
\begin{array}
[c]{cc}%
\sqrt{I_{\mathcal{K}}-\Lambda^{\ast}\left(  I_{\mathcal{H}}\right)  }\rho
\sqrt{I_{\mathcal{K}}-\Lambda^{\ast}\left(  I_{\mathcal{H}}\right)  }, &
\text{output}=\text{`I don't know'},\\
\Lambda\left(  \rho\right)  , & \text{output}=\text{`proceed'}.
\end{array}
\right.
\]
If the measurement result is `proceed' we apply the measurement corresponding
to $\left\{  L_{\theta}\right\}  _{\theta\in\Theta}$.

Also, (\ref{convert}) is related to the `state conversion' problem. Let us
define $S\in\mathcal{L}\left(  \mathcal{H}\right)  $ and $T\in\mathcal{L}%
\left(  \mathcal{K}\right)  $ by
\begin{equation}
\sum_{\theta\in\Theta}\rho_{\theta}=SS^{\dagger},\,\sum_{\theta\in\Theta
}\sigma_{\theta}=TT^{\dagger},\, \label{rho=SS}%
\end{equation}
and define
\begin{equation}
L_{\theta}:=S^{-1}\rho_{\theta}S^{\dagger-1},\,M_{\theta}:=T^{-1}%
\sigma_{\theta}T^{\dagger-1}. \label{rho=SLS}%
\end{equation}
(Note that in `state conversion' problem, we can suppose $\mathrm{supp}%
\,S=\mathcal{H}$ and $\mathrm{supp}\,T=\mathcal{K}$ without loss of
generality.) If (\ref{convert-state}) holds for a CP trace preserving map
$\Phi$, the map
\begin{equation}
\Lambda\left(  X\right)  :=T^{-1}\Phi\left(  SX\,S^{\dagger}\right)
T^{\dagger-1} \label{Lambda-0}%
\end{equation}
is CP and unital, and $\Lambda$ satisfies (\ref{convert}) and
\begin{equation}
\Lambda^{\ast}\left(  T^{\dagger}T\right)  =S^{\dagger}S. \label{L(TT)=SS}%
\end{equation}
So existence of a CP unital map with (\ref{convert}) is necessary condition
for existence of a CPTP map with (\ref{convert-state}). \ 

Conversely, if (\ref{convert}) holds\ for a CP unital map $\Lambda$, the CP
map $\Phi$ defined by
\begin{equation}
\Phi\left(  X\right)  :=T\Lambda\left(  S^{-1}\,X\,S^{\dagger-1}\right)
T^{\dagger} \label{Phi-def}%
\end{equation}
satisfies (\ref{convert-state}). $\Phi$ is trace preserving if and only if
(\ref{L(TT)=SS}) holds.

Another link to state convertibility problem is as follows. It is known that
the existence of CPTP map with (\ref{convert-state}) is equivalent to, when
$\left\vert \Theta\right\vert <\infty$,
\[
\inf_{\Lambda}\,\sum_{\theta\in\Theta}\mathrm{tr}\,\Lambda\left(  L_{\theta
}\right)  \rho_{\theta}\,\,p_{\theta}\leq\sum_{\theta\in\Theta}\mathrm{tr}%
\,L_{\theta}\,\sigma_{\theta}\,p_{\theta}%
\]
holds for any parameterized family of positive operators $\left\{  L_{\theta
}\right\}  _{\theta\in\Theta}$ with $\left\Vert L_{\theta}\right\Vert \leq1$
and for any probability distributions $p_{\theta}$ on $\Theta$. Here,
$\Lambda$ moves all over the CP trace preserving maps, or all over the CP
trance non-increasing maps. To solve this problem, the knowledge about
$\left\{  \Lambda^{\ast}\left(  L_{\theta}\right)  \right\}  _{\theta\in
\Theta}$ when $\Lambda^{\ast}$ moves over all the CP (sub)unital maps will be
of some help.

\subsection{Notations, conventions, and a small technical point}

Here we add some more notations used in the paper. In this paper $d=\dim$
$\mathcal{H\,<\infty}$ unless otherwise mentioned. A map $\Lambda$ from
$\mathcal{L}\left(  \mathcal{H}\right)  $ to $\mathcal{L}\left(
\mathcal{K}\right)  $ is said to be unital if $\Lambda\left(  I_{\mathcal{H}%
}\right)  =I_{\mathcal{K}}$, and subunital if $\Lambda\left(  I_{\mathcal{H}%
}\right)  \leq I_{\mathcal{K}}$. By definition, any unital map is subunital.
$P_{\mathcal{H}}$ is the projection onto the vector space $\mathcal{H}$.
$\left\Vert A\right\Vert $ , $\lambda_{\max}\left(  A\right)  $ and
$\lambda_{\min}\left(  A\right)  $ denotes the operator norm, the largest
eigenvalue, and the smallest eigenvalue, respectively. Also, $\mathrm{sp}%
\left(  A\right)  :=\lambda_{\max}\left(  A\right)  -\lambda_{\min}\left(
A\right)  $. $\left\Vert A\right\Vert _{1}$ is the trace norm of $A$,
$\left\Vert A\right\Vert _{1}:=\mathrm{tr}\,\sqrt{A^{\dagger}A}$. For a
matrices $A=\left[  A_{i,j}\right]  $ and $B=\left[  B_{i,j}\right]  $, the
Hadamard product $A\circ B$ is defined by $\left(  A\circ B\right)
_{i,j}=A_{i,j}B_{i,j}$.

In dealing with `observable convertibility', there is a subtle point which was
absent in `state convertibility' problem. In the latter, the input Hilbert
space $\mathcal{H}$ and the output Hilbert space $\mathcal{K}$ could be any
space which contains $\sum_{\theta\in\Theta}\mathrm{supp}\,\rho_{\theta}$ and
$\sum_{\theta\in\Theta}\mathrm{supp}\,\sigma_{\theta}$, respectively. This is
not the case in case that $\Lambda$ is a unital map. The reason is as follows.

\ Let $\Lambda$ be a linear map from $\mathcal{L}\left(  \mathcal{H}\right)  $
to $\mathcal{L}\left(  \mathcal{K}\right)  $, and let $\mathcal{H}^{\prime}%
$and $\mathcal{K}^{\prime}$ be Hilbert spaces with $\mathcal{H}^{^{\prime}%
}\subset\mathcal{H}$ and $\mathcal{K}\subset\mathcal{K}^{^{\prime}}$ . \ Then
the restriction of $\Lambda$ to $\mathcal{L}\left(  \mathcal{H}^{\prime
}\right)  $ nor the imbedding the range of $\Lambda$ into $\mathcal{K}%
^{^{\prime}}$ is not unital in general. So even if there is a CP map with
(\ref{convert}) and $\Lambda\left(  I_{\mathcal{H}}\right)  =I_{\mathcal{K}}$,
there might not be any $\Lambda$ with and $\Lambda\left(  I_{\mathcal{H}%
^{\prime}}\right)  =I_{\mathcal{K}^{\prime}}$. Therefore, not only the sets of
the observables $\mathcal{\hat{E}}=\left\{  L_{\theta}\right\}  _{\theta
\in\Theta}$ and $\,\mathcal{\hat{F}}=\,\left\{  M_{\theta}\right\}
_{\theta\in\Theta}$ , the choice of underlying Hilbert spaces $\mathcal{H}$
and $\mathcal{K}$ is important part of the problem.

In dealing with problem, an easy and useful necessary condition for
(\ref{convert}) is $\Lambda\left(  L_{\theta}\right)  =M_{\theta}$ for each
$\theta\in\Theta_{0}$, where $\Theta_{0}$ is a subset of $\Theta$. In case
that $\sum_{\theta\in\Theta_{0}}\mathrm{supp}\,L_{\theta}$ is strictly smaller
than $\sum_{\theta\in\Theta}\mathrm{supp}\,L_{\theta}$, one may be tempted to
replace $\mathcal{H}$ by $\sum_{\theta\in\Theta_{0}}\mathrm{supp}\,L_{\theta}%
$. But this is not possible, as mentioned above. This is one reason why we
also pay attention to conversion by subunital map. In this case one can freely
chose underlying Hilbert space, giving tractable necessary conditions for
existence of a unital map with (\ref{convert}).

\section{An application to a `state conversion' problem}

Suppose
\[
\rho_{\theta}=\left\vert u_{\theta}\right\rangle \left\langle u_{\theta
}\right\vert ,\,\,\sigma_{\theta}=\left\vert v_{\theta}\right\rangle
\left\langle v_{\theta}\right\vert ,
\]
(In the what follows, we do not assume $\mathrm{tr}\,\rho_{\theta}%
=\mathrm{tr}\,\sigma_{\theta}=1$. Thus $u_{\theta}$ and $v_{\theta}$ may not
be normalized.) Denote by $\mathcal{U}$ and $\mathcal{V}$ the family $\left\{
u_{\theta}\right\}  _{\theta\in\Theta}$ and $\left\{  v_{\theta}\right\}
_{\theta\in\Theta}$, respectively. Denote by $G_{\mathcal{U}}$ and
$G_{\mathcal{V}}$ the Gram matrix of $\mathcal{U}$ and $\mathcal{V}$,
respectively, that is,
\[
G_{\mathcal{U},\theta,\theta^{\prime}}:=\left\langle u_{\theta}\right\vert
\left.  u_{\theta^{\prime}}\right\rangle ,\,G_{\mathcal{V},\theta
,\theta^{\prime}}:=\left\langle v_{\theta}\right\vert \left.  v_{\theta
^{\prime}}\right\rangle .
\]
\ 

\begin{theorem}
\label{th:CJW}( Theorem\thinspace2 of \cite{CJW}) There is a CP trace
preserving map $\Phi$ from $\mathcal{L}\left(  \mathcal{H}\right)  $ to
$\mathcal{L}\left(  \mathcal{K}\right)  $ satisfying (\ref{convert-state}) if
and only if there is a matrix $H=\left[  H_{\theta,\theta^{\prime}}\right]  $
such that
\begin{align}
G_{\mathcal{U},\theta,\theta^{\prime}}  &  =G_{\mathcal{V},\theta
,\theta^{\prime}}\circ H,\label{G=GH-s}\\
H  &  \geq0,\,H_{\theta,\theta^{\prime}}=1. \label{H}%
\end{align}

\end{theorem}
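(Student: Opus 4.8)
The plan is to use the correspondence between CPTP maps and bipartite positive operators (Choi–Jamiołkowski), specialized to pure-state inputs and outputs. Since $\rho_\theta = |u_\theta\rangle\langle u_\theta|$ are rank-one, the action of $\Phi$ on the relevant subspace is essentially governed by how $\Phi$ transforms the vectors $u_\theta$, and a CPTP map applied to a pure state need not give a pure state, so we must allow $\Phi(|u_\theta\rangle\langle u_\theta|) = |v_\theta\rangle\langle v_\theta|$ to pin down $v_\theta$ only up to the relation $\langle v_\theta|v_{\theta'}\rangle$ being one ``diagonal block'' of a larger Gram matrix.

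First I would set up the Stinespring/Kraus picture: $\Phi(X) = \sum_k A_k X A_k^\dagger$ with $\sum_k A_k^\dagger A_k = I_{\mathcal H}$. The condition $\Phi(|u_\theta\rangle\langle u_\theta|) = |v_\theta\rangle\langle v_\theta|$ says $\sum_k (A_k|u_\theta\rangle)(A_k|u_\theta\rangle)^\dagger = |v_\theta\rangle\langle v_\theta|$, i.e.\ the vectors $A_k|u_\theta\rangle$ (over $k$) all lie along $|v_\theta\rangle$: there exist scalars $c_{\theta,k}$ with $A_k|u_\theta\rangle = c_{\theta,k}|v_\theta\rangle$ and $\sum_k |c_{\theta,k}|^2 = 1$. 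Then I would compute the inner products: $\langle u_\theta|A_k^\dagger A_{k}|u_{\theta'}\rangle = \overline{c_{\theta,k}}\,c_{\theta',k}\,\langle v_\theta|v_{\theta'}\rangle$, and summing over $k$ and using trace preservation gives $\langle u_\theta|u_{\theta'}\rangle = \big(\sum_k \overline{c_{\theta,k}} c_{\theta',k}\big)\langle v_\theta|v_{\theta'}\rangle$. So defining $H_{\theta,\theta'} := \sum_k \overline{c_{\theta,k}} c_{\theta',k}$ we get $G_{\mathcal U} = G_{\mathcal V}\circ H$ and $H_{\theta,\theta} = \sum_k|c_{\theta,k}|^2 = 1$; moreover $H = C^\dagger C$ where $C = [c_{\theta,k}]$, so $H \geq 0$. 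That establishes the ``only if'' direction, with (\ref{G=GH-s}) and (\ref{H}).

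For the converse, given $H \geq 0$ with $H_{\theta,\theta'} = 1$ on the diagonal and $G_{\mathcal U} = G_{\mathcal V}\circ H$, factor $H = C^\dagger C$ with $C = [c_{\theta,k}]_{\theta,k}$ (rows indexed by $\theta$), so $\sum_k |c_{\theta,k}|^2 = 1$ for each $\theta$. I would try to build Kraus operators $A_k$ with $A_k|u_\theta\rangle = c_{\theta,k}|v_\theta\rangle$; the Hadamard relation is exactly what makes the assignment $|u_\theta\rangle \mapsto c_{\theta,k}|v_\theta\rangle$ well-defined and bounded on $\mathrm{span}\{u_\theta\}$, because $\big\langle \sum_\theta a_\theta u_\theta \,\big|\, \sum_{\theta'} a_{\theta'} u_{\theta'}\big\rangle - \sum_k \big\langle \sum_\theta a_\theta c_{\theta,k} v_\theta \,\big|\, \sum_{\theta'} a_{\theta'} c_{\theta',k} v_{\theta'}\big\rangle = \sum_{\theta,\theta'}\overline{a_\theta}a_{\theta'}\big(G_{\mathcal U,\theta,\theta'} - H_{\theta,\theta'}G_{\mathcal V,\theta,\theta'}\big) = 0$. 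This simultaneously shows the $A_k$ are consistently defined and that $\sum_k A_k^\dagger A_k$ agrees with $I$ on $\mathrm{span}\{u_\theta\}$; one then extends each $A_k$ by zero off that span (or patches in extra Kraus terms supported on the orthocomplement) to get $\sum_k A_k^\dagger A_k = I_{\mathcal H}$ exactly, i.e.\ a CPTP map, after possibly enlarging $\mathcal K$ — here one invokes the remark in the excerpt that for state conversion the ambient spaces can be chosen freely.

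The main obstacle I anticipate is the bookkeeping around supports and trace preservation in the converse: making sure that ``extending by zero'' plus adjusting on $(\mathrm{span}\{u_\theta\})^\perp$ genuinely yields $\sum_k A_k^\dagger A_k = I_{\mathcal H}$ without disturbing the values $\Phi(|u_\theta\rangle\langle u_\theta|) = |v_\theta\rangle\langle v_\theta|$, and confirming that the resulting $v_\theta$'s sit inside the given $\mathcal K$ (or that replacing $\mathcal K$ by $\mathrm{span}\{v_\theta\}$ is harmless). The pure-state algebra (the factorization $H = C^\dagger C$ and the Hadamard identity) is essentially forced; the care is entirely in the normalization/extension step and in checking that the number of Kraus operators $k$ — equal to $\mathrm{rank}\,H$ — can be accommodated.
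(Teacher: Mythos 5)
Your argument is correct, but it is not the route the paper takes. You prove the theorem directly in the Kraus picture: from $\Phi(|u_\theta\rangle\langle u_\theta|)=|v_\theta\rangle\langle v_\theta|$ you extract $A_k|u_\theta\rangle=c_{\theta,k}|v_\theta\rangle$, read off $H_{\theta,\theta'}=\sum_k\overline{c_{\theta,k}}c_{\theta',k}$ from trace preservation, and in the converse you rebuild the Kraus operators from a factorization of $H$, with the Hadamard identity guaranteeing well-definedness on $\mathrm{span}\,\mathcal{U}$ and $\sum_k A_k^\dagger A_k=P_{\mathrm{span}\,\mathcal{U}}$; this is essentially the original Chefles--Jozsa--Winter argument and works with no linear-independence assumption. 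The paper instead reproves the theorem only for linearly independent $\mathcal{U}$, $\mathcal{V}$, and does so through its observable-conversion duality: it writes $\rho_\theta=S L_\theta S^\dagger$, $\sigma_\theta=T M_\theta T^\dagger$ with $L_\theta=|e_\theta\rangle\langle e_\theta|$, $M_\theta=|f_\theta\rangle\langle f_\theta|$, observes that a CPTP $\Phi$ corresponds via (\ref{Lambda-0})--(\ref{Phi-def}) to a CP unital map $\Lambda$ sending $|e_\theta\rangle\langle e_\theta|$ to $|f_\theta\rangle\langle f_\theta|$, i.e.\ a dephasing map $\Lambda(|e_\theta\rangle\langle e_{\theta'}|)=H_{\theta,\theta'}|f_\theta\rangle\langle f_{\theta'}|$, and then extracts (\ref{G=GH-s}) from the constraint $\Lambda^{\ast}(T^\dagger T)=S^\dagger S$. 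Your approach buys generality and self-containedness; the paper's buys the explicit link between state conversion and the unital-map (observable) problem that the rest of the paper is about. Two small points on your converse: since each $v_\theta$ already lies in the given $\mathcal{K}$, no enlargement of $\mathcal{K}$ is needed --- to fix up trace preservation you can simply add the rank-one Kraus operators $B_j=|\psi\rangle\langle w_j|$ with $\{w_j\}$ an orthonormal basis of $(\mathrm{span}\,\mathcal{U})^{\perp}$ and $\psi\in\mathcal{K}$ a fixed unit vector, which annihilate every $u_\theta$ and contribute $I-P_{\mathrm{span}\,\mathcal{U}}$ to $\sum A^\dagger A$; and in the forward direction the normalization $\sum_k|c_{\theta,k}|^2=1$ presupposes $v_\theta\neq 0$, the degenerate case $v_\theta=0$ being excluded (or trivial) because trace preservation forces $u_\theta=0$ as well.
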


\begin{theorem}
\label{th:CJW-2}(Corollary 1 of \cite{CJW})There is a CP trace preserving map
$\Phi$ from $\mathcal{L}\left(  \mathcal{H}\right)  $ to $\mathcal{L}\left(
\mathcal{K}\right)  $ satisfying (\ref{convert}) and $\Phi^{\prime}$ from
$\mathcal{L}\left(  \mathcal{K}\right)  $ to $\mathcal{L}\left(
\mathcal{H}\right)  $ satisfying
\begin{equation}
\Phi^{\prime}\left(  \sigma_{\theta}\right)  =\rho_{\theta},\forall\theta
\in\Theta\label{reverse}%
\end{equation}
if and only if \ $\mathcal{U}$ and $\mathcal{V}$ are unitary equivalent.
\end{theorem}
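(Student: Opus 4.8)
The plan is to derive Theorem~\ref{th:CJW-2} from Theorem~\ref{th:CJW} by applying the latter in both directions and then extracting a rigidity statement. First I would invoke Theorem~\ref{th:CJW} for the forward conversion $\rho_\theta \mapsto \sigma_\theta$: this gives a positive semidefinite matrix $H$ with all diagonal entries $1$ (I read the ``$H_{\theta,\theta^{\prime}}=1$'' in (\ref{H}) as a diagonal condition, i.e.\ $H_{\theta,\theta}=1$) such that $G_{\mathcal{U}} = G_{\mathcal{V}} \circ H$. Then I would apply it again for the reverse conversion $\sigma_\theta \mapsto \rho_\theta$, obtaining another positive semidefinite matrix $H'$ with unit diagonal and $G_{\mathcal{V}} = G_{\mathcal{U}} \circ H'$. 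Combining, $G_{\mathcal{U}} = G_{\mathcal{U}} \circ (H' \circ H)$, so on the support of $G_{\mathcal{U}}$ the Hadamard multiplier $H'\circ H$ acts as the identity, forcing $(H'\circ H)_{\theta,\theta'} = 1$ whenever $G_{\mathcal{U},\theta,\theta'}\neq 0$.

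The key step is then to show that this forces $G_{\mathcal{U}} = G_{\mathcal{V}}$, which is equivalent to unitary equivalence of the families (two families of vectors with the same Gram matrix are related by a partial isometry, and after restricting to the relevant supports, by a unitary). For this I would use the Schur product theorem together with a rank-one / extreme-point argument: $H$ and $H'$ are Gram matrices of unit vectors, say $H_{\theta,\theta'} = \langle a_\theta | a_{\theta'}\rangle$ and $H'_{\theta,\theta'} = \langle b_\theta | b_{\theta'}\rangle$ with $\|a_\theta\| = \|b_\theta\| = 1$, hence $H\circ H'$ is the Gram matrix of the unit vectors $a_\theta \otimes b_\theta$. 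The condition $(H\circ H')_{\theta,\theta'}=1$ for a pair of unit vectors is the equality case of Cauchy--Schwarz, so it forces $a_\theta\otimes b_\theta = a_{\theta'}\otimes b_{\theta'}$, hence $|H_{\theta,\theta'}| = |\langle a_\theta|a_{\theta'}\rangle| = 1$ and similarly $|H'_{\theta,\theta'}|=1$ for such pairs; combined with $G_{\mathcal{U}} = G_{\mathcal{V}}\circ H$ and the fact that $|G_{\mathcal{V},\theta,\theta'}| \le \|v_\theta\|\|v_{\theta'}\|$ while the diagonal matches, a short estimate (or a reduction to connected components of the ``support graph'' of $G_{\mathcal{U}}$) yields $H_{\theta,\theta'}=1$ on that support and therefore $G_{\mathcal{U}}=G_{\mathcal{V}}$.

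The converse direction is the easy half: if $\mathcal{U}$ and $\mathcal{V}$ are unitarily equivalent via $W$ with $W u_\theta = v_\theta$, extend $W$ to an isometry and take $\Phi(X) = WXW^\dagger$ (composed with tracing out any complementary space, or padding so that it is trace preserving on all of $\mathcal{L}(\mathcal{H})$) and $\Phi'$ its adjointed inverse; both are manifestly CP and trace preserving, and they implement (\ref{convert-state}) and (\ref{reverse}). I would state this in one or two lines.

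The main obstacle I anticipate is the bookkeeping around supports and the precise reading of condition (\ref{H}): one must be careful that ``unitary equivalence'' is asked only on $\sum_\theta \mathrm{supp}\,\rho_\theta$ versus $\sum_\theta \mathrm{supp}\,\sigma_\theta$ (these can have different ambient dimensions, handled by a partial isometry), and that the Hadamard-idempotence argument only constrains entries of $H$ indexed by pairs $(\theta,\theta')$ lying in the same connected component of the graph on $\Theta$ with edges $\{\theta,\theta'\}$ for which $G_{\mathcal{U},\theta,\theta'}\neq 0$. Within each such component the equality-in-Cauchy--Schwarz rigidity propagates, and across components the Gram matrices are block-diagonal so equality on each block gives $G_{\mathcal{U}} = G_{\mathcal{V}}$ overall; making this propagation argument clean is where the real work lies.
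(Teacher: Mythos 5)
Your overall strategy (apply Theorem~\ref{th:CJW} in both directions and extract rigidity from $G_{\mathcal{U}}=G_{\mathcal{U}}\circ(H'\circ H)$) is legitimate and genuinely different from the paper's argument, which instead assumes linear independence, invokes Petz's theorem to take $\Phi'$ of the form $U_{2}\Lambda^{\ast}(U_{1}^{\dagger}XU_{1})U_{2}^{\dagger}$, and argues that the adjoint of a dephasing map sends pure states to pure states only if it acts as the identity on them. However, your key step contains a genuine error: from $H_{\theta,\theta'}H'_{\theta,\theta'}=1$ on the support of $G_{\mathcal{U}}$ you can only conclude $\vert H_{\theta,\theta'}\vert=1$ there, and your claimed ``short estimate'' upgrading this to $H_{\theta,\theta'}=1$, hence $G_{\mathcal{U}}=G_{\mathcal{V}}$, is false. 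Counterexample: $u_{1}=v_{1}=e_{1}$, $u_{2}=(e_{1}+e_{2})/\sqrt{2}$, $v_{2}=-u_{2}$. Here $\rho_{\theta}=\sigma_{\theta}$, so $\Phi=\Phi'=\mathrm{id}$ witnesses the hypothesis, yet $G_{\mathcal{U}}\neq G_{\mathcal{V}}$: the matrices $H=H'=\bigl[\begin{smallmatrix}1&-1\\-1&1\end{smallmatrix}\bigr]$ are admissible, $\vert H_{12}\vert=1$ but $H_{12}=-1$. So the literal target $G_{\mathcal{U}}=G_{\mathcal{V}}$ is unattainable; the hypothesis only sees the states $\rho_{\theta},\sigma_{\theta}$ and is insensitive to the phases of $v_{\theta}$, so the conclusion must be unitary equivalence in the sense $W\rho_{\theta}W^{\dagger}=\sigma_{\theta}$, i.e.\ $G_{\mathcal{U}}=DG_{\mathcal{V}}D^{\dagger}$ for some diagonal unitary $D$, not entrywise equality of the Gram matrices of the given representatives.

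The fix stays within your framework. Writing $H_{\theta,\theta'}=\langle a_{\theta}\vert a_{\theta'}\rangle$ with unit vectors $a_{\theta}$, the conditions $H\geq 0$, $H_{\theta,\theta}=1$, $\vert H_{\theta,\theta'}\vert=1$ on an edge of the support graph force $a_{\theta'}=e^{i\psi}a_{\theta}$ (equality in Cauchy--Schwarz applied to $H$ itself, not to $H\circ H'$). Propagating along paths, on each connected component all $a_{\theta}$ coincide up to phase, so $H_{\theta,\theta'}=e^{i(\varphi_{\theta}-\varphi_{\theta'})}$ throughout the component; off the support graph both $G_{\mathcal{U}}$ and $G_{\mathcal{V}}$ vanish (the latter because $G_{\mathcal{V}}=G_{\mathcal{U}}\circ H'$). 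Hence $G_{\mathcal{U}}$ equals the Gram matrix of the re-phased family $\{e^{-i\varphi_{\theta}}v_{\theta}\}$, which yields a partial isometry (extendible to a unitary after matching supports) $W$ with $Wu_{\theta}=e^{-i\varphi_{\theta}}v_{\theta}$, and therefore $W\rho_{\theta}W^{\dagger}=\sigma_{\theta}$. With this correction your connected-component bookkeeping is exactly the right organizing device, and your easy converse direction is fine; but as written, the claim ``$H_{\theta,\theta'}=1$ on that support and therefore $G_{\mathcal{U}}=G_{\mathcal{V}}$'' does not hold and must be replaced by the rank-one phase-pattern argument above.
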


\begin{remark}
In Theorem\thinspace2 of \cite{CJW}, they do not have condition that
$H_{\theta,\theta}=1$. In their case, they consider mapping from a family of
normalized vectors to another family of normalized vectors, so that
$H_{\theta,\theta}=1$ holds automatically. In our case, we have to impose this
additional constrain because the system of vectors may not be normalized.
\end{remark}

To show link between `state conversion' and `state conversion', we give
another proof of Theorems\thinspace\ref{th:CJW}-\ref{th:CJW-2}, in case that
$\mathcal{U}$ and $\mathcal{V}$ are linearly independent.

Suppose $\mathrm{supp}\,\sum_{\theta\in\Theta}\rho_{\theta}=\mathcal{H}$ \ and
$\mathrm{supp}\,\sum_{\theta\in\Theta}\sigma_{\theta}=\mathcal{K}$ without
loss of generality, so that $\dim\mathcal{H}=\dim\mathcal{K}=\left\vert
\Theta\right\vert $. Define $S\in\mathcal{L}\left(  \mathcal{H}\right)  $ and
$T\in\mathcal{L}\left(  \mathcal{K}\right)  $ by
\[
S=\sum_{\theta\in\Theta}\left\vert u_{\theta}\right\rangle \left\langle
e_{\theta}\right\vert ,T=\sum_{\theta\in\Theta}\left\vert v_{\theta
}\right\rangle \left\langle f_{\theta}\right\vert ,
\]
where $\left\{  e_{\theta}\right\}  _{\theta\in\Theta}$ and $\left\{
f_{\theta}\right\}  _{\theta\in\Theta}$ is a complete orthonormal basis of
$\mathcal{H}$ and $\mathcal{K}$, respectively. Then $S$ and $T$ satisfy
(\ref{rho=SS}). It is easy to check $\left\{  L_{\theta}\right\}  _{\theta
\in\Theta}$ and $\left\{  M_{\theta}\right\}  _{\theta\in\Theta}$ defined by
(\ref{rho=SLS}) are orthonormal projections,
\[
L_{\theta}=\left\vert e_{\theta}\right\rangle \left\langle e_{\theta
}\right\vert ,\,M_{\theta}=\left\vert f_{\theta}\right\rangle \left\langle
f_{\theta}\right\vert .
\]

Suppose there is a CPTP map with (\ref{convert-state}). Then, a CP map
$\Lambda$ defined by (\ref{Lambda-0}) is unital and satisfies (\ref{convert}),
so it is a `dephasing map',
\begin{equation}
\Lambda\left(  \left\vert e_{\theta}\right\rangle \left\langle e_{\theta
^{\prime}}\right\vert \right)  =H_{\theta,\theta^{\prime}}\left\vert
f_{\theta}\right\rangle \left\langle f_{\theta^{\prime}}\right\vert ,
\label{L=H}%
\end{equation}
where $H=\left[  H_{\theta,\theta^{\prime}}\right]  $ satisfies (\ref{H}). It
is easy to see
\[
\Lambda^{\ast}\left(  \left\vert f_{\theta}\right\rangle \left\langle
f_{\theta^{\prime}}\right\vert \right)  =H_{\theta,\theta^{\prime}}\left\vert
e_{\theta}\right\rangle \left\langle e_{\theta^{\prime}}\right\vert .
\]
Also $\Lambda$ satisfies (\ref{L(TT)=SS}),
\begin{align*}
\Lambda^{\ast}\left(  \sum_{\theta,\theta^{\prime}\in\Theta}G_{\mathcal{V}%
,\theta,\theta^{\prime}}\left\vert f_{\theta}\right\rangle \left\langle
f_{\theta^{\prime}}\right\vert \right)   &  =\sum_{\theta,\theta^{\prime}%
\in\Theta}H_{\theta,\theta^{\prime}}G_{\mathcal{V},\theta,\theta^{\prime}%
}\left\vert e_{\theta}\right\rangle \left\langle e_{\theta^{\prime}%
}\right\vert \\
&  =\sum_{\theta,\theta^{\prime}\in\Theta}G_{\mathcal{U},\theta,\theta
^{\prime}}\left\vert e_{\theta}\right\rangle \left\langle e_{\theta^{\prime}%
}\right\vert .
\end{align*}
Therefore, (\ref{G=GH-s}) is necessary.

Conversely, if (\ref{G=GH-s}) holds, the map defined by (\ref{L=H}) is a CP
unital map, and satisfies (\ref{convert}) and (\ref{L(TT)=SS}). Also, the CP
map defined by (\ref{Phi-def}) is trace-preserving and satisfies
(\ref{convert-state}). Thus (\ref{G=GH-s}) is sufficient. Thus we obtain
Theorems\thinspace\ref{th:CJW} in case that $\mathcal{U}$ and $\mathcal{V}$
are linearly independent.

Next, \ suppose there is a CPTP map $\Phi^{\prime}$ with (\ref{reverse})
exists. We also suppose $\left\vert e_{\theta}\right\rangle =\left\vert
f_{\theta}\right\rangle $ without loss of generality. Then, by \cite{Petz},
the map
\[
\Phi^{\prime}\left(  X\right)  :=U_{2}\Lambda^{\ast}\left(  U_{1}^{\dagger
}XU_{1}\right)  U_{2}^{\dagger}%
\]
should be an example of such a map, where $U_{1}$ and $U_{2}$ are unitary
operators defined by
\[
\sqrt{TT^{\dagger}}U_{1}=T,\,\sqrt{SS^{\dagger}}U_{2}=S.
\]
Then $\Phi^{\prime}$ maps a pure state to another pure state only if
$\Lambda^{\ast}$ does so. In turn, $\Lambda^{\ast}$ maps a pure state to
another pure state only if it does not change the pure state. Therefore, we
have Theorems\thinspace\ref{th:CJW-2} in case that $\mathcal{U}$ and
$\mathcal{V}$ are linearly independent.

\section{Conversion between rank-1 operators}

In this section,
\[
L_{\theta}=\left\vert u_{\theta}\right\rangle \left\langle u_{\theta
}\right\vert ,\,M_{\theta}=\left\vert v_{\theta}\right\rangle \left\langle
v_{\theta}\right\vert \,.
\]
Let $\mathcal{U}^{\uparrow}:=\left\{  u_{\theta}^{\uparrow}\right\}
_{\theta\in\Theta}$ and $\mathcal{V}^{\uparrow}:=\left\{  v_{\theta}%
^{\uparrow}\right\}  _{\theta\in\Theta}$ be the dual system (if exists) of
$\mathcal{U}$ and $\mathcal{V}$, respectively,%
\[
\left\langle u_{\theta}^{\uparrow}\right\vert \left.  u_{\theta^{\prime}%
}\right\rangle =\delta_{\theta,\theta^{\prime}},\left\langle v_{\theta
}^{\uparrow}\right\vert \left.  v_{\theta^{\prime}}\right\rangle
=\delta_{\theta,\theta^{\prime}}.
\]

\begin{lemma}
\label{lem:independent}Suppose that $\mathcal{V}$ is linearly independent.
Then a positive operator $C$ supported on $\mathrm{span}\mathcal{V}$ is
identical to $\left\vert v_{\theta}\right\rangle \left\langle v_{\theta
}\right\vert $ \ if and only if
\begin{equation}
\left\langle v_{\theta^{\prime}}^{\uparrow}\right\vert C\left\vert
v_{\theta^{\prime}}^{\uparrow}\right\rangle =\delta_{\theta,\theta^{\prime}}.
\label{dual-inner}%
\end{equation}
Also, there is a CP unital map $\Lambda$ satisfying (\ref{convert}). Then
$\mathcal{U}$ is also linearly independent.
\end{lemma}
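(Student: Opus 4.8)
The plan is to dispose of the two assertions of the lemma separately. For the characterization of $\left\vert v_{\theta}\right\rangle \left\langle v_{\theta}\right\vert $, the ``only if'' direction is immediate from the defining biorthogonality of the dual system: if $C=\left\vert v_{\theta}\right\rangle \left\langle v_{\theta}\right\vert $ then $\left\langle v_{\theta^{\prime}}^{\uparrow}\right\vert C\left\vert v_{\theta^{\prime}}^{\uparrow}\right\rangle =\left\vert \left\langle v_{\theta^{\prime}}^{\uparrow}\right\vert \left. v_{\theta}\right\rangle \right\vert ^{2}=\delta_{\theta,\theta^{\prime}}$. For the ``if'' direction I would work inside $W:=\mathrm{span}\,\mathcal{V}$, on which $\mathcal{V}$ is a basis by hypothesis, with biorthogonal dual basis $\mathcal{V}^{\uparrow}$ (which may be taken inside $W$). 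Since $C$ is supported on $W$ one may expand $C=\sum_{\alpha,\beta\in\Theta}c_{\alpha,\beta}\left\vert v_{\alpha}\right\rangle \left\langle v_{\beta}\right\vert $, and biorthogonality gives $c_{\alpha,\beta}=\left\langle v_{\alpha}^{\uparrow}\right\vert C\left\vert v_{\beta}^{\uparrow}\right\rangle $.

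The key point is then to read off two facts about the matrix $\left[ c_{\alpha,\beta}\right] $. First, it is positive semidefinite: for any vector $\xi=(\xi_{\alpha})$ one has $\sum_{\alpha,\beta}\overline{\xi_{\alpha}}\xi_{\beta}c_{\alpha,\beta}=\left\langle \psi\right\vert C\left\vert \psi\right\rangle \geq0$ with $\left\vert \psi\right\rangle :=\sum_{\beta}\xi_{\beta}\left\vert v_{\beta}^{\uparrow}\right\rangle $, using $C\geq0$. Second, hypothesis (\ref{dual-inner}) says precisely that its diagonal is $c_{\alpha,\alpha}=\delta_{\theta,\alpha}$. Now I would invoke the elementary fact that a positive semidefinite matrix with a zero diagonal entry has the entire corresponding row and column equal to zero; applying this at every $\alpha\neq\theta$ forces $c_{\alpha,\beta}=0$ unless $\alpha=\beta=\theta$, hence $C=\left\vert v_{\theta}\right\rangle \left\langle v_{\theta}\right\vert $. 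I expect this converse to be the main obstacle, modest as it is; the one place to be careful is that the natural matrix of $C$ to work with is the one with entries $\left\langle v_{\alpha}^{\uparrow}\right\vert C\left\vert v_{\beta}^{\uparrow}\right\rangle $, not the Gram-type matrix $\left\langle v_{\alpha}\right\vert C\left\vert v_{\beta}\right\rangle $.

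For the last assertion, I would start from a Kraus decomposition $\Lambda(X)=\sum_{k}A_{k}XA_{k}^{\dagger}$ of the CP map $\Lambda$ satisfying (\ref{convert}) (unitality is not needed here). Then $\Lambda\left( \left\vert u_{\theta}\right\rangle \left\langle u_{\theta}\right\vert \right) =\sum_{k}\left\vert A_{k}u_{\theta}\right\rangle \left\langle A_{k}u_{\theta}\right\vert =\left\vert v_{\theta}\right\rangle \left\langle v_{\theta}\right\vert $; evaluating both sides in any $\phi$ orthogonal to $v_{\theta}$ gives $\left\langle \phi\right\vert \left. A_{k}u_{\theta}\right\rangle =0$ for all $k$, so $A_{k}u_{\theta}=\lambda_{k,\theta}v_{\theta}$ for scalars $\lambda_{k,\theta}$, and evaluating in $v_{\theta}$ itself gives $\sum_{k}\left\vert \lambda_{k,\theta}\right\vert ^{2}=1$ (note $v_{\theta}\neq0$ since $\mathcal{V}$ is linearly independent). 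Finally, suppose $\sum_{\theta}a_{\theta}u_{\theta}=0$ with the $a_{\theta}$ not all zero; applying $A_{k}$ gives $\sum_{\theta}a_{\theta}\lambda_{k,\theta}v_{\theta}=0$ for every $k$, so linear independence of $\mathcal{V}$ yields $a_{\theta}\lambda_{k,\theta}=0$ for all $\theta,k$, and for any $\theta$ with $a_{\theta}\neq0$ this contradicts $\sum_{k}\left\vert \lambda_{k,\theta}\right\vert ^{2}=1$. Hence $\mathcal{U}$ is linearly independent.
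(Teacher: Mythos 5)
Your proof is correct. The first assertion is handled essentially as in the paper: the paper argues that $C\geq0$ together with $\left\langle v_{\theta^{\prime}}^{\uparrow}\right\vert C\left\vert v_{\theta^{\prime}}^{\uparrow}\right\rangle =0$ forces the null space of $C$ to contain $\mathrm{span}\left\{ v_{\theta^{\prime}}^{\uparrow}:\theta^{\prime}\neq\theta\right\}$, so that $C$ is a multiple of $\left\vert v_{\theta}\right\rangle \left\langle v_{\theta}\right\vert$ fixed by the remaining diagonal condition; your coefficient-matrix version (a positive semidefinite matrix with a vanishing diagonal entry has the whole row and column zero) is the same fact in coordinates, and your care in using $\left\langle v_{\alpha}^{\uparrow}\right\vert C\left\vert v_{\beta}^{\uparrow}\right\rangle$ rather than a Gram-type matrix is exactly right. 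For the second assertion your route genuinely differs: the paper applies $\Lambda^{\ast}$ to the dual projectors $\left\vert v_{\theta_{0}}^{\uparrow}\right\rangle \left\langle v_{\theta_{0}}^{\uparrow}\right\vert$ and derives a contradiction with (\ref{convert}) if some $u_{\theta_{0}}$ lay in the span of the remaining $u_{\theta}$'s, staying entirely within the dual-system language of the first assertion; you instead take a Kraus decomposition and show $A_{k}u_{\theta}=\lambda_{k,\theta}v_{\theta}$ with $\sum_{k}\left\vert \lambda_{k,\theta}\right\vert ^{2}=1$, which is precisely the condition (\ref{v=wu}) of the paper's subsequent (unnumbered) lemma, and then read off independence from a putative linear relation. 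Your version is self-contained (it does not use the first assertion), makes explicit the structure that the paper reuses in Theorem \ref{th:1-dim-ineq}, and, like the paper's argument, uses only complete positivity — neither proof needs unitality, which matters since the lemma is later invoked in the subunital setting; the paper's version avoids Kraus operators and is marginally shorter, though its phrase ``if $u_{\theta_{0}}$ is in the span of $\mathcal{U}$'' should read ``in the span of the other $u_{\theta}$'s,'' a slip your argument does not need to touch.
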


\begin{proof}
If $C$ is identical to $\left\vert v_{\theta}\right\rangle \left\langle
v_{\theta}\right\vert $, (\ref{dual-inner}) is trivially true. If
(\ref{dual-inner}) holds, the positive operator $C$ has to have null space
spanned by $\left\{  v_{\theta^{\prime}}^{\uparrow};\theta^{\prime}\in
\Theta,\theta^{\prime}\neq\theta\right\}  $. Therefore, $C$ is constant
multiple of $\left\vert v_{\theta}\right\rangle \left\langle v_{\theta
}\right\vert \,$. The constant factor is fixed by the condition $\left\langle
v_{\theta^{\prime}}^{\uparrow}\right\vert C\left\vert v_{\theta}^{\uparrow
}\right\rangle =1$. Thus, we have the first assertion.

(\ref{convert}) holds only if
\begin{align*}
\mathrm{tr}\,\Lambda^{\ast}\left(  \left\vert v_{\theta_{0}}^{\uparrow
}\right\rangle \left\langle v_{\theta_{0}}^{\uparrow}\right\vert \right)
\left\vert u_{\theta}\right\rangle \left\langle u_{\theta}\right\vert  &
=\mathrm{tr}\,\left\vert v_{\theta_{0}}^{\uparrow}\right\rangle \left\langle
v_{\theta_{0}}^{\uparrow}\right\vert \Lambda\left(  \left\vert u_{\theta
}\right\rangle \left\langle u_{\theta}\right\vert \right)  =0,\,\theta
\neq\theta_{0},\\
\mathrm{tr}\,\Lambda^{\ast}\left(  \left\vert v_{\theta_{0}}^{\uparrow
}\right\rangle \left\langle v_{\theta_{0}}^{\uparrow}\right\vert \right)
\left\vert u_{\theta_{0}}\right\rangle \left\langle u_{\theta_{0}}\right\vert
&  =\mathrm{tr}\,\left\vert v_{\theta_{0}}^{\uparrow}\right\rangle
\left\langle v_{\theta_{0}}^{\uparrow}\right\vert \Lambda\left(  \left\vert
u_{\theta_{0}}\right\rangle \left\langle u_{\theta_{0}}\right\vert \right)
=1,
\end{align*}
These lead to contradiction if $u_{\theta_{0}}$ is in the span of
$\mathcal{U}$. Therefore, $\mathcal{U}$ should be linearly independent.
\end{proof}

The proof of the following lemma is almost immediate.

\begin{lemma}
There is a CP map $\Lambda$ satisfying (\ref{convert}) holds if and only if
\begin{equation}
\exists\alpha_{\theta,i},\,\,\,\alpha_{\theta,i}\left\vert v_{\theta
}\right\rangle =W_{i}\left\vert u_{\theta}\right\rangle ,\,\sum_{i}\left\vert
\alpha_{\theta,i}\right\vert ^{2}=1, \label{v=wu}%
\end{equation}
where $W_{i}$'s are Kraus operators of $\Lambda$.
\end{lemma}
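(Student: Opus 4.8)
The plan is to unwind the definition of a completely positive map via its Kraus decomposition and read off what \eqref{convert} says on rank-1 operators. First I would recall that any CP map $\Lambda:\mathcal{L}(\mathcal{H})\to\mathcal{L}(\mathcal{K})$ can be written as $\Lambda(X)=\sum_i W_i X W_i^\dagger$ for some finite family of operators $W_i:\mathcal{H}\to\mathcal{K}$, and conversely every such expression defines a CP map. Substituting $X=L_\theta=\lvert u_\theta\rangle\langle u_\theta\rvert$ gives
\[
\Lambda(L_\theta)=\sum_i W_i\lvert u_\theta\rangle\langle u_\theta\rvert W_i^\dagger=\sum_i \lvert W_i u_\theta\rangle\langle W_i u_\theta\rvert,
\]
so \eqref{convert} becomes the operator identity $\sum_i \lvert W_i u_\theta\rangle\langle W_i u_\theta\rvert=\lvert v_\theta\rangle\langle v_\theta\rvert$ for every $\theta$.

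Next I would argue that this identity is equivalent to \eqref{v=wu}. For the forward direction: the left-hand side is a positive operator of rank at most one equal to $\lvert v_\theta\rangle\langle v_\theta\rvert$, and a positive operator equals a rank-1 projection-like operator $\lvert v_\theta\rangle\langle v_\theta\rvert$ only if every vector in its range is proportional to $v_\theta$; in particular each $W_i u_\theta$ lies in $\mathbb{C}\lvert v_\theta\rangle$, i.e. $W_i\lvert u_\theta\rangle=\alpha_{\theta,i}\lvert v_\theta\rangle$ for some scalars $\alpha_{\theta,i}$. Plugging back in yields $\sum_i\lvert\alpha_{\theta,i}\rvert^2\,\lvert v_\theta\rangle\langle v_\theta\rvert=\lvert v_\theta\rangle\langle v_\theta\rvert$, hence $\sum_i\lvert\alpha_{\theta,i}\rvert^2=1$ (when $v_\theta\neq 0$; the degenerate case $v_\theta=0$ forces all $W_i u_\theta=0$ and is handled trivially, with the normalization condition read as vacuous or the convention $\alpha_{\theta,i}=0$). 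Conversely, given operators $W_i$ satisfying \eqref{v=wu}, the CP map $\Lambda(X):=\sum_i W_i X W_i^\dagger$ satisfies $\Lambda(L_\theta)=\sum_i\lvert\alpha_{\theta,i}\rvert^2\lvert v_\theta\rangle\langle v_\theta\rvert=\lvert v_\theta\rangle\langle v_\theta\rvert=M_\theta$, which is exactly \eqref{convert}.

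The only genuinely non-routine point is the rank argument: one must observe that $W_i u_\theta\in\operatorname{range}(\lvert v_\theta\rangle\langle v_\theta\rvert)=\mathbb{C}v_\theta$ for each individual $i$, which follows because a sum of positive operators $\sum_i\lvert W_i u_\theta\rangle\langle W_i u_\theta\rvert$ can only be supported on $\mathbb{C}v_\theta$ if each summand is, since positive operators cannot cancel. I expect this to be the main (and essentially only) obstacle, and it is indeed mild — which is why the paper remarks that the proof is almost immediate. I would close by noting that no constraint on $\{W_i\}$ beyond those implied by \eqref{v=wu} is needed here because the lemma only asserts existence of a CP map, not a unital or trace-preserving one; the normalization $\sum_i\lvert\alpha_{\theta,i}\rvert^2=1$ is a consequence of matching the operators, not an extra assumption.
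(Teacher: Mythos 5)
Your proof is correct and is exactly the argument the paper has in mind when it omits the proof as ``almost immediate'': expand $\Lambda$ in Kraus form, use positivity to force each $W_i\left\vert u_{\theta}\right\rangle$ into $\mathbb{C}\left\vert v_{\theta}\right\rangle$, and read off the normalization $\sum_i\left\vert \alpha_{\theta,i}\right\vert^2=1$, with the converse by direct substitution. No gaps; your handling of the degenerate case $v_\theta=0$ and the rank/support argument is exactly the mild point the paper glosses over.
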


The following theorem is almost a dual of Theorem\thinspace\ref{th:CJW}.

\begin{theorem}
\label{th:1-dim-ineq}Suppose that $\mathcal{V}$ is linearly independent. Then
there is a CP subunital map $\Lambda^{\ast}$ satisfying (\ref{convert}) if and
only if there is a matrix $H=\left[  H_{\theta,\theta^{\prime}}\right]  $ such that
\end{theorem}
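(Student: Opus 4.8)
The plan is to mirror the argument used above for Theorem~\ref{th:CJW}, but now with the roles of ``unital'' and ``trace-preserving'' interchanged, exploiting that $M_\theta=\left\vert v_\theta\right\rangle\left\langle v_\theta\right\vert$ is rank one and $\mathcal V$ is linearly independent. Since $\mathcal V$ is linearly independent, the dual system $\mathcal V^\uparrow$ exists, and by Lemma~\ref{lem:independent} the conversion $\Lambda(L_\theta)=M_\theta$ forces $\mathcal U$ to be linearly independent as well. So both Gram matrices $G_{\mathcal U}$ and $G_{\mathcal V}$ are invertible, and after passing to $\mathrm{span}\,\mathcal U$ and $\mathrm{span}\,\mathcal V$ (allowed because we only demand $\Lambda^\ast$ be \emph{subunital}, which lets us choose the underlying Hilbert spaces freely, as emphasized in Section~1.3) we may assume $\dim\mathcal H=\dim\mathcal K=\left\vert\Theta\right\vert$.

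First I would rewrite the problem in a normalized basis. Define $S=\sum_\theta\left\vert u_\theta\right\rangle\left\langle e_\theta\right\vert$ and $T=\sum_\theta\left\vert v_\theta\right\rangle\left\langle f_\theta\right\vert$ for orthonormal bases $\{e_\theta\}$, $\{f_\theta\}$; then $S,T$ are invertible, $S^\dagger S=G_{\mathcal U}$, $T^\dagger T=G_{\mathcal V}$, and $L_\theta=S\left\vert e_\theta\right\rangle\left\langle e_\theta\right\vert S^\dagger$, $M_\theta=T\left\vert f_\theta\right\rangle\left\langle f_\theta\right\vert T^\dagger$. Given a CP subunital $\Lambda^\ast$ with $\Lambda^\ast(L_\theta)=M_\theta$, consider its adjoint $\Lambda$ (CP, trace non-increasing) and transport it to the normalized picture via $\tilde\Lambda(X):=T^{-1}\,\Lambda\!\left(S X S^\dagger\right)T^{\dagger-1}$ (compare \eqref{Lambda-0}); this sends $\left\vert e_\theta\right\rangle\left\langle e_\theta\right\vert$ to $\left\vert f_\theta\right\rangle\left\langle f_\theta\right\vert$. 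Now $\tilde\Lambda$ preserves the rank-one projectors along the basis, so by the dual-inner-product criterion \eqref{dual-inner} (applied to $C=\tilde\Lambda(\left\vert e_\theta\right\rangle\left\langle e_\theta\right\vert)$ against the basis $\{f_\theta\}$, which is self-dual) it must be a ``dephasing-type'' map: $\tilde\Lambda(\left\vert e_\theta\right\rangle\left\langle e_{\theta'}\right\vert)=H_{\theta,\theta'}\left\vert f_\theta\right\rangle\left\langle f_{\theta'}\right\vert$ for some matrix $H$, with $H\ge0$ (complete positivity of a dephasing map is equivalent to positive-semidefiniteness of its multiplier matrix) and $H_{\theta,\theta}=1$ (from the fixed diagonal). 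Translating the trace-non-increasing condition $\tilde\Lambda^\ast(I)\le I$ back through $S,T$ gives the extra operator inequality that, after computing $\tilde\Lambda^\ast(\left\vert f_\theta\right\rangle\left\langle f_{\theta'}\right\vert)=\overline{H_{\theta,\theta'}}\left\vert e_\theta\right\rangle\left\langle e_{\theta'}\right\vert$, reads $\left(\overline H\right)\circ G_{\mathcal V}\le G_{\mathcal U}$ as matrices in the $\{e_\theta\}$ basis — i.e. $G_{\mathcal U}-H^\ast\circ G_{\mathcal V}\ge0$ (using $G_{\mathcal V}=G_{\mathcal V}^\dagger$ and that the Hadamard product with a Hermitian matrix keeps things Hermitian). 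So the condition in the theorem is: there exists $H=[H_{\theta,\theta'}]$ with $H\ge0$, $H_{\theta,\theta}=1$, and $G_{\mathcal U}\ge H\circ G_{\mathcal V}$ (up to the harmless conjugation, which one can absorb by replacing $H$ with $\bar H$, still PSD).

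For the converse, given such an $H$, I would define $\tilde\Lambda$ by the dephasing formula above; it is CP and unital onto $\mathcal K$ by the Schur product theorem, its adjoint multiplies by $\bar H$ entrywise, and the inequality $G_{\mathcal U}\ge H\circ G_{\mathcal V}$ is exactly what makes $\Lambda^\ast(X):=S\,\tilde\Lambda^\ast\!\left(T^{-1}XT^{\dagger-1}\right)S^\dagger$ — equivalently, the correctly transported map — send $M_\theta\mapsto L_\theta$ while being subunital; one then reads off that the \emph{original} direction $\Lambda^\ast(L_\theta)=M_\theta$ holds by taking the adjoint of the right transported map. The routine bookkeeping is checking that all the $S$-, $T$-conjugations land the unitality/subunitality inequality in the clean Hadamard form; the one genuine subtlety — and the step I expect to be the main obstacle — is justifying that a CP map fixing all the diagonal rank-one projectors $\left\vert f_\theta\right\rangle\left\langle f_{\theta}\right\vert$ \emph{must} be of the pure dephasing form \eqref{L=H}, with no leakage into off-basis matrix units; this is where linear independence of $\mathcal V$ (hence of $\mathcal U$) is essential, and it is handled exactly by the null-space argument of Lemma~\ref{lem:independent} applied coordinate-wise, together with the observation that complete positivity then pins the off-diagonal action to a single scalar multiplier per entry whose matrix is forced PSD by the Choi criterion.
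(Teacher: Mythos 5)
Your overall strategy is sound and, once unwound, is essentially the paper's: pass to the normalized picture (equivalently, write the Kraus operators as $W_i=[\mathcal V]\,\mathrm{diag}(\alpha_{1,i},\dots)\,[\mathcal U]^{-1}$), note that complete positivity plus the rank-one images forces the dephasing form with a multiplier matrix $H\ge0$, $H_{\theta,\theta}=1$, and then read off subunitality as a Hadamard-product inequality. Your justification of the dephasing form (diagonal Choi blocks are rank one, so positivity pins each off-diagonal block to a scalar multiple of $\left\vert f_\theta\right\rangle\left\langle f_{\theta'}\right\vert$) is correct and does not even need unitality. The problem is the bookkeeping of which normalization condition you transport, and it changes the answer.

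Concretely: the hypothesis is that the conversion map $\Lambda$ is subunital, $\Lambda(I_{\mathcal H})\le I_{\mathcal K}$. Under $\tilde\Lambda(X)=T^{-1}\Lambda(SXS^\dagger)T^{\dagger-1}$ this becomes, since $S^{-1}S^{\dagger-1}=G_{\mathcal U}^{-1}$ and $T^{-1}T^{\dagger-1}=G_{\mathcal V}^{-1}$ in the chosen bases, the inequality $\tilde\Lambda\bigl(G_{\mathcal U}^{-1}\bigr)\le G_{\mathcal V}^{-1}$, i.e.\ exactly $H\circ G_{\mathcal U}^{-1}\le G_{\mathcal V}^{-1}$, which is (\ref{G>HG}); this matches the paper's computation $\sum_i W_iW_i^\dagger=[\mathcal V]\bigl(H\circ G_{\mathcal U}^{-1}\bigr)[\mathcal V]^\dagger\le I_{\mathcal K}$. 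Your condition ``$\tilde\Lambda^\ast(I)\le I$'' is instead the trace-non-increasing property of $\tilde\Lambda$, which for a dephasing map holds automatically with equality, so it cannot carry the subunitality of $\Lambda$; and the inequality you end up with, $G_{\mathcal U}\ge H\circ G_{\mathcal V}$ (Gram matrices rather than their inverses), is genuinely inequivalent to (\ref{G>HG}). For example, take $\left\vert\Theta\right\vert=2$ with $\mathcal U$ orthonormal and $\mathcal V$ normalized but with $\left\langle v_1\right\vert\left. v_2\right\rangle=c\neq0$: your condition holds with $H=I$, but $G_{\mathcal V}^{-1}\ge I=H\circ G_{\mathcal U}^{-1}$ fails (one eigenvalue of $G_{\mathcal V}^{-1}$ is $1/(1+\left\vert c\right\vert)<1$), and indeed no CP subunital conversion exists, by Theorem\thinspace\ref{th:equi-length}. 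A second gap is in your converse: the map $X\mapsto S\,\tilde\Lambda^{\ast}\bigl(T^{-1}XT^{\dagger-1}\bigr)S^\dagger$ sends $M_\theta$ to $L_\theta$, the reverse direction, and taking an adjoint of it does not produce a map with $\Lambda(L_\theta)=M_\theta$. The correct converse is simpler: given $H$ with (\ref{H-2}) and (\ref{G>HG}), define the dephasing map $\tilde\Lambda$ by $H$ and set $\Lambda(Y):=T\,\tilde\Lambda\bigl(S^{-1}YS^{\dagger-1}\bigr)T^\dagger$; then $\Lambda(L_\theta)=M_\theta$ by construction, and $\Lambda(I_{\mathcal H})=T\bigl(H\circ G_{\mathcal U}^{-1}\bigr)T^\dagger\le TG_{\mathcal V}^{-1}T^\dagger= I_{\mathcal K}$ precisely by (\ref{G>HG}).
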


\begin{align}
G_{\mathcal{V}}^{-1}  &  \geq H\circ G_{\mathcal{U}}^{-1},\label{G>HG}\\
H  &  \geq0,\,H_{\theta,\theta}=1,\,\theta\in\Theta, \label{H-2}%
\end{align}

\begin{proof}
By Lemma\thinspace\ref{lem:independent}, $\mathcal{U=}\left\{  \left\vert
u_{\theta}\right\rangle \right\}  _{\theta\in\Theta}$ is also linearly
independent. Without loss of generality, we suppose $\mathcal{H}%
=\mathrm{span}\,\mathcal{U}$\thinspace and $\mathcal{K}=\mathrm{span}%
\,\mathcal{V}$, and thus $d=d^{\prime}=\left\vert \Theta\right\vert $.
\thinspace Let $\left\{  W_{i}\right\}  $ be Kraus operators of $\Lambda$,
$\Lambda\left(  L\right)  =\sum_{i}W_{i}LW_{i}^{\dagger}$.

Let us denote by $[\mathcal{U}]$ and $[\mathcal{V}]$ \ the matrix whose
$\theta$'s column vector is $\left\vert u_{\theta}\right\rangle $ and
$\left\vert v_{\theta}\right\rangle $, respectively. Observe they are square
matrices and invertible. So the condition (\ref{v=wu}) is rewritten as
\[
\lbrack\mathcal{V}]\mathrm{diag}\left(  \alpha_{1,i},\cdots\alpha_{\left\vert
\Theta\right\vert ,i}\right)  =W_{i}[\mathcal{U}],\,
\]
or equivalently%
\begin{equation}
W_{i}=[\mathcal{V}]\mathrm{diag}\left(  \alpha_{1,i},\cdots\alpha_{\left\vert
\Theta\right\vert ,i}\right)  [\mathcal{U}]^{-1}. \label{W-def}%
\end{equation}
A CP map $\Lambda^{\ast}$ given by the Kraus operators (\ref{W-def}) is
subunital if and only if
\begin{align*}
I_{\mathcal{K}}  &  \geq\sum_{i}W_{i}W_{i}^{\dagger}=\sum_{i}[\mathcal{V}%
]\mathrm{diag}\left(  \alpha_{1,i},\cdots\alpha_{\left\vert \Theta\right\vert
,i}\right)  [\mathcal{U}]^{-1}[\mathcal{U}]^{-1\dagger}\mathrm{diag}\left(
\overline{\alpha_{1,i}},\cdots\overline{\alpha_{\left\vert \Theta\right\vert
,i}}\right)  [\mathcal{V}]^{\dagger}\\
&  =\sum_{i}[\mathcal{V}]\mathrm{diag}\left(  \alpha_{1,i},\cdots
\alpha_{\left\vert \Theta\right\vert ,i}\right)  G_{\mathcal{U}}%
^{-1}\mathrm{diag}\left(  \overline{\alpha_{1,i}},\cdots\overline
{\alpha_{\left\vert \Theta\right\vert ,i}}\right)  [\mathcal{V}]^{\dagger}\\
&  =[\mathcal{V}]H\circ G_{\mathcal{U}}^{-1}[\mathcal{V}]^{\dagger},
\end{align*}
where $H_{\theta,\theta^{\prime}}:=\sum_{i}\alpha_{\theta,i}\overline
{\alpha_{\theta^{\prime},i}}$. Since $H$ satisfies (\ref{H-2}) and
$[\mathcal{V}]^{-1}[\mathcal{V}]^{-1\dagger}=G_{\mathcal{V}}^{-1}$, we have
the assertion.
\end{proof}

The following theorem can be proved in almost parallel manner as the previous
theorem. But to make the relation with Theorem\thinspace\ref{th:CJW}, we give
the proof using Theorem\thinspace\ref{th:CJW}.

\begin{theorem}
\label{th:1-dim}Suppose that $\mathcal{V}$ is linearly independent, and
$\mathrm{span}\,\mathcal{V=K}$, $\mathrm{span}\,\mathcal{U=H}$. Then there is
a CP unital map $\Lambda^{\ast}$ satisfying (\ref{convert}) if and only if
there is a matrix $H=\left[  H_{\theta,\theta^{\prime}}\right]  $ with
(\ref{H-2}) and
\begin{equation}
G_{\mathcal{V}}^{-1}=H\circ G_{\mathcal{U}}^{-1}. \label{G=HG}%
\end{equation}

\end{theorem}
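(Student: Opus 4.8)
The plan is to reduce Theorem~\ref{th:1-dim} to Theorem~\ref{th:CJW} via the same dictionary between observable conversion and state conversion that was set up in Section~2, but applied in reverse. First I would make the substitution that turns the rank-one operators $L_\theta=\left\vert u_\theta\right\rangle\left\langle u_\theta\right\vert$ and $M_\theta=\left\vert v_\theta\right\rangle\left\langle v_\theta\right\vert$ into a state-conversion instance. Since $\mathcal{U}$ and $\mathcal{V}$ are linearly independent with $\mathrm{span}\,\mathcal{U}=\mathcal{H}$, $\mathrm{span}\,\mathcal{V}=\mathcal{K}$, the Gram matrices $G_{\mathcal{U}}$, $G_{\mathcal{V}}$ are invertible. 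I would introduce dual vectors $u_\theta^{\uparrow}$, $v_\theta^{\uparrow}$ and set $\rho_\theta:=\left\vert u_\theta^{\uparrow}\right\rangle\left\langle u_\theta^{\uparrow}\right\vert$, $\sigma_\theta:=\left\vert v_\theta^{\uparrow}\right\rangle\left\langle v_\theta^{\uparrow}\right\vert$; the key computation is that the Gram matrix of the dual system $\mathcal{U}^{\uparrow}$ is exactly $G_{\mathcal{U}}^{-1}$ (and similarly for $\mathcal{V}$). Then applying Theorem~\ref{th:CJW} to the families $\{\rho_\theta\}$, $\{\sigma_\theta\}$ gives: a CPTP map sending $\rho_\theta\mapsto\sigma_\theta$ exists iff there is $H\geq0$ with $H_{\theta,\theta}=1$ and $G_{\mathcal{U}}^{-1}=G_{\mathcal{V}}^{-1}\circ H$, i.e. precisely condition (\ref{H-2}) together with (\ref{G=HG}) (up to relabelling which Gram inverse plays which role — I need to be careful about the direction of the Hadamard factorization).

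Second, I would translate the existence of a CPTP map on the dual-vector states back into the existence of a CP \emph{unital} map $\Lambda^\ast$ with $\Lambda^\ast(L_\theta)=M_\theta$. Here I invoke the bridge (\ref{rho=SS})–(\ref{L(TT)=SS}) from Section~1.2: defining $S,T$ by $SS^\dagger=\sum_\theta\rho_\theta$, $TT^\dagger=\sum_\theta\sigma_\theta$, a CPTP map $\Phi$ with $\Phi(\rho_\theta)=\sigma_\theta$ yields a CP unital $\Lambda(X)=T^{-1}\Phi(SXS^\dagger)T^{\dagger-1}$ satisfying $\Lambda(S^{-1}\rho_\theta S^{\dagger-1})=T^{-1}\sigma_\theta T^{\dagger-1}$, and conversely. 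The operators $S^{-1}\rho_\theta S^{\dagger-1}$ are, after choosing $S$ to map an orthonormal basis to $\{u_\theta^{\uparrow}\}$, again rank-one projections whose defining vectors form the \emph{bi-orthogonal} (i.e. double-dual) system — which is just $\mathcal{U}$ up to normalization. The point is that running this reduction once more collapses $\rho_\theta/\sigma_\theta$ back to (scalar multiples of) $L_\theta/M_\theta$, and the unitality of $\Lambda$ is automatic on these spans since $\dim\mathcal{H}=\dim\mathcal{K}=\left\vert\Theta\right\vert$. I would check directly that $\Lambda^\ast(I_{\mathcal{K}})=I_{\mathcal{H}}$ corresponds to the equality (rather than inequality) in (\ref{G=HG}), contrasting with the subunital case of Theorem~\ref{th:1-dim-ineq} where $\leq$ appears.

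Alternatively — and this may be cleaner — I would run the argument of the proof of Theorem~\ref{th:1-dim-ineq} verbatim, writing the Kraus operators of $\Lambda^\ast$ as $W_i=[\mathcal{V}]\,\mathrm{diag}(\alpha_{1,i},\dots,\alpha_{\left\vert\Theta\right\vert,i})\,[\mathcal{U}]^{-1}$, and observing that $\Lambda^\ast$ is unital (not merely subunital) precisely when $\sum_i W_iW_i^\dagger=I_{\mathcal{K}}$, which by the same Hadamard-product computation becomes $[\mathcal{V}](H\circ G_{\mathcal{U}}^{-1})[\mathcal{V}]^\dagger=I_{\mathcal{K}}$, i.e. $H\circ G_{\mathcal{U}}^{-1}=[\mathcal{V}]^{-1}[\mathcal{V}]^{\dagger-1}=G_{\mathcal{V}}^{-1}$; combined with the constraints $H\geq0$, $H_{\theta,\theta}=1$ coming from $\sum_i\left\vert\alpha_{\theta,i}\right\vert^2=1$ and positive-semidefiniteness of $H=[\sum_i\alpha_{\theta,i}\overline{\alpha_{\theta',i}}]$, this is exactly (\ref{H-2}) and (\ref{G=HG}). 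The main obstacle, in either route, is bookkeeping: making sure the constraint on $\Lambda^\ast$ versus $\Lambda$ is stated for the correct map (the theorem writes $\Lambda^\ast$, so I must track adjoints carefully), verifying that every $H$ satisfying (\ref{H-2})–(\ref{G=HG}) actually arises from some choice of $\alpha_{\theta,i}$ (this is just a Cholesky-type factorization $H=A^\dagger A$ with rows of $A$ supplying the $\alpha$'s), and confirming that the reduction to Theorem~\ref{th:CJW} does not secretly require the normalization $\mathrm{tr}\,\rho_\theta=1$ that the Remark already warns about — which is handled by the $H_{\theta,\theta}=1$ clause rather than any rescaling.
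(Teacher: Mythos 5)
Your proposal is correct, and it actually contains two workable routes. Your ``alternative'' Kraus-operator argument is precisely the proof the paper alludes to but does not write out (it remarks the theorem ``can be proved in almost parallel manner'' to Theorem\thinspace\ref{th:1-dim-ineq}): with $W_{i}=[\mathcal{V}]\,\mathrm{diag}\left(  \alpha_{1,i},\cdots,\alpha_{\left\vert \Theta\right\vert ,i}\right)  [\mathcal{U}]^{-1}$, unitality $\sum_{i}W_{i}W_{i}^{\dagger}=I_{\mathcal{K}}$ becomes $[\mathcal{V}]\left(  H\circ G_{\mathcal{U}}^{-1}\right)  [\mathcal{V}]^{\dagger}=I_{\mathcal{K}}$, i.e.\ (\ref{G=HG}), and every $H$ with (\ref{H-2}) arises from a factorization $H_{\theta,\theta^{\prime}}=\sum_{i}\alpha_{\theta,i}\overline{\alpha_{\theta^{\prime},i}}$; this is self-contained and arguably cleaner. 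Your first route is essentially the paper's written proof, which uses Lemma\thinspace\ref{lem:independent} to show that the adjoint of the unital map sends dual projectors exactly, $\Lambda^{\ast}\left(  \left\vert v_{\theta}^{\uparrow}\right\rangle \left\langle v_{\theta}^{\uparrow}\right\vert \right)  =\left\vert u_{\theta}^{\uparrow}\right\rangle \left\langle u_{\theta}^{\uparrow}\right\vert$, and then applies Theorem\thinspace\ref{th:CJW} with $G_{\mathcal{U}^{\uparrow}}=G_{\mathcal{U}}^{-1}$, $G_{\mathcal{V}^{\uparrow}}=G_{\mathcal{V}}^{-1}$; no detour through the operators $S,T$ of (\ref{rho=SS})--(\ref{Phi-def}) is needed. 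One caution on the point you flagged: the direction is not a mere relabelling. The CPTP map in the reduction is the adjoint of the unital map, so it runs from $\mathcal{L}\left(  \mathcal{K}\right)$ to $\mathcal{L}\left(  \mathcal{H}\right)$, taking the $v^{\uparrow}$-projectors to the $u^{\uparrow}$-projectors, and Theorem\thinspace\ref{th:CJW} then yields exactly $G_{\mathcal{V}}^{-1}=H\circ G_{\mathcal{U}}^{-1}$; with your assignment ($\rho_{\theta}=\left\vert u_{\theta}^{\uparrow}\right\rangle \left\langle u_{\theta}^{\uparrow}\right\vert$ as input, $\sigma_{\theta}=\left\vert v_{\theta}^{\uparrow}\right\rangle \left\langle v_{\theta}^{\uparrow}\right\vert$ as output) you would instead get $G_{\mathcal{U}}^{-1}=H\circ G_{\mathcal{V}}^{-1}$, which is not equivalent in general, since the entrywise reciprocal of a positive semidefinite $H$ need not be positive semidefinite. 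So in route one you must fix the orientation (and, for the converse direction, note that a CPTP map acting correctly on the dual projectors has a unital adjoint satisfying (\ref{convert}), again by Lemma\thinspace\ref{lem:independent}); your Kraus-operator route avoids this bookkeeping entirely.
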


\begin{proof}
Since by Lemma\thinspace\ref{lem:independent}, $\left\{  \left\vert u_{\theta
}\right\rangle \right\}  _{\theta\in\Theta}$ is also linearly independent,
$\left\vert \Theta\right\vert =d=d^{\prime}$. \ By Lemma\thinspace
\ref{lem:independent}, (\ref{convert}) is same as
\[
\delta_{\theta,\theta^{\prime}}=\mathrm{tr}\,\left\vert v_{\theta^{\prime}%
}^{\uparrow}\right\rangle \left\langle v_{\theta^{\prime}}^{\uparrow
}\right\vert \Lambda\left(  \left\vert u_{\theta}\right\rangle \left\langle
u_{\theta}\right\vert \right)  =\mathrm{tr}\,\Lambda^{\ast}\left(  \left\vert
v_{\theta^{\prime}}^{\uparrow}\right\rangle \left\langle v_{\theta^{\prime}%
}^{\uparrow}\right\vert \right)  \left\vert u_{\theta}\right\rangle
\left\langle u_{\theta}\right\vert .
\]
Therefore, replacing $v_{\theta}$ and $v_{\theta}^{\uparrow}$ in
(\ref{dual-inner}) by $u_{\theta}$ and $u_{\theta}^{\uparrow}$ respectively,
we have%
\[
\Lambda^{\ast}\left(  \left\vert v_{\theta^{\prime}}^{\uparrow}\right\rangle
\left\langle v_{\theta^{\prime}}^{\uparrow}\right\vert \right)  =\left\vert
u_{\theta^{\prime}}^{\uparrow}\right\rangle \left\langle u_{\theta^{\prime}%
}^{\uparrow}\right\vert .
\]
Thus, we can use Theorem\thinspace\ref{th:CJW}. Noticing $G_{\mathcal{U}%
^{\uparrow}}=G_{\mathcal{U}}^{-1}$ and $G_{\mathcal{V}^{\uparrow}%
}=G_{\mathcal{V}}^{-1}$, we obtain the asserted condition.
\end{proof}

Below, $\left[  G_{\mathcal{U}}\right]  _{\Theta_{1},\Theta_{2}}$ means
submatrix of $G_{\mathcal{U}}$ that corresponds to the rows with index in
$\Theta_{1}$ and the columns with index in $\Theta_{2}$. Also, $\mathcal{U}%
_{\Theta_{1}}$, $\mathcal{V}_{\Theta_{1}}$, $\mathcal{\hat{E}}_{\Theta_{1}}$,
and $\mathcal{\hat{F}}_{\Theta_{1}}$ means restriction of the range of
$\theta$ to $\Theta_{1}$ of $\mathcal{U}$, $\mathcal{V}$, $\mathcal{\hat{E}}$,
and $\mathcal{\hat{F}}$.

\begin{lemma}
\label{lem:detG>detG}Suppose that $\mathcal{U}$ and $\mathcal{V}$ satisfy all
the conditions of Theorem\thinspace\ref{th:1-dim-ineq}. Then, there is a CP
unital map $\Lambda$ satisfying (\ref{convert}) only if
\begin{equation}
\det\left[  G_{\mathcal{U}}\right]  _{\Theta_{1},\Theta_{1}}\geq\det\left[
G_{\mathcal{V}}\right]  _{\Theta_{1},\Theta_{1}} \label{detG>detG}%
\end{equation}
for any $\Theta_{1}\subset\Theta$.
\end{lemma}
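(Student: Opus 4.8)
The plan is to exploit Theorem~\ref{th:1-dim-ineq}, which gives a matrix $H\geq 0$ with $H_{\theta,\theta}=1$ and $G_{\mathcal{V}}^{-1}\geq H\circ G_{\mathcal{U}}^{-1}$, and then restrict everything to the index set $\Theta_1$. First I would observe that the principal submatrix operation is compatible with all the structure in sight: for any matrices $A\geq B$ the principal submatrices satisfy $[A]_{\Theta_1,\Theta_1}\geq[B]_{\Theta_1,\Theta_1}$ (compression preserves positive semidefiniteness), the Hadamard product commutes with taking submatrices, $[H\circ G_{\mathcal{U}}^{-1}]_{\Theta_1,\Theta_1}=[H]_{\Theta_1,\Theta_1}\circ[G_{\mathcal{U}}^{-1}]_{\Theta_1,\Theta_1}$, and $[H]_{\Theta_1,\Theta_1}\geq 0$ still has unit diagonal. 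Hence from the hypothesis we get
\begin{equation}
[G_{\mathcal{V}}^{-1}]_{\Theta_1,\Theta_1}\;\geq\;[H]_{\Theta_1,\Theta_1}\circ[G_{\mathcal{U}}^{-1}]_{\Theta_1,\Theta_1}. \label{sub-ineq}
\end{equation}

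The subtlety — and the one genuine obstacle — is that $[G_{\mathcal{U}}^{-1}]_{\Theta_1,\Theta_1}$ is \emph{not} the inverse of $[G_{\mathcal{U}}]_{\Theta_1,\Theta_1}$; it is (by the Schur-complement / cofactor formula) a matrix whose determinant equals $\det[G_{\mathcal{U}}]_{\Theta_1^c,\Theta_1^c}/\det G_{\mathcal{U}}$. So I cannot directly read off a determinant inequality between the $\Theta_1$-blocks of $G_{\mathcal{U}}$ and $G_{\mathcal{V}}$ from \eqref{sub-ineq}. The way around this is to apply the same argument to the dual (reciprocal) systems instead. By Lemma~\ref{lem:independent} the system $\mathcal{U}$ is linearly independent, so $\mathcal{U}^{\uparrow}$ exists with $G_{\mathcal{U}^{\uparrow}}=G_{\mathcal{U}}^{-1}$, and likewise $G_{\mathcal{V}^{\uparrow}}=G_{\mathcal{V}}^{-1}$. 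Rewriting \eqref{sub-ineq} in this notation gives $[G_{\mathcal{V}^{\uparrow}}]_{\Theta_1,\Theta_1}\geq[H]_{\Theta_1,\Theta_1}\circ[G_{\mathcal{U}^{\uparrow}}]_{\Theta_1,\Theta_1}$, an inequality between genuine Gram matrices of the restricted dual subfamilies, with a valid correlation matrix $[H]_{\Theta_1,\Theta_1}$ in between.

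Now I would invoke the elementary fact (Oppenheim's inequality, a standard consequence of the Schur product theorem) that if $K\geq 0$ has unit diagonal and $A,B$ are positive semidefinite of the same size with $A\geq K\circ B$, then $\det A\geq\det(K\circ B)\geq(\det K)(\det B)\cdot$? — more precisely Oppenheim gives $\det(K\circ B)\geq(\prod_j K_{jj})\det B=\det B$ when $K_{jj}=1$, and monotonicity of $\det$ on the PSD cone under $A\geq K\circ B$ gives $\det A\geq\det(K\circ B)\geq\det B$. Applying this with $A=[G_{\mathcal{V}^{\uparrow}}]_{\Theta_1,\Theta_1}$, $K=[H]_{\Theta_1,\Theta_1}$, $B=[G_{\mathcal{U}^{\uparrow}}]_{\Theta_1,\Theta_1}$ yields $\det[G_{\mathcal{V}^{\uparrow}}]_{\Theta_1,\Theta_1}\geq\det[G_{\mathcal{U}^{\uparrow}}]_{\Theta_1,\Theta_1}$. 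Finally, I translate back: a principal minor of an inverse Gram matrix is a complementary minor of the original Gram matrix divided by its full determinant, i.e.
\begin{equation}
\det[G_{\mathcal{U}^{\uparrow}}]_{\Theta_1,\Theta_1}=\frac{\det[G_{\mathcal{U}}]_{\Theta_1^c,\Theta_1^c}}{\det G_{\mathcal{U}}},\qquad \det[G_{\mathcal{V}^{\uparrow}}]_{\Theta_1,\Theta_1}=\frac{\det[G_{\mathcal{V}}]_{\Theta_1^c,\Theta_1^c}}{\det G_{\mathcal{V}}}. \label{jacobi}
\end{equation}
Taking $\Theta_1=\Theta$ in the determinant inequality already shows $\det G_{\mathcal{V}}^{-1}\geq\det G_{\mathcal{U}}^{-1}$, i.e.\ $\det G_{\mathcal{U}}\geq\det G_{\mathcal{V}}$; combining this with \eqref{jacobi} and the general inequality applied to $\Theta_1^c$ in place of $\Theta_1$, I get $\det[G_{\mathcal{U}}]_{\Theta_1,\Theta_1}/\det G_{\mathcal{U}}\geq\det[G_{\mathcal{V}}]_{\Theta_1,\Theta_1}/\det G_{\mathcal{V}}$, and then multiplying through by $\det G_{\mathcal{U}}\geq\det G_{\mathcal{V}}>0$ delivers \eqref{detG>detG}. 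The only place care is needed is the bookkeeping between a subset and its complement in \eqref{jacobi}; everything else is routine once Theorem~\ref{th:1-dim-ineq} and Oppenheim's inequality are in hand.
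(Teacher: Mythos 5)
Your use of Theorem~\ref{th:1-dim-ineq}, the compression to principal submatrices, and the Oppenheim-plus-monotonicity step are all sound, and they do settle the case $\Theta_{1}=\Theta$: $\det G_{\mathcal{V}}^{-1}\geq\det\left(  H\circ G_{\mathcal{U}}^{-1}\right)  \geq\det G_{\mathcal{U}}^{-1}$, hence $\det G_{\mathcal{U}}\geq\det G_{\mathcal{V}}$. The gap is in the very last step. Applying your determinant inequality to $\Theta_{1}^{c}$ and rewriting both sides with Jacobi's identity gives $\det\left[  G_{\mathcal{V}}\right]  _{\Theta_{1},\Theta_{1}}/\det G_{\mathcal{V}}\geq\det\left[  G_{\mathcal{U}}\right]  _{\Theta_{1},\Theta_{1}}/\det G_{\mathcal{U}}$, which is the \emph{reverse} of the inequality you wrote down. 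And the corrected inequality, even combined with $\det G_{\mathcal{U}}\geq\det G_{\mathcal{V}}>0$, does not imply (\ref{detG>detG}): writing $a=\det\left[  G_{\mathcal{U}}\right]  _{\Theta_{1},\Theta_{1}}$, $b=\det\left[  G_{\mathcal{V}}\right]  _{\Theta_{1},\Theta_{1}}$, $A=\det G_{\mathcal{U}}$, $B=\det G_{\mathcal{V}}$, the constraints $b/B\geq a/A$ and $A\geq B>0$ are compatible with $a<b$ (e.g.\ $a=1$, $b=3$, $A=2$, $B=1$). So as written the proposal proves the full-set case but not the proper-subset cases; the bound you obtain points the wrong way.

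The fix is short and stays within your method. The hypotheses of Theorem~\ref{th:1-dim-ineq} are hereditary: the given CP unital map is in particular subunital, it converts $\mathcal{\hat{E}}_{\Theta_{1}}$ to $\mathcal{\hat{F}}_{\Theta_{1}}$, and the subunital criterion is insensitive to the choice of ambient spaces (which is exactly why this lemma is routed through the subunital theorem). Hence for each $\Theta_{1}$ there is a matrix $H^{\prime}$ satisfying (\ref{H-2}) with $\left(  \left[  G_{\mathcal{V}}\right]  _{\Theta_{1},\Theta_{1}}\right)  ^{-1}\geq H^{\prime}\circ\left(  \left[  G_{\mathcal{U}}\right]  _{\Theta_{1},\Theta_{1}}\right)  ^{-1}$, an inequality between genuine inverses of the submatrices, and your Oppenheim argument applied to it yields (\ref{detG>detG}) immediately, with no Jacobi or complement bookkeeping. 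For comparison, the paper argues by induction on $\left\vert \Theta\right\vert $: it uses only the diagonal entries $\left(  G_{\mathcal{V}}^{-1}\right)  _{\theta,\theta}\geq\left(  G_{\mathcal{U}}^{-1}\right)  _{\theta,\theta}$ together with the cofactor formula --- which is precisely the corrected ratio inequality for $\left\vert \Theta_{1}\right\vert =\left\vert \Theta\right\vert -1$ --- and then invokes the inductive hypothesis on proper subsets to extract $\det G_{\mathcal{U}}\geq\det G_{\mathcal{V}}$; your Oppenheim step, performed on the restricted families, would in fact bypass that induction.
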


\begin{proof}
We show the assertion using induction about $\left\vert \Theta\right\vert $.
When $\left\vert \Theta\right\vert =1$,
\[
\det\left[  G_{\mathcal{U}}\right]  =\left\Vert \left\vert u_{1}\right\rangle
\left\langle u_{1}\right\vert \right\Vert _{\infty}.
\]
Thus by (\ref{infty-monotone}), we have the assertion.

Next, suppose that the assertion is true for $\Theta_{1}$ such that
$\left\vert \Theta_{1}\right\vert \leq\left\vert \Theta\right\vert -1$.
(Without loss of generality, $\Theta_{1}\subset\Theta$.) Observe that the any
subfamily $\mathcal{U}_{\Theta_{1}}$ and $\mathcal{V}_{\Theta_{1}}$ satisfy
all the conditions of Theorem\thinspace\ref{th:1-dim-ineq}. (Here we replace
$\Theta$ in (\ref{convert}) with $\Theta_{1}$.) \ Then by the hypothesis of
the induction, (\ref{detG>detG}) holds for any subset $\Theta_{1}\neq\Theta$.
So it remains to show (\ref{detG>detG}) for $\Theta_{1}=\Theta$. Since
\[
\left(  G_{\mathcal{U}}^{-1}\right)  _{\theta,\theta}=\frac{\det\left[
G_{\mathcal{U}}\right]  _{\Theta_{2},\Theta_{2}}\,}{\det\,G_{\mathcal{U}}},
\]
where $\Theta_{2}=\Theta\backslash\left\{  \theta\right\}  $,
Theorem\thinspace\ref{th:1-dim-ineq} implies that
\[
\frac{\det\left[  G_{\mathcal{V}}\right]  _{\Theta_{2},\Theta_{2}}\,}%
{\det\,G_{\mathcal{V}}}\geq\frac{\det\left[  G_{\mathcal{U}}\right]
_{\Theta_{2},\Theta_{2}}\,}{\det\,G_{\mathcal{U}}}.
\]
Since $\det\left[  G_{\mathcal{V}}\right]  _{\Theta_{2},\Theta_{2}}\,\leq
\det\left[  G_{\mathcal{U}}\right]  _{\Theta_{2},\Theta_{2}}\,$\ by the
hypothesis of the induction, we have $\det\,G_{\mathcal{U}}\geq\det
\,G_{\mathcal{V}}$. Thus we have the assertion.
\end{proof}

\begin{theorem}
Suppose that $\mathcal{U}$ and $\mathcal{V}$ satisfy all the conditions of
Theorem\thinspace\ref{th:1-dim-ineq} and $\left\vert \Theta\right\vert =2$.
Then, there is a CP unital map $\Lambda^{\ast}$ satisfying (\ref{convert}) if
and only if
\begin{align}
\frac{\left\Vert v_{2}\right\Vert ^{2}}{\det\,G_{\mathcal{V}}}-\frac
{\left\Vert u_{2}\right\Vert ^{2}}{\det\,G_{\mathcal{U}}}  &  \geq
0,\frac{\left\Vert v_{1}\right\Vert ^{2}}{\det\,G_{\mathcal{V}}}%
-\frac{\left\Vert u_{1}\right\Vert ^{2}}{\det\,G_{\mathcal{U}}}\geq
0,\label{|theta|=2-1}\\
\left(  \frac{\left\vert \left\langle v_{1}\right\vert \left.  v_{2}%
\right\rangle \right\vert }{\det\,G_{\mathcal{V}}}-\frac{\left\vert
\left\langle u_{1}\right\vert \left.  u_{2}\right\rangle \right\vert }%
{\det\,G_{\mathcal{U}}}\right)  ^{2}  &  \leq\left(  \frac{\left\Vert
v_{2}\right\Vert ^{2}}{\det\,G_{\mathcal{V}}}-\frac{\left\Vert u_{2}%
\right\Vert ^{2}}{\det\,G_{\mathcal{U}}}\right)  \left(  \frac{\left\Vert
v_{1}\right\Vert ^{2}}{\det\,G_{\mathcal{V}}}-\frac{\left\Vert u_{1}%
\right\Vert ^{2}}{\det\,G_{\mathcal{U}}}\right)  . \label{|theta=2|-2}%
\end{align}

\end{theorem}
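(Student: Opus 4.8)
The plan is to specialize Theorem~\ref{th:1-dim} to $\left\vert \Theta\right\vert = 2$ and turn the matrix condition (\ref{G=HG}) into explicit scalar inequalities. Write $A := G_{\mathcal{U}}^{-1}$ and $B := G_{\mathcal{V}}^{-1}$; these are $2\times 2$ positive definite Hermitian matrices. By Theorem~\ref{th:1-dim} (and Lemma~\ref{lem:independent}, which guarantees $\mathcal{U}$ is also linearly independent so that both inverses exist and $d=d'=2$), a CP unital $\Lambda^{\ast}$ with (\ref{convert}) exists iff there is $H\geq 0$ with $H_{\theta,\theta}=1$ and $B = H\circ A$. First I would observe that the diagonal constraint $H_{11}=H_{22}=1$ forces $B_{11}=A_{11}$ and $B_{22}=A_{22}$; the off-diagonal entry $H_{12} = B_{12}/A_{12}$ (note $A_{12}\neq 0$ since the Gram matrices of two non-orthogonal... — more carefully, if $\langle u_1\mid u_2\rangle = 0$ then $\langle v_1\mid v_2\rangle$ must also vanish by a separate easy argument, reducing to the trivial diagonal case; I would dispatch that degenerate case first). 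Then $H\geq 0$ for a $2\times2$ matrix with unit diagonal is simply $\left\vert H_{12}\right\vert \leq 1$, i.e. $\left\vert B_{12}\right\vert \leq \left\vert A_{12}\right\vert$.

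Next I would compute the entries of $A = G_{\mathcal{U}}^{-1}$ and $B = G_{\mathcal{V}}^{-1}$ explicitly via the $2\times2$ inverse formula. With $G_{\mathcal{U}} = \begin{bmatrix}\Vert u_1\Vert^2 & \langle u_1\mid u_2\rangle\\ \langle u_2\mid u_1\rangle & \Vert u_2\Vert^2\end{bmatrix}$ one has $A_{11} = \Vert u_2\Vert^2/\det G_{\mathcal{U}}$, $A_{22}=\Vert u_1\Vert^2/\det G_{\mathcal{U}}$, and $A_{12} = -\langle u_1\mid u_2\rangle/\det G_{\mathcal{U}}$, and similarly for $B$ in terms of $\mathcal{V}$. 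Substituting, the conditions $B_{11}=A_{11}$ and $B_{22}=A_{22}$ become the two equalities $\Vert v_2\Vert^2/\det G_{\mathcal{V}} = \Vert u_2\Vert^2/\det G_{\mathcal{U}}$ and $\Vert v_1\Vert^2/\det G_{\mathcal{V}} = \Vert u_1\Vert^2/\det G_{\mathcal{U}}$, while $\left\vert B_{12}\right\vert \leq \left\vert A_{12}\right\vert$ becomes $\left\vert\langle v_1\mid v_2\rangle\right\vert/\det G_{\mathcal{V}} \leq \left\vert\langle u_1\mid u_2\rangle\right\vert/\det G_{\mathcal{U}}$. This is a clean characterization, but it does not visibly match the asserted (\ref{|theta|=2-1})--(\ref{|theta=2|-2}); so the real work is reconciling the two.

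The main obstacle — and the crux of the argument — is showing that the system of two \emph{equalities} plus one inequality coming from Theorem~\ref{th:1-dim} is equivalent to the system of two inequalities (\ref{|theta|=2-1}) plus the quadratic inequality (\ref{|theta=2|-2}). The key identity to exploit is that for $2\times 2$ Hermitian $A$, $\det A = A_{11}A_{22} - \left\vert A_{12}\right\vert^2 = 1/\det G_{\mathcal{U}}$ (since $A = G_{\mathcal{U}}^{-1}$); the analogous relation holds for $B$ and $G_{\mathcal{V}}$. I would introduce the shorthand $a_j := \Vert v_j\Vert^2/\det G_{\mathcal{V}} - \Vert u_j\Vert^2/\det G_{\mathcal{U}}$ and $c := \left\vert\langle v_1\mid v_2\rangle\right\vert/\det G_{\mathcal{V}} - \left\vert\langle u_1\mid u_2\rangle\right\vert/\det G_{\mathcal{U}}$, so that (\ref{|theta|=2-1}) reads $a_1,a_2\geq 0$ and (\ref{|theta=2|-2}) reads $c^2 \leq a_1 a_2$. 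Forming the difference $\det A - \det B$ of the two determinant relations and also the difference of products $A_{11}A_{22}-B_{11}B_{22}$, one gets $\left\vert A_{12}\right\vert^2 - \left\vert B_{12}\right\vert^2 = (A_{11}A_{22}-B_{11}B_{22}) - (\det A - \det B)$. The point is that Lemma~\ref{lem:detG>detG} already gives $\det G_{\mathcal{U}} \geq \det G_{\mathcal{V}}$ and each $\det[G_{\mathcal{U}}]_{\{j\},\{j\}} = \Vert u_j\Vert^2 \geq \Vert v_j\Vert^2$, so the $2\times 2$ case is consistent; what remains is to check that the \emph{only if} direction produced by Theorem~\ref{th:1-dim} — two strict equalities — is genuinely captured by the weaker-looking inequalities, which happens precisely because $a_1, a_2, c$ are not independent: the determinant identity forces a relation among them that, combined with $a_1 a_2 \geq c^2 \geq 0$, pins down the equalities. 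I would carry out this algebraic reduction carefully, treating separately the case $\langle u_1 \mid u_2\rangle = 0$, and then assemble the equivalence.
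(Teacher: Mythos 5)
The crux step you defer — ``reconciling'' the equality conditions coming from Theorem~\ref{th:1-dim} with the inequalities (\ref{|theta|=2-1})--(\ref{|theta=2|-2}) — is not a routine verification; it is impossible, and this is the genuine gap. Starting from $G_{\mathcal{V}}^{-1}=H\circ G_{\mathcal{U}}^{-1}$ you correctly obtain the diagonal \emph{equalities} $\Vert v_j\Vert^{2}/\det G_{\mathcal{V}}=\Vert u_j\Vert^{2}/\det G_{\mathcal{U}}$ plus $\vert\langle v_1\vert v_2\rangle\vert/\det G_{\mathcal{V}}\le\vert\langle u_1\vert u_2\rangle\vert/\det G_{\mathcal{U}}$, but this system is strictly stronger than (\ref{|theta|=2-1})--(\ref{|theta=2|-2}), so no determinant identity can ``pin down the equalities''. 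Concretely, take $u_1,u_2$ orthonormal and $v_1,v_2$ orthogonal of norm $1/2$: then $\Vert v_j\Vert^{2}/\det G_{\mathcal{V}}=4>1=\Vert u_j\Vert^{2}/\det G_{\mathcal{U}}$ and both inner products vanish, so (\ref{|theta|=2-1}) and (\ref{|theta=2|-2}) hold with room to spare while your diagonal equalities fail; a CP subunital map exists (e.g.\ $X\mapsto\frac14 WXW^{\dagger}$ with $W$ unitary, $Wu_j=2v_j$), but by Theorem~\ref{th:1-dim} no unital one does. Hence the equivalence you would need to finish along your route is simply false.

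The conditions in the theorem are in fact the $\left\vert\Theta\right\vert=2$ specialization of the \emph{inequality} criterion of Theorem~\ref{th:1-dim-ineq}, and that is how the paper proves it: a $2\times2$ matrix $H\ge0$ with unit diagonal is exactly the matrix with off-diagonal entry $\eta$, $\vert\eta\vert\le1$; then $G_{\mathcal{V}}^{-1}-H\circ G_{\mathcal{U}}^{-1}\ge0$ is equivalent to the two diagonal inequalities (\ref{|theta|=2-1}) together with $\bigl\vert\langle v_1\vert v_2\rangle/\det G_{\mathcal{V}}-\eta\,\langle u_1\vert u_2\rangle/\det G_{\mathcal{U}}\bigr\vert^{2}\le\bigl(\Vert v_2\Vert^{2}/\det G_{\mathcal{V}}-\Vert u_2\Vert^{2}/\det G_{\mathcal{U}}\bigr)\bigl(\Vert v_1\Vert^{2}/\det G_{\mathcal{V}}-\Vert u_1\Vert^{2}/\det G_{\mathcal{U}}\bigr)$, and optimizing the left-hand side over the modulus and phase of $\eta$ yields (\ref{|theta=2|-2}). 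To be fair, the theorem and its proof in the paper say ``unital'' while invoking Theorem~\ref{th:1-dim-ineq}, which is the subunital criterion; that slip may well be what sent you to Theorem~\ref{th:1-dim}. But inequality-type conditions such as (\ref{|theta|=2-1})--(\ref{|theta=2|-2}) can only arise from the subunital criterion, so the proof must go through Theorem~\ref{th:1-dim-ineq}, not Theorem~\ref{th:1-dim}.
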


\begin{proof}
By Theorem \ref{th:1-dim-ineq}, existence of a CP unital map $\Lambda$ with
(\ref{convert}) is equivalent to the existence of a complex number $\eta$
\ with $\left\vert \eta\right\vert \leq1$ and
\[
\left[
\begin{array}
[c]{cc}%
\frac{\left\Vert v_{2}\right\Vert ^{2}}{\det\,G_{\mathcal{V}}}-\frac
{\left\Vert u_{2}\right\Vert ^{2}}{\det\,G_{\mathcal{U}}} & -\left(
\frac{\left\langle v_{1}\right\vert \left.  v_{2}\right\rangle }%
{\det\,G_{\mathcal{V}}}-\eta\frac{\left\langle u_{1}\right\vert \left.
u_{2}\right\rangle }{\det\,G_{\mathcal{U}}}\right) \\
-\overline{\left(  \frac{\left\langle v_{1}\right\vert \left.  v_{2}%
\right\rangle }{\det\,G_{\mathcal{V}}}-\eta\frac{\left\langle u_{1}\right\vert
\left.  u_{2}\right\rangle }{\det\,G_{\mathcal{U}}}\right)  } & \frac
{\left\Vert v_{1}\right\Vert ^{2}}{\det\,G_{\mathcal{V}}}-\frac{\left\Vert
u_{1}\right\Vert ^{2}}{\det\,G_{\mathcal{U}}}%
\end{array}
\right]  \geq0,
\]
or equivalently, \ (\ref{|theta|=2-1}) and
\[
\left\vert \frac{\left\langle v_{1}\right\vert \left.  v_{2}\right\rangle
}{\det\,G_{\mathcal{V}}}-\eta\frac{\left\langle u_{1}\right\vert \left.
u_{2}\right\rangle }{\det\,G_{\mathcal{U}}}\right\vert ^{2}\leq\left(
\frac{\left\Vert v_{2}\right\Vert ^{2}}{\det\,G_{\mathcal{V}}}-\frac
{\left\Vert u_{2}\right\Vert ^{2}}{\det\,G_{\mathcal{U}}}\right)  \left(
\frac{\left\Vert v_{1}\right\Vert ^{2}}{\det\,G_{\mathcal{V}}}-\frac
{\left\Vert u_{1}\right\Vert ^{2}}{\det\,G_{\mathcal{U}}}\right)  .
\]
Maximizing the LHS of the this expression moving $\eta$ so that $\left\vert
\eta\right\vert \leq1$, \ we obtain (\ref{|theta=2|-2}).
\end{proof}

\bigskip

\begin{lemma}
\label{lem:LinM}Suppose that \ $\left\Vert u_{\theta}\right\Vert =\left\Vert
v_{\theta}\right\Vert $ for each $\theta\in\Theta$. Then there is a CP
subunital map $\Lambda$ satisfying (\ref{convert}) exists only if
\[
L_{\theta}=\left\vert u_{\theta}\right\rangle \left\langle u_{\theta
}\right\vert \in\mathcal{M}_{\Lambda}%
\]
for each $\theta\in\Theta$.
\end{lemma}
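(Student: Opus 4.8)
The plan is to verify the defining identity of the multiplicative domain $\mathcal{M}_\Lambda$ on each $L_\theta$ directly, using nothing more than the fact that a rank-one operator is a scalar multiple of an idempotent. Fix $\theta$ and put $L_\theta=|u_\theta\rangle\langle u_\theta|$, $M_\theta=|v_\theta\rangle\langle v_\theta|$. The first step is the elementary observation
\[
L_\theta^{\dagger}L_\theta = L_\theta L_\theta^{\dagger} = L_\theta^{2} = \langle u_\theta|u_\theta\rangle\, L_\theta = \|u_\theta\|^{2}\,L_\theta,\qquad M_\theta^{2}=\|v_\theta\|^{2}\,M_\theta .
\]

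Next, invoking linearity of $\Lambda$ together with $\Lambda(L_\theta)=M_\theta$ (equation (\ref{convert})), I would compute on one hand
\[
\Lambda\!\left(L_\theta^{\dagger}L_\theta\right)=\|u_\theta\|^{2}\,\Lambda(L_\theta)=\|u_\theta\|^{2}\,M_\theta,
\]
and on the other hand $\Lambda(L_\theta)^{\dagger}\Lambda(L_\theta)=M_\theta^{\dagger}M_\theta=\|v_\theta\|^{2}\,M_\theta$. By the standing hypothesis $\|u_\theta\|=\|v_\theta\|$ the two sides coincide, and since $L_\theta$ is self-adjoint the same holds with $L_\theta L_\theta^{\dagger}$ and $\Lambda(L_\theta)\Lambda(L_\theta)^{\dagger}$. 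Thus $L_\theta$ attains equality in the Kadison--Schwarz inequality $\Lambda(X^{\dagger}X)\ge\Lambda(X)^{\dagger}\Lambda(X)$ (valid since CP maps are $2$-positive), i.e. $L_\theta\in\mathcal{M}_\Lambda$, for every $\theta\in\Theta$; this is exactly the claim.

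The only subtlety is that $\Lambda$ is subunital rather than unital, so one should confirm the multiplicative-domain machinery applies in this generality. This is routine: with $D:=I_{\mathcal{K}}-\Lambda(I_{\mathcal{H}})\ge 0$, the map $\hat\Lambda$ on $\mathcal{L}(\mathcal{H}\oplus\mathbb{C})$ given by $\hat\Lambda\bigl(\begin{smallmatrix}X&\ast\\ \ast&w\end{smallmatrix}\bigr):=\Lambda(X)+wD$ is completely positive and unital, it reproduces $\Lambda$ on the corner $X\mapsto\bigl(\begin{smallmatrix}X&0\\ 0&0\end{smallmatrix}\bigr)$, and the computation above shows that this corner image of $L_\theta$ lies in the ordinary (unital) multiplicative domain of $\hat\Lambda$; membership $L_\theta\in\mathcal{M}_\Lambda$ then follows from Choi's characterization of that domain applied to $\hat\Lambda$. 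I do not anticipate a genuine obstacle here — the whole content of the lemma is that the norm constraint $\|u_\theta\|^{2}=\|v_\theta\|^{2}$ forces the Schwarz inequality to be saturated on each rank-one $L_\theta$.
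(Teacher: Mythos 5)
Your core computation is exactly the one the paper uses: from $L_{\theta}^{2}=\left\Vert u_{\theta}\right\Vert ^{2}L_{\theta}$, $M_{\theta}^{2}=\left\Vert v_{\theta}\right\Vert ^{2}M_{\theta}$ and $\left\Vert u_{\theta}\right\Vert =\left\Vert v_{\theta}\right\Vert$ you obtain $\Lambda\left(  L_{\theta}^{\dagger}L_{\theta}\right)  =\Lambda\left(  L_{\theta}\right)  ^{\dagger}\Lambda\left(  L_{\theta}\right)$, i.e.\ saturation of the Schwarz inequality on each $L_{\theta}$. The divergence, and the one genuine gap, is in how the subunital normalization is handled. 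In this paper $\mathcal{M}_{\Lambda}$ is defined \emph{with the norm factor}, $\mathcal{M}_{\Lambda}=\left\{  L:\Lambda\left(  L^{\dagger}\right)  \Lambda\left(  L\right)  =\left\Vert \Lambda\right\Vert \Lambda\left(  L^{\dagger}L\right)  \right\}$, so your identity only gives $L_{\theta}\in\mathcal{M}_{\Lambda}$ once you also know $\left\Vert \Lambda\right\Vert =1$ (otherwise, when $M_{\theta}\neq0$, the required equality $\left\Vert v_{\theta}\right\Vert ^{2}M_{\theta}=\left\Vert \Lambda\right\Vert \left\Vert u_{\theta}\right\Vert ^{2}M_{\theta}$ fails). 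The paper supplies this in one line: subunitality gives $\left\Vert \Lambda\right\Vert =\left\Vert \Lambda\left(  I_{\mathcal{H}}\right)  \right\Vert \leq1$ by Lemma~\ref{lem:L-norm}, while $\left\Vert \Lambda\left(  L_{\theta}\right)  \right\Vert /\left\Vert L_{\theta}\right\Vert =\left\Vert v_{\theta}\right\Vert ^{2}/\left\Vert u_{\theta}\right\Vert ^{2}=1$ forces $\left\Vert \Lambda\right\Vert \geq1$, hence $\left\Vert \Lambda\right\Vert =1$.

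Your unitalization $\hat{\Lambda}$ on $\mathcal{L}\left(  \mathcal{H}\oplus\mathbb{C}\right)$ is correct as far as it goes (it is CP, unital, and your computation does place the corner copy of $L_{\theta}$ in the usual Choi multiplicative domain of $\hat{\Lambda}$), but it does not by itself produce $\left\Vert \Lambda\right\Vert =1$, and the final appeal to ``Choi's characterization'' to transfer membership back to $\mathcal{M}_{\Lambda}$ in the paper's normalized sense is left unjustified. Replace that last paragraph by the one-line norm argument above and your proof coincides with the paper's; making $\left\Vert \Lambda\right\Vert =1$ explicit also matters downstream, since Lemma~\ref{lem:multiplicative} is invoked in Theorem~\ref{th:equi-length} with the factor $\left\Vert \Lambda\right\Vert$ present.
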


\begin{proof}
Since $\Lambda$ is subunital, its norm does not exceeds 1 by Lemma\thinspace
\ref{lem:L-norm}. Since
\[
\frac{\left\Vert \Lambda\left(  \left\vert u_{\theta}\right\rangle
\left\langle u_{\theta}\right\vert \right)  \right\Vert }{\left\Vert
\left\vert u_{\theta}\right\rangle \left\langle u_{\theta}\right\vert
\right\Vert }=\frac{\left\Vert \left\vert v_{\theta}\right\rangle \left\langle
v_{\theta}\right\vert \right\Vert }{\left\Vert \left\vert u_{\theta
}\right\rangle \left\langle u_{\theta}\right\vert \right\Vert },
\]
$\left\Vert \Lambda\right\Vert =1$. Hence, since
\begin{align*}
\Lambda\left(  \left\vert u_{\theta}\right\rangle \left\langle u_{\theta
}\right\vert \cdot\left\vert u_{\theta}\right\rangle \left\langle u_{\theta
}\right\vert \right)   &  =\left\Vert u_{\theta}\right\Vert ^{2}\Lambda\left(
\left\vert u_{\theta}\right\rangle \left\langle u_{\theta}\right\vert \right)
\\
&  =\left\Vert u_{\theta}\right\Vert ^{2}\left\vert v_{\theta}\right\rangle
\left\langle v_{\theta}\right\vert =\left\Vert v_{\theta}\right\Vert
^{2}\left\vert v_{\theta}\right\rangle \left\langle v_{\theta}\right\vert \\
&  =\left\vert v_{\theta}\right\rangle \left\langle v_{\theta}\right\vert
\cdot\left\vert v_{\theta}\right\rangle \left\langle v_{\theta}\right\vert ,
\end{align*}
$\left\vert u_{\theta}\right\rangle \left\langle u_{\theta}\right\vert $ is an
element of $\mathcal{M}_{\Lambda}$ for each $\theta\in\Theta$.
\end{proof}

\begin{theorem}
\label{th:equi-length}Suppose that \ $\left\Vert u_{\theta}\right\Vert
=\left\Vert v_{\theta}\right\Vert $ for each $\theta\in\Theta$. Then there is
a CP subunital map $\Lambda$ satisfying (\ref{convert}) exists if and only if
they are unitary equivalent.
\end{theorem}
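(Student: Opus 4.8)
The plan is to prove both implications, with the easy direction first. If $\mathcal{U}$ and $\mathcal{V}$ are unitarily equivalent, say $\left|v_\theta\right\rangle = U\left|u_\theta\right\rangle$ for a unitary $U$ from (a space containing) $\mathrm{span}\,\mathcal{U}$ onto (a space containing) $\mathrm{span}\,\mathcal{V}$, then the map $\Lambda(X) := UXU^\dagger$, suitably extended to all of $\mathcal{L}(\mathcal{H})$ by composing with the (completely positive, subunital) compression onto $\mathrm{span}\,\mathcal{U}$, is CP and subunital (in fact the compression is subunital and conjugation by an isometry is unital, so the composite is subunital), and it sends $L_\theta = \left|u_\theta\right\rangle\left\langle u_\theta\right|$ to $\left|v_\theta\right\rangle\left\langle v_\theta\right| = M_\theta$. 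So existence follows. Note this direction does not use the equal-length hypothesis.

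For the nontrivial direction, suppose a CP subunital $\Lambda$ with $\Lambda(L_\theta) = M_\theta$ exists. The key tool is Lemma \ref{lem:LinM}: under the equal-length hypothesis, every $L_\theta = \left|u_\theta\right\rangle\left\langle u_\theta\right|$ lies in the multiplicative domain $\mathcal{M}_\Lambda$ of $\Lambda$. On the multiplicative domain, $\Lambda$ restricted there is a $*$-homomorphism-like object: for $A \in \mathcal{M}_\Lambda$ and arbitrary $X$ one has $\Lambda(AX) = \Lambda(A)\Lambda(X)$ and $\Lambda(XA) = \Lambda(X)\Lambda(A)$, and $\Lambda(A^\dagger A) = \Lambda(A)^\dagger\Lambda(A)$ together with $\|\Lambda\| = 1$ (this is exactly what Lemma \ref{lem:LinM} and its proof supply). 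The plan is to extract a linear isometry intertwining the $u_\theta$ and the $v_\theta$ out of these multiplicativity relations. Concretely, from $\Lambda(\left|u_\theta\right\rangle\left\langle u_\theta\right| \cdot \left|u_{\theta'}\right\rangle\left\langle u_{\theta'}\right|) = \Lambda(\left|u_\theta\right\rangle\left\langle u_\theta\right|)\Lambda(\left|u_{\theta'}\right\rangle\left\langle u_{\theta'}\right|)$ one gets $\left\langle u_\theta\right|\left.u_{\theta'}\right\rangle \, \Lambda(\left|u_\theta\right\rangle\left\langle u_{\theta'}\right|) = \left|v_\theta\right\rangle\left\langle v_\theta\right| v_{\theta'}\rangle\left\langle v_{\theta'}\right| = \left\langle v_\theta\right|\left.v_{\theta'}\right\rangle\left|v_\theta\right\rangle\left\langle v_{\theta'}\right|$; combined with an evaluation of a pairing this should force $\left\langle u_\theta\right|\left.u_{\theta'}\right\rangle = \left\langle v_\theta\right|\left.v_{\theta'}\right\rangle$ for all $\theta,\theta'$, i.e. $G_{\mathcal{U}} = G_{\mathcal{V}}$. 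Equality of Gram matrices is the standard criterion for unitary equivalence of two finite families of vectors, which gives the conclusion.

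More carefully, to pin down $\left\langle u_\theta\right|\left.u_{\theta'}\right\rangle = \left\langle v_\theta\right|\left.v_{\theta'}\right\rangle$ I would compute $\mathrm{tr}\,\Lambda(\left|u_\theta\right\rangle\left\langle u_{\theta'}\right|)$ against a suitable fixed operator, or more cleanly argue as follows: for $A, B \in \mathcal{M}_\Lambda$ with $\|\Lambda\|=1$ one has $\Lambda(AB) = \Lambda(A)\Lambda(B)$ (using the Kadison–Schwarz / multiplicative-domain identities that are already invoked in the proof of Lemma \ref{lem:LinM}), hence $\Lambda(\left|u_\theta\right\rangle\left\langle u_{\theta'}\right|) \cdot$ appropriate normalization relates the off-diagonal "matrix units" of $\mathcal{U}$ to those of $\mathcal{V}$; taking $\Lambda$ of the identity $\left|u_\theta\right\rangle\left\langle u_{\theta'}\right| \, \left|u_{\theta'}\right\rangle\left\langle u_\theta\right| = \|u_{\theta'}\|^2 \left|u_\theta\right\rangle\left\langle u_\theta\right|$ and using multiplicativity on the left factor (which is a product of a multiplicative-domain element times arbitrary) yields a relation forcing $|\left\langle u_\theta\right|\left.u_{\theta'}\right\rangle|$, and then phases are handled by a rescaling of the basis vectors, giving $G_{\mathcal{U}} = G_{\mathcal{V}}$ up to the diagonal rescaling that is already equal by hypothesis. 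Once $G_{\mathcal{U}} = G_{\mathcal{V}}$, pick any orthonormal basis and write $\left|u_\theta\right\rangle = S e_\theta$, $\left|v_\theta\right\rangle = T e_\theta$ with $S^\dagger S = G_{\mathcal{U}} = G_{\mathcal{V}} = T^\dagger T$; then $U := T(S^\dagger S)^{-1/2} \cdot (S^\dagger S)^{1/2} S^{-1}$ — more simply, the polar parts — define a unitary with $U\left|u_\theta\right\rangle = \left|v_\theta\right\rangle$, so $\mathcal{U}$ and $\mathcal{V}$ are unitarily equivalent.

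I expect the main obstacle to be the bookkeeping around the multiplicative domain: making precise the claim that $\Lambda|_{\mathcal{M}_\Lambda}$ is multiplicative in the mixed sense $\Lambda(AX)=\Lambda(A)\Lambda(X)$ when only $A$ (not $X$) is in $\mathcal{M}_\Lambda$, and checking that this is legitimate for a merely \emph{subunital} (not unital) CP map of norm $1$ — one wants the Kadison–Schwarz inequality $\Lambda(X^\dagger X) \geq \Lambda(X)^\dagger \Lambda(X)$, which holds for CP maps with $\|\Lambda(I)\| \le 1$, and then the usual argument that equality on $\mathcal{M}_\Lambda$ propagates to the product formulas. The other delicate point is handling the phases/$\mathrm{span}$ issue so that the final unitary is genuinely well defined and independent of the (possibly non-orthogonal, possibly non-normalized) vectors $u_\theta$; here the equal-length hypothesis is what rescues us, since it guarantees the diagonal of $G_{\mathcal{U}}$ and $G_{\mathcal{V}}$ already agree, so only the off-diagonal entries need to be matched. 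Everything else — the converse direction, and the extraction of $U$ from equal Gram matrices — is routine.
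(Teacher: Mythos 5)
Your overall strategy is the same as the paper's first proof (Lemma\thinspace\ref{lem:LinM} plus the multiplicative-domain identities of Lemma\thinspace\ref{lem:multiplicative}), and your easy direction is fine, but the decisive step has a genuine gap: you cannot conclude $G_{\mathcal{U}}=G_{\mathcal{V}}$. That equality is simply not forced by the hypotheses: take $v_{\theta}=e^{i\phi_{\theta}}u_{\theta}$ and $\Lambda=\mathrm{id}$; then $\Lambda$ is CP unital, $\Lambda(L_{\theta})=M_{\theta}$, $\Vert u_{\theta}\Vert=\Vert v_{\theta}\Vert$, yet $G_{\mathcal{U}}\neq G_{\mathcal{V}}$ in general. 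So the best your computations can yield directly is entrywise equality of moduli, $\vert\langle u_{\theta}\vert u_{\theta'}\rangle\vert=\vert\langle v_{\theta}\vert v_{\theta'}\rangle\vert$, and the remark that "phases are handled by a rescaling of the basis vectors" is exactly the point that needs a proof: equal moduli of pairwise inner products do \emph{not} imply unitary equivalence up to rephasing, because rephasing-invariant quantities such as the triple products $\langle u_{1}\vert u_{2}\rangle\langle u_{2}\vert u_{3}\rangle\langle u_{3}\vert u_{1}\rangle$ can differ between two families whose overlap moduli all agree. (A smaller point: your use of $\vert u_{\theta}\rangle\langle u_{\theta'}\vert\in\mathcal{M}_{\Lambda}$ needs $\langle u_{\theta}\vert u_{\theta'}\rangle\neq0$, as in the paper.)

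What closes this gap in the paper is the phase information you discard. Writing $\Lambda(\vert u_{\theta}\rangle\langle u_{\theta'}\vert)=H_{\theta,\theta'}\vert v_{\theta}\rangle\langle v_{\theta'}\vert$ with $H_{\theta,\theta'}=\sum_{i}\alpha_{\theta,i}\overline{\alpha_{\theta',i}}$ coming from the Kraus operators (so $H\geq0$ with unit diagonal, as in (\ref{Luu=Hvv}) and (\ref{H-2})), the multiplicative-domain computations give the two relations (\ref{uuH=vv}) and (\ref{uu=vvH}) with the \emph{same} $H$; these are exactly the Hadamard-product conditions of Theorem\thinspace\ref{th:CJW} in both directions, and Theorem\thinspace\ref{th:CJW-2} then delivers unitary equivalence (which, note, is equivalence of the families only up to the inherent phase freedom, not literal equality of Gram matrices). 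Equivalently, one can argue directly that $H\geq0$, $H_{\theta,\theta}=1$ and $\vert H_{\theta,\theta'}\vert=1$ wherever $\langle u_{\theta}\vert u_{\theta'}\rangle\neq0$ force $H_{\theta,\theta'}=e^{i(\phi_{\theta'}-\phi_{\theta})}$ consistently, which is the rigorous form of your "rescaling". Without introducing this positive semidefinite $H$ (or some equivalent control of the phases), the final step of your argument does not go through.
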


\begin{proof}
By Lemma\thinspace\ref{lem:LinM},
\begin{align}
\left\langle u_{\theta}\right\vert \left.  u_{\theta^{\prime}}\right\rangle
\Lambda\left(  \left\vert u_{\theta}\right\rangle \left\langle u_{\theta
^{\prime}}\right\vert \right)   &  =\Lambda\left(  \left\vert u_{\theta
}\right\rangle \left\langle u_{\theta}\right\vert \cdot\left\vert
u_{\theta^{\prime}}\right\rangle \left\langle u_{\theta^{\prime}}\right\vert
\right)  =\Lambda\left(  \left\vert u_{\theta}\right\rangle \left\langle
u_{\theta}\right\vert \right)  \Lambda\left(  \left\vert u_{\theta^{\prime}%
}\right\rangle \left\langle u_{\theta^{\prime}}\right\vert \right) \nonumber\\
&  =\left\vert v_{\theta}\right\rangle \left\langle v_{\theta}\right\vert
\cdot\left\vert v_{\theta^{\prime}}\right\rangle \left\langle v_{\theta
^{\prime}}\right\vert =\left\langle v_{\theta}\right\vert \left.
v_{\theta^{\prime}}\right\rangle \left\vert v_{\theta}\right\rangle
\left\langle v_{\theta^{\prime}}\right\vert , \label{uuLuu=vvvv}%
\end{align}
Also, (\ref{v=wu}) leads to
\begin{equation}
\Lambda\left(  \left\vert u_{\theta}\right\rangle \left\langle u_{\theta
^{\prime}}\right\vert \right)  =\sum_{i}\alpha_{\theta,i}\overline
{\alpha_{\theta^{\prime},i}}\left\vert v_{\theta}\right\rangle \left\langle
v_{\theta^{\prime}}\right\vert =H_{\theta,\theta^{\prime}}\left\vert
v_{\theta}\right\rangle \left\langle v_{\theta^{\prime}}\right\vert .
\label{Luu=Hvv}%
\end{equation}
Inserting (\ref{Luu=Hvv}) to (\ref{uuLuu=vvvv}) and equating the coefficients,
we have
\begin{equation}
\left\langle u_{\theta}\right\vert \left.  u_{\theta^{\prime}}\right\rangle
H_{\theta,\theta^{\prime}}=\left\langle v_{\theta}\right\vert \left.
v_{\theta^{\prime}}\right\rangle . \label{uuH=vv}%
\end{equation}

Suppose
\[
\left\langle u_{\theta}\right\vert \left.  u_{\theta^{\prime}}\right\rangle
\neq0.
\]
Then, by Lemma\thinspace\ref{lem:multiplicative},
\[
\left\vert u_{\theta}\right\rangle \left\langle u_{\theta^{\prime}}\right\vert
=\frac{1}{\left\langle u_{\theta}\right\vert \left.  u_{\theta^{\prime}%
}\right\rangle }\left\vert u_{\theta}\right\rangle \left\langle u_{\theta
}\right\vert \cdot\left\vert u_{\theta^{\prime}}\right\rangle \left\langle
u_{\theta^{\prime}}\right\vert \in\mathcal{M}_{\Lambda}.
\]
Therefore,%
\begin{align*}
\left\langle u_{\theta}\right\vert \left.  u_{\theta^{\prime}}\right\rangle
\left\vert v_{\theta}\right\rangle \left\langle v_{\theta}\right\vert  &
=\left\langle u_{\theta}\right\vert \left.  u_{\theta^{\prime}}\right\rangle
\Lambda^{\ast}\left(  \left\vert u_{\theta}\right\rangle \left\langle
u_{\theta}\right\vert \right) \\
&  =\Lambda\left(  \left\vert u_{\theta}\right\rangle \left\langle u_{\theta
}\right\vert \cdot\left\vert u_{\theta^{\prime}}\right\rangle \left\langle
u_{\theta}\right\vert \right)  =\Lambda\left(  \left\vert u_{\theta
}\right\rangle \left\langle u_{\theta}\right\vert \right)  \Lambda^{\ast
}\left(  \left\vert u_{\theta^{\prime}}\right\rangle \left\langle u_{\theta
}\right\vert \right) \\
&  =\left\vert v_{\theta}\right\rangle \left\langle v_{\theta}\right\vert
\Lambda\left(  \left\vert u_{\theta^{\prime}}\right\rangle \left\langle
u_{\theta}\right\vert \right)  .
\end{align*}
Inserting (\ref{Luu=Hvv}) into the above equation and equating the
coefficients, we obtain
\begin{equation}
\left\langle u_{\theta}\right\vert \left.  u_{\theta^{\prime}}\right\rangle
=\left\langle v_{\theta}\right\vert \left.  v_{\theta^{\prime}}\right\rangle
H_{\theta^{\prime},\theta}. \label{uu=vvH}%
\end{equation}
On the other hand, if $\left\langle u_{\theta}\right\vert \left.
u_{\theta^{\prime}}\right\rangle =0$, $\left\langle v_{\theta}\right\vert
\left.  v_{\theta^{\prime}}\right\rangle =0$\ by (\ref{uuH=vv}). Thus,
(\ref{uu=vvH}) holds in this case, too.

By Theorem\thinspace\ref{th:CJW}, (\ref{uuH=vv}) and (\ref{uu=vvH}) means that
$\mathcal{U}$ and $\mathcal{V}$ are convertible by CP trace preserving maps
back and forth. Therefore, by Theorem\thinspace\ref{th:CJW}, we have the assertion.
\end{proof}

Using theory of operator algebra more intensively, we give another proof of
Theorem\thinspace\ref{th:equi-length} below. Denote by $\left[  \left[
\mathcal{\hat{E}}\right]  \right]  $ and $\left[  \left[  \mathcal{\hat{F}%
}\right]  \right]  $ the *-algebra generated by $\mathcal{\hat{E}=}\left\{
L_{\theta}\right\}  _{\theta\in\Theta}$ and $\mathcal{\hat{F}=}\left\{
M_{\theta}\right\}  _{\theta\in\Theta}$, respectively. Since each of them is a
finite dimensional representation of a finite dimensional $C^{\ast}$-algebra,
by Lemma\thinspace\ref{lem:finite-dim}, for some unitary operators $U_{1}$ and
$U_{2}$,
\begin{align*}
\left[  \left[  \mathcal{\hat{E}}\right]  \right]   &  =U_{1}\bigoplus
_{n\in\mathbb{N}}\mathcal{L}\left(  \mathbb{C}^{n}\right)  \otimes I_{d_{1,n}%
}U_{1}^{\dagger}\,\,,\\
\left[  \left[  \mathcal{\hat{F}}\right]  \right]  \,  &  =U_{2}%
\,\bigoplus_{n\in\mathbb{N}}\mathcal{L}\left(  \mathbb{C}^{n}\right)  \otimes
I_{d_{2,n}}U_{2}^{\dagger}.
\end{align*}
Here, we used the convention that $d_{1,n}=0$ in $\left[  \left[
\mathcal{\hat{E}}\right]  \right]  $ does not have a component isomorphic to
$\mathcal{L}\left(  \mathbb{C}^{n}\right)  $. $P_{1,n}$ and $P_{2,n}$ denotes
the projection onto the subspace \ $U_{1}\left(  \mathbb{C}^{n}\otimes
I_{d_{1,n}}\right)  $ and $U_{2}\left(  \mathbb{C}^{n}\otimes I_{d_{2,n}%
}\right)  $, respectively.

\begin{proof}
Since each $L_{\theta}$ is a rank-1 operator, $d_{1,n}$ is $0$ or 1, for all
$n$. So
\[
\left[  \left[  \mathcal{\hat{E}}\right]  \right]  =U_{1}\bigoplus
_{n\in\mathbb{N},d_{1,n}\neq0}\mathcal{L}\left(  \mathbb{C}^{n}\right)
U_{1}^{\dagger},\,\,
\]
By (\ref{convert}) and Lemmas \thinspace\ref{lem:LinM}, $\Lambda^{\ast}$ is
*-homomorphism from $\left[  \left[  \mathcal{\hat{E}}\right]  \right]  $ onto
$\left[  \left[  \mathcal{\hat{F}}\right]  \right]  \,$, or is a
representation of $\left[  \left[  \mathcal{\hat{E}}\right]  \right]  $.
Hence, by Lemma\thinspace\ref{lem:finite-dim},

By Lemma\thinspace\ref{lem:LinM} and (\ref{convert}), $\Lambda^{\ast}$ is
*-homomorphism from $\left[  \left[  \mathcal{\hat{E}}\right]  \right]  $ onto
$\left[  \left[  \mathcal{\hat{F}}\right]  \right]  \,$, and thus, it is a
representation of $\left[  \left[  \mathcal{\hat{E}}\right]  \right]  $ on the
finite dimensional Hilbert space $\mathcal{K}$. Therefore, by Lemma\thinspace
\ref{lem:finite-dim}, the restriction of $\Lambda^{\ast}$ to $\left[  \left[
\mathcal{\hat{E}}\right]  \right]  $ is unitary equivalent to
\[
\bigoplus_{n\in\mathbb{N},d_{1,n}\neq0}\mathbf{I}_{\mathbb{C}^{n}}^{\left(
d_{2,n}\right)  }\text{.}%
\]
In fact, as is shown below, $d_{2,n}\neq0$ if $d_{1,n}\neq0$.

Suppose there is an $n_{0}$ such that \ $d_{1,n_{0}}\neq0$ and $d_{2,n}=0$.
Observe that $\mathrm{rank}\,L_{\theta}=1$ for all $\theta$ implies that
$P_{1,n}L_{\theta}P_{1,n}\neq0$ holds only for a single $n$ for each $\theta$.
Therefore, there is at least one $\theta$ such that $L_{\theta}\in$
$U_{1}\mathcal{L}\left(  \mathbb{C}^{n_{0}}\right)  U_{1}^{\dagger}$. Since
$n_{0}\notin N_{\Lambda^{\ast}}$ means $\Lambda^{\ast}\left(  L_{\theta
}\right)  =M_{\theta}=0$, we have contradiction. Therefore, $d_{2,n}\neq0$ if
$d_{1,n}\neq0$.

Finally, observe that $d_{2,n}$ is 0 or 1 by the same reason as $d_{1,n}$ is 0
or 1. This means that restriction of $\Lambda^{\ast}$ to $\mathcal{\hat{E}}$
is unitary equivalent to the identity operator. Therefore, $\mathcal{\hat{E}}$
and $\mathcal{\hat{F}}$ are unitary equivalent.
\end{proof}

\section{Projectors}

Modifying the second proof of the above theorem slightly, we obtain similar
result for the case where $\mathrm{rank}\,L_{\theta}=1$ and $M_{\theta}$ is a
constant multiple of a projector.

Let us divide $\mathcal{\hat{E}}$ into $\bigcup_{\kappa}\mathcal{\hat{E}%
}_{\kappa}$, so that the following conditions are satisfied;there is a
sequence $\theta=\theta_{1}$, $\theta_{2}$,$\cdots$,$\theta_{k}=\theta
^{\prime}$ with $L_{\theta_{i}}L_{\theta_{i+1}}\neq0$, $i=1,\cdots,k$ for any
$L_{\theta},\,L_{\theta^{\prime}}\in\mathcal{\hat{E}}_{\kappa}$, while
$L_{\theta}L_{\theta^{\prime}}=0$ for any $L_{\theta}\in\mathcal{\hat{E}%
}_{\kappa}$ and $\,L_{\theta^{\prime}}\in\mathcal{\hat{E}}_{\kappa^{\prime}}$
. Then,
\[
\left[  \left[  \mathcal{\hat{E}}\right]  \right]  =\bigoplus_{\kappa}\left[
\left[  \mathcal{\hat{E}}_{\kappa}\right]  \right]  .
\]
It is easy to check that $\left[  \left[  \mathcal{\hat{E}}_{\kappa}\right]
\right]  $ does not break into direct sum of smaller subalgebras. In fact, by
the first condition, if $\left\vert u_{\theta}\right\rangle \left\langle
u_{\theta}\right\vert $, $\left\vert u_{\theta^{\prime}}\right\rangle
\left\langle u_{\theta^{\prime}}\right\vert $ are the elements of $\left[
\left[  \mathcal{\hat{E}}_{\kappa}\right]  \right]  $, so are $\left\vert
u_{\theta}\right\rangle \left\langle u_{\theta^{\prime}}\right\vert $ and
$\left\vert u_{\theta^{\prime}}\right\rangle \left\langle u_{\theta
}\right\vert $. Thus, $\left[  \left[  \mathcal{\hat{E}}_{\kappa}\right]
\right]  $ is nothing but the linear operators on the space spanned by
$\mathcal{U}=\left\{  u_{\theta}\right\}  _{\theta\in\Theta}$.

\begin{proposition}
Suppose that $L_{\theta}=\left\vert u_{\theta}\right\rangle \left\langle
u_{\theta}\right\vert $, $M_{\theta}$ is a constant multiple of projector, and
$\left\Vert L_{\theta}\right\Vert =\left\Vert M_{\theta}\right\Vert $ for each
$\theta$. Then, there is a CP subunital map $\Lambda$ satisfying
(\ref{convert}) exists if and only if there is a system $\mathcal{V}=\left\{
v_{\theta}\right\}  _{\theta\in\Theta}$ of vectors such that $\mathcal{V}$ is
unitary equivalent to $\mathcal{U}=\left\{  u_{\theta}\right\}  _{\theta
\in\Theta}$ and that $M_{\theta}=\left\vert v_{\theta}\right\rangle
\left\langle v_{\theta}\right\vert \otimes I_{d_{\kappa}}$ ($L_{\theta}%
\in\mathcal{\hat{E}}_{\kappa}$).
\end{proposition}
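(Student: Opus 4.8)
The plan is to mimic the second proof of Theorem~\ref{th:equi-length}, but now accounting for the fact that each $M_\theta$ is a constant multiple of a projector of arbitrary rank rather than a rank-one operator. First I would dispose of the ``if'' direction: if such a system $\mathcal V$ exists with $\mathcal V$ unitary equivalent to $\mathcal U$ via some unitary $W$ (from $\mathrm{span}\,\mathcal U$ into $\mathcal K$), then on each block $\mathcal{\hat E}_\kappa$ the map $X\mapsto (WXW^\dagger)\otimes I_{d_\kappa}$ is a $*$-homomorphism sending $|u_\theta\rangle\langle u_\theta|$ to $|v_\theta\rangle\langle v_\theta|\otimes I_{d_\kappa}=M_\theta$; assembling these over $\kappa$ and embedding into $\mathcal L(\mathcal K)$ gives a (sub)unital $*$-homomorphism, which is CP, proving one direction. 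Here I would use the block decomposition $[[\mathcal{\hat E}]]=\bigoplus_\kappa [[\mathcal{\hat E}_\kappa]]$ established just before the statement, together with the observation there that each $[[\mathcal{\hat E}_\kappa]]$ is the full operator algebra on $\mathrm{span}\,\mathcal U_\kappa$.

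For the ``only if'' direction, I would first invoke Lemma~\ref{lem:LinM}: since $\|L_\theta\|=\|M_\theta\|$, each $L_\theta\in\mathcal M_\Lambda$, the multiplicative domain, so $\Lambda^\ast$ restricted to $[[\mathcal{\hat E}]]$ is a $*$-homomorphism. As in the second proof above, because $\mathrm{rank}\,L_\theta=1$ the algebra $[[\mathcal{\hat E}]]$ decomposes (via $U_1$) as $\bigoplus_{n:\,d_{1,n}\neq0}\mathcal L(\mathbb C^n)$, and each block $\mathcal L(\mathbb C^n)$ is exactly one of the $[[\mathcal{\hat E}_\kappa]]$. A $*$-homomorphism (representation) of $\bigoplus_n \mathcal L(\mathbb C^n)$ on $\mathcal K$ is, by Lemma~\ref{lem:finite-dim}, unitarily equivalent to $\bigoplus_{n}\mathbf I_{\mathbb C^n}^{(d_{2,n})}$, i.e.\ on each surviving block it is a multiplicity-$d_{2,n}$ ampliation $X\mapsto X\otimes I_{d_{2,n}}$. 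Exactly as in the earlier proof, $d_{2,n}\neq0$ whenever $d_{1,n}\neq0$, since otherwise $\Lambda^\ast(L_\theta)=M_\theta=0$ for some $\theta$ in that block, contradicting $\|M_\theta\|=\|L_\theta\|>0$. Now define $W$ to be the unitary implementing this equivalence on $\mathrm{span}\,\mathcal U$ (block by block, $W$ maps $\mathbb C^n$ into $\mathbb C^n\otimes\mathbb C^{d_{2,n}}$ as the ``diagonal'' copy), set $d_\kappa:=d_{2,n(\kappa)}$, and put $|v_\theta\rangle:=W|u_\theta\rangle$ for $L_\theta\in\mathcal{\hat E}_\kappa$. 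Then $\mathcal V$ is unitary equivalent to $\mathcal U$ and $\Lambda^\ast(|u_\theta\rangle\langle u_\theta|)=|v_\theta\rangle\langle v_\theta|\otimes I_{d_\kappa}$, which is $M_\theta$; this is the asserted form.

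The main obstacle I anticipate is making precise the identification ``each $\mathcal L(\mathbb C^n)$ block of $[[\mathcal{\hat E}]]$ equals some $[[\mathcal{\hat E}_\kappa]]$'', and then reading off the correct ampliation structure on the $\mathcal K$ side so that the resulting operators genuinely have the tensor form $|v_\theta\rangle\langle v_\theta|\otimes I_{d_\kappa}$ rather than merely being unitarily equivalent to projectors of the right rank. One must check that the decomposition of $\mathcal K$ supplied by Lemma~\ref{lem:finite-dim} is compatible across all $\theta$ in a common block $\kappa$ --- i.e.\ that the same ampliation multiplicity $d_{2,n}$ and the same unitary $W$ work simultaneously for every $\theta$ with $L_\theta\in\mathcal{\hat E}_\kappa$ --- which follows because these $L_\theta$ all lie in the single simple summand $\mathcal L(\mathbb C^n)$ and $\Lambda^\ast$ is a fixed $*$-homomorphism on it. A minor additional point is to confirm that the condition $\|M_\theta\|=\|L_\theta\|$ is genuinely used: it forces $d_{2,n}\ge1$ on surviving blocks and, combined with $M_\theta$ being a constant multiple of a projector, pins the constant so that $M_\theta=|v_\theta\rangle\langle v_\theta|\otimes I_{d_\kappa}$ with $\|v_\theta\|=\|u_\theta\|$, exactly matching the ampliation of $|u_\theta\rangle\langle u_\theta|$ under a $*$-homomorphism.
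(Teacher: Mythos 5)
Your proposal is correct and takes essentially the same route the paper intends: the Proposition is presented as a slight modification of the second proof of Theorem~\ref{th:equi-length}, namely Lemma~\ref{lem:LinM} (multiplicative domain), the decomposition $\left[\left[\mathcal{\hat{E}}\right]\right]=\bigoplus_{\kappa}\left[\left[\mathcal{\hat{E}}_{\kappa}\right]\right]$ into full matrix algebras, and the ampliation form of the representation from Lemma~\ref{lem:finite-dim}, with $d_{2,n}\neq0$ forced by $\left\Vert M_{\theta}\right\Vert =\left\Vert L_{\theta}\right\Vert >0$ --- exactly the steps you carry out, together with the easy converse. The only detail worth making explicit is that Lemma~\ref{lem:LinM} is stated for rank-one $M_{\theta}$, so one should repeat its one-line computation for $M_{\theta}=cP$ with $c=\left\Vert L_{\theta}\right\Vert$ (so that $M_{\theta}^{2}=\left\Vert u_{\theta}\right\Vert ^{2}M_{\theta}$), which your closing paragraph implicitly covers.
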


When $L_{\theta}$ is also not of rank 1, still one can state something. The
proof of the following proposition is straightforward, thus omitted.

\begin{proposition}
Suppose that $L_{\theta}$, $M_{\theta}$ is a constant multiple of projector,
and $\left\Vert L_{\theta}\right\Vert =\left\Vert M_{\theta}\right\Vert $ for
each $\theta$. Then, there is a CP subunital map $\Lambda$ satisfying
(\ref{convert}) exists if and only if
\begin{align*}
L_{\theta}  &  =U_{1}\,\bigoplus_{n\in\mathbb{N}}a_{n,\theta}I_{n}\otimes
I_{d_{1,n}}U_{1}^{\dagger},\\
M_{\theta}  &  =U_{2}\,\bigoplus_{n\in\mathbb{N}}b_{n,\theta}I_{n}\otimes
I_{d_{2,n}}U_{2}^{\dagger},
\end{align*}
where $U_{1}$ , $U_{2}$ are unitary, and $b_{n,\theta}\neq0$ and $d_{2,n}%
\neq0$ unless $a_{n,\theta}\neq0$ or $d_{1,n}=0$.
\end{proposition}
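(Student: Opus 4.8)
The plan is to run the operator-algebra argument from the second proof of Theorem~\ref{th:equi-length}, since a constant multiple of a projector is the natural generalization of a rank-one operator that still sits inside a $*$-algebra. The one genuinely new ingredient is a projector version of Lemma~\ref{lem:LinM}: writing $L_\theta=c_\theta P_\theta$ and $M_\theta=c_\theta Q_\theta$ with $P_\theta,Q_\theta$ projectors and $c_\theta=\|L_\theta\|=\|M_\theta\|$ (indices with $c_\theta=0$ being trivial, since then $M_\theta=\Lambda(0)=0$), I claim any CP subunital $\Lambda$ realizing (\ref{convert}) has every $L_\theta$ in its multiplicative domain $\mathcal{M}_\Lambda$. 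First I would use Lemma~\ref{lem:L-norm} to get $\|\Lambda\|\le 1$, observe $\|\Lambda(L_\theta)\|=\|M_\theta\|=\|L_\theta\|$ forces $\|\Lambda\|=1$, and then compute
\[
\Lambda(L_\theta L_\theta)=c_\theta^{2}\Lambda(P_\theta)=c_\theta\Lambda(L_\theta)=c_\theta M_\theta=c_\theta^{2}Q_\theta=M_\theta M_\theta=\Lambda(L_\theta)\Lambda(L_\theta),
\]
so that Lemma~\ref{lem:multiplicative} yields $L_\theta\in\mathcal{M}_\Lambda$ (hence $P_\theta\in\mathcal{M}_\Lambda$) for every $\theta$.

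For the ``only if'' direction I would then argue as follows. Since $\mathcal{M}_\Lambda$ is a $*$-subalgebra containing every $L_\theta$, it contains $[[\mathcal{\hat{E}}]]$, so $\Lambda$ restricted to $[[\mathcal{\hat{E}}]]$ is a $*$-homomorphism, i.e. a finite-dimensional $*$-representation of $[[\mathcal{\hat{E}}]]$ on $\mathcal{K}$. Writing $[[\mathcal{\hat{E}}]]=U_1\bigoplus_n\mathcal{L}(\mathbb{C}^n)\otimes I_{d_{1,n}}U_1^\dagger$ and applying Lemma~\ref{lem:finite-dim}, this representation is unitarily equivalent, via a unitary $U_2$ on $\mathcal{K}$, to $\bigoplus_n\mathbf{I}_{\mathbb{C}^n}^{\left(d_{2,n}\right)}$, the $n$-th summand being annihilated exactly when $d_{2,n}=0$. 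Expanding each $L_\theta$ along this central decomposition produces the asserted normal form for $L_\theta$, and applying $\Lambda$ block by block produces the normal form for $M_\theta$ with $b_{n,\theta}=a_{n,\theta}$ on the surviving summands. The non-vanishing clause is then bookkeeping: if $d_{1,n}\neq0$ and some $L_\theta$ acts non-trivially on the $n$-th summand, annihilating that summand would force $\|M_\theta\|<\|L_\theta\|$, contradicting the hypothesis, so on every such summand $d_{2,n}\neq0$ and $b_{n,\theta}\neq0$.

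For the converse I would build $\Lambda$ by hand from the two normal forms: conjugate the input by $U_1^\dagger$, restrict to the $n$-th block $\mathbb{C}^n\otimes\mathbb{C}^{d_{1,n}}$, compress the factor $\mathbb{C}^{d_{1,n}}$ onto a unit vector and tensor with $I_{d_{2,n}}$ (dropping the summands with $d_{2,n}=0$), and finally conjugate by $U_2$. This map is manifestly CP; it is subunital because $\Lambda(I_\mathcal{H})$ is the $U_2$-conjugate of the projection onto $\bigoplus_n\mathbb{C}^n\otimes\mathbb{C}^{d_{2,n}}$; and by construction it sends $U_1(\bigoplus_n a_{n,\theta}I_n\otimes I_{d_{1,n}})U_1^\dagger$ to $U_2(\bigoplus_n a_{n,\theta}I_n\otimes I_{d_{2,n}})U_2^\dagger=M_\theta$, using $b_{n,\theta}=a_{n,\theta}$ and the non-vanishing clause. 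The step I expect to be the real work — everything else being either inherited from the proof of Theorem~\ref{th:equi-length} or mechanical — is the multiplicative-domain membership in the first paragraph, i.e. passing from $\|L_\theta\|=\|M_\theta\|$ and the projector identity to $L_\theta\in\mathcal{M}_\Lambda$; once that is in place, the representation theory and the block bookkeeping go through routinely.
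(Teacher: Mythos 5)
The paper itself offers no proof to compare with (it declares the proposition ``straightforward'' and omits it), so the only question is whether your argument stands on its own. Your overall route --- a projector version of Lemma~\ref{lem:LinM}, so that $\Lambda$ restricted to $\left[ \left[ \mathcal{\hat{E}}\right] \right] $ is a $\ast$-homomorphism, followed by Lemma~\ref{lem:finite-dim} --- is surely what the author intends, and your multiplicative-domain step is correct: $\left\Vert \Lambda\right\Vert \leq1$ by Lemma~\ref{lem:L-norm}, norm preservation forces $\left\Vert \Lambda\right\Vert =1$, and $\Lambda\left(  L_{\theta}^{2}\right)  =c_{\theta}M_{\theta}=\Lambda\left(  L_{\theta}\right)  ^{2}$ puts $L_{\theta}\in\mathcal{M}_{\Lambda}$ via Lemma~\ref{lem:multiplicative}. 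The genuine gap is the sentence ``expanding each $L_{\theta}$ along this central decomposition produces the asserted normal form.'' Membership of $L_{\theta}$ in $\left[ \left[ \mathcal{\hat{E}}\right] \right] =U_{1}\bigoplus_{n}\mathcal{L}\left(  \mathbb{C}^{n}\right)  \otimes I_{d_{1,n}}U_{1}^{\dagger}$ only gives $L_{\theta}=U_{1}\bigoplus_{n}A_{n,\theta}\otimes I_{d_{1,n}}U_{1}^{\dagger}$ with operator blocks $A_{n,\theta}\in\mathcal{L}\left(  \mathbb{C}^{n}\right)  $; the scalar form $a_{n,\theta}I_{n}$ asserted in the proposition says that each $L_{\theta}$ is a multiple of a \emph{central} projection of $\left[ \left[ \mathcal{\hat{E}}\right] \right] $, equivalently that the $L_{\theta}$ commute pairwise, and nothing in the hypotheses yields this. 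The step is not merely unjustified but false: take two non-commuting rank-one projectors $L_{1},L_{2}$, set $M_{\theta}=L_{\theta}$ and $\Lambda=\mathrm{id}$; all hypotheses hold and a CP subunital map exists, yet $L_{1},L_{2}$ admit no simultaneous block-scalar form. So the ``only if'' direction as literally printed cannot be proved; what your machinery actually delivers (and what is presumably the intended content, in line with the preceding proposition for rank-one $L_{\theta}$) is the operator-block version, with $B_{n,\theta}$ unitarily matching $A_{n,\theta}$ on the surviving summands, or else the stated form under an added hypothesis that each family commutes. You should have flagged this rather than asserting the scalar form.

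Smaller problems in the same vein: in the converse you silently use $b_{n,\theta}=a_{n,\theta}$ on the summands you keep, which is not part of the printed condition (it nearly follows from $\left\Vert L_{\theta}\right\Vert =\left\Vert M_{\theta}\right\Vert $, but only because a non-zero multiple of a projector has a single non-zero eigenvalue, and only where both coefficients are non-zero); in the ``only if'' bookkeeping, annihilating a central summand on which some $L_{\theta}$ is non-zero does \emph{not} force $\left\Vert M_{\theta}\right\Vert <\left\Vert L_{\theta}\right\Vert $ --- the norm survives as long as one summand meeting the support of $L_{\theta}$ survives, and unlike the rank-one case a projector can straddle several summands, so norm preservation only forbids killing all of them; and Lemma~\ref{lem:finite-dim} concerns non-degenerate representations, so you must pass to the essential subspace of the restriction of $\Lambda$ to $\left[ \left[ \mathcal{\hat{E}}\right] \right] $ and pad $M_{\theta}$ with a zero block on its orthocomplement in $\mathcal{K}$ before the claimed normal form covers all of $\mathcal{K}$.
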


\bigskip

\section{From rank-1 operators to arbitrary operators}

\begin{theorem}
Suppose $\mathcal{U}=\left\{  u_{\theta}\right\}  _{\theta\in\Theta}$ is
linearly independent. Then there is a CP subunital map $\Lambda$ with
(\ref{convert}) exists if and only if there are operators $M_{\theta
,\theta^{\prime}}\in\mathcal{L}\left(  \mathcal{K}\right)  $ ($\theta$%
,$\theta^{\prime}\in\Theta$) such that
\begin{align}
\sum_{\theta,\theta^{\prime}\in\Theta}M_{\theta,\theta^{\prime}}%
\otimes\left\vert e_{\theta}\right\rangle \left\langle e_{\theta^{\prime}%
}\right\vert  &  \geq0,\,M_{\theta,\theta}=M_{\theta}\label{M-1}\\
\sum_{\theta,\theta^{\prime}\in\Theta}M_{\theta,\theta^{\prime}}\left(
G_{\mathcal{U}}^{-1}\right)  _{\theta,\theta^{\prime}}  &  \leq I_{\mathcal{K}%
}, \label{M-2}%
\end{align}
where $\left\{  e_{\theta}\right\}  _{\theta\in\Theta}$ is an orthonormal
system of vectors spanning a Hilbert space $\mathcal{K}^{\prime}$.

Also, there is a CP unital map $\Lambda$ with (\ref{convert}) exists if and
only if there are operators $M_{\theta,\theta^{\prime}}\in\mathcal{L}\left(
\mathcal{K}\right)  $ ($\theta$,$\theta^{\prime}\in\Theta$) with (\ref{M-1})
and
\begin{equation}
\sum_{\theta,\theta^{\prime}\in\Theta}M_{\theta,\theta^{\prime}}\left(
G_{\mathcal{U}}^{-1}\right)  _{\theta,\theta^{\prime}}=I_{\mathcal{K}}.
\label{M-3}%
\end{equation}

\end{theorem}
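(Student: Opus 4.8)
The plan is to parametrize the map $\Lambda$ by its Kraus operators and translate the constraints (\ref{convert}) and (sub)unitality into the stated semidefinite conditions, just as in the proof of Theorem~\ref{th:1-dim-ineq}, but now allowing the target operators $M_\theta$ to have arbitrary rank. Since $\mathcal{U}$ is linearly independent we may assume $\mathcal{H}=\mathrm{span}\,\mathcal{U}$, so $d=|\Theta|$ and $[\mathcal{U}]$ (the matrix with $\theta$-th column $|u_\theta\rangle$) is square and invertible. Write $\Lambda(X)=\sum_i W_i X W_i^\dagger$. The condition $\Lambda(|u_\theta\rangle\langle u_\theta|)=M_\theta$ does \emph{not} force each $W_i|u_\theta\rangle$ to be proportional to a single vector (that was special to rank-1 targets), so instead I would introduce the operators
\[
M_{\theta,\theta'}:=\Lambda\bigl(|u_\theta\rangle\langle u_{\theta'}|\bigr)=\sum_i W_i|u_\theta\rangle\langle u_{\theta'}|W_i^\dagger,
\]
which automatically satisfy $M_{\theta,\theta}=M_\theta$. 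The key observation is that a family $\{M_{\theta,\theta'}\}$ arises this way from \emph{some} completely positive $\Lambda$ if and only if the block matrix $\sum_{\theta,\theta'}M_{\theta,\theta'}\otimes|e_\theta\rangle\langle e_{\theta'}|$ is positive semidefinite; one direction is immediate (it equals $\sum_i (\text{vectorized }W_i)(\cdots)^\dagger$ after writing $W_i=\sum_\theta |w_{i,\theta}\rangle\langle u_\theta^\uparrow|$), and the converse is the Choi/Kraus decomposition of that positive block matrix. This gives (\ref{M-1}).

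Next I would compute $\Lambda^*(I_{\mathcal{H}})=\sum_i W_i^\dagger W_i$ in terms of the $M_{\theta,\theta'}$. Expanding $W_i$ in the dual basis, $W_i=\sum_\theta |w_{i,\theta}\rangle\langle u_\theta^\uparrow|$ where $|w_{i,\theta}\rangle = W_i|u_\theta\rangle$... wait, more carefully: the adjoint acts on $\mathcal{L}(\mathcal{K})$, so I should instead compute $\sum_i W_i W_i^\dagger = \sum_i W_i\bigl(\sum_{\theta,\theta'}(G_{\mathcal{U}}^{-1})_{\theta,\theta'}|u_\theta\rangle\langle u_{\theta'}|\bigr)W_i^\dagger = \sum_{\theta,\theta'}(G_{\mathcal{U}}^{-1})_{\theta,\theta'}M_{\theta,\theta'}$, using the resolution of the identity on $\mathcal{H}$ in terms of the Gram matrix inverse, namely $I_{\mathcal{H}}=[\mathcal{U}]^{-1\dagger}[\mathcal{U}]^{-1}\cdot(\text{...})$; concretely $I_{\mathcal{H}}=\sum_{\theta,\theta'}(G_{\mathcal{U}}^{-1})_{\theta,\theta'}|u_\theta\rangle\langle u_{\theta'}|$. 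Hence $\Lambda(I_{\mathcal{H}})=\sum_{\theta,\theta'}(G_{\mathcal{U}}^{-1})_{\theta,\theta'}M_{\theta,\theta'}$, so subunitality $\Lambda(I_{\mathcal{H}})\le I_{\mathcal{K}}$ is exactly (\ref{M-2}) and unitality is exactly (\ref{M-3}). (One must be slightly careful about whether the statement's ``$\Lambda^*$'' convention means the map or its adjoint; in either reading the computation is the same Hadamard-type contraction against $G_{\mathcal{U}}^{-1}$, since the roles of $\Lambda$ and $\Lambda^*$ are symmetric for this particular trace pairing once one tracks which side carries the dual basis.)

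Putting the two halves together: given $\{M_{\theta,\theta'}\}$ with (\ref{M-1}) and (\ref{M-2}) (resp.\ (\ref{M-3})), extract Kraus operators $W_i$ from the positive block matrix in (\ref{M-1}), define $\Lambda(X)=\sum_i W_i X W_i^\dagger$; then $\Lambda(|u_\theta\rangle\langle u_{\theta'}|)=M_{\theta,\theta'}$ so in particular (\ref{convert}) holds, and by the computation above $\Lambda(I_{\mathcal{H}})\le I_{\mathcal{K}}$ (resp.\ $=I_{\mathcal{K}}$). Conversely any CP (sub)unital $\Lambda$ with (\ref{convert}) yields $M_{\theta,\theta'}:=\Lambda(|u_\theta\rangle\langle u_{\theta'}|)$ satisfying all the conditions. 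The main obstacle I anticipate is purely bookkeeping: getting the dual-basis expansions and the placement of $G_{\mathcal{U}}^{-1}$ (versus $G_{\mathcal{U}}$) consistent between the ``forward'' direction of $\Lambda$ and the unitality constraint, and making sure the block matrix in (\ref{M-1}) is the \emph{right} rearrangement (it is the Choi matrix of $\Lambda$ conjugated by $[\mathcal{U}]$, essentially) so that its positivity is genuinely equivalent to complete positivity of $\Lambda$ rather than merely necessary. There is no deep idea beyond the Choi--Jamio\l kowski correspondence plus the linear-independence normalization; once the indices are aligned the equivalence is forced.
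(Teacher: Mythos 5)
Your proposal is correct and follows essentially the same route as the paper's proof: set $M_{\theta,\theta'}:=\Lambda\left(\left\vert u_{\theta}\right\rangle \left\langle u_{\theta^{\prime}}\right\vert\right)$, characterize complete positivity by positivity of the block operator in (\ref{M-1}) (the Choi matrix conjugated by the invertible $[\mathcal{U}]$), and translate (sub)unitality into the contraction against $G_{\mathcal{U}}^{-1}$ in (\ref{M-2})--(\ref{M-3}). Your use of the resolution $I_{\mathcal{H}}=\sum_{\theta,\theta^{\prime}}\left(G_{\mathcal{U}}^{-1}\right)_{\theta,\theta^{\prime}}\left\vert u_{\theta}\right\rangle \left\langle u_{\theta^{\prime}}\right\vert$ and the dual-basis Kraus construction is just a slightly more direct rendering of the paper's partial-trace computation on the Choi matrix, not a different argument.
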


\begin{proof}
$\Lambda$ is CP if and only if
\[
\sum_{\theta,\theta^{\prime}\in\Theta}\Lambda\left(  \left\vert u_{\theta
}\right\rangle \left\langle u_{\theta^{\prime}}\right\vert \right)
\otimes\left\vert e_{\theta}\right\rangle \left\langle e_{\theta^{\prime}%
}\right\vert =\sum_{\theta,\theta^{\prime}\in\Theta}M_{\theta,\theta^{\prime}%
}\otimes\left\vert e_{\theta}\right\rangle \left\langle e_{\theta^{\prime}%
}\right\vert \geq0,
\]
where we put \
\[
M_{\theta,\theta^{\prime}}=\Lambda\left(  \left\vert u_{\theta}\right\rangle
\left\langle u_{\theta^{\prime}}\right\vert \right)  .
\]

Observe
\begin{align*}
\sum_{\theta,\theta^{\prime}\in\Theta}\Lambda\left(  \left\vert u_{\theta
}\right\rangle \left\langle u_{\theta^{\prime}}\right\vert \right)
\otimes\left\vert e_{\theta}\right\rangle \left\langle e_{\theta^{\prime}%
}\right\vert  &  =\sum_{\theta,\theta^{\prime}\in\Theta}\Lambda^{\ast}\left(
\left[  \mathcal{U}\right]  \left\vert e_{\theta}\right\rangle \left\langle
e_{\theta^{\prime}}\right\vert \left[  \mathcal{U}\right]  ^{\dagger}\right)
\otimes\left\vert e_{\theta}\right\rangle \left\langle e_{\theta^{\prime}%
}\right\vert \\
&  =\sum_{\theta,\theta^{\prime}\in\Theta}\Lambda^{\ast}\left(  \left\vert
e_{\theta}\right\rangle \left\langle e_{\theta^{\prime}}\right\vert \right)
\otimes\left[  \mathcal{U}\right]  ^{T}\left\vert e_{\theta}\right\rangle
\left\langle e_{\theta^{\prime}}\right\vert \overline{\left[  \mathcal{U}%
\right]  }.
\end{align*}
Equating this with $\sum_{\theta,\theta^{\prime}\in\Theta}M_{\theta
,\theta^{\prime}}\otimes\left\vert e_{\theta}\right\rangle \left\langle
e_{\theta^{\prime}}\right\vert $ and solving about $\sum_{\theta
,\theta^{\prime}\in\Theta}\Lambda\left(  \left\vert e_{\theta}\right\rangle
\left\langle e_{\theta^{\prime}}\right\vert \right)  \otimes\left\vert
e_{\theta}\right\rangle \left\langle e_{\theta^{\prime}}\right\vert $, we
have
\[
\sum_{\theta,\theta^{\prime}\in\Theta}\Lambda\left(  \left\vert e_{\theta
}\right\rangle \left\langle e_{\theta^{\prime}}\right\vert \right)
\otimes\left\vert e_{\theta}\right\rangle \left\langle e_{\theta^{\prime}%
}\right\vert =\sum_{\theta,\theta^{\prime}\in\Theta}M_{\theta,\theta^{\prime}%
}\otimes\left(  \left[  \mathcal{U}\right]  ^{T}\right)  ^{-1}\left\vert
e_{\theta}\right\rangle \left\langle e_{\theta^{\prime}}\right\vert \left(
\overline{\left[  \mathcal{U}\right]  }\right)  ^{-1}.
\]
Therefore, \ $\Lambda$ is subunital if and only if
\begin{align*}
I_{\mathcal{K}}  &  \geq\mathrm{tr}\,_{\mathcal{K}^{\prime}}\sum
_{\theta,\theta^{\prime}\in\Theta}\Lambda\left(  \left\vert e_{\theta
}\right\rangle \left\langle e_{\theta^{\prime}}\right\vert \right)
\otimes\left\vert e_{\theta}\right\rangle \left\langle e_{\theta^{\prime}%
}\right\vert \\
&  =\sum_{\theta,\theta^{\prime}\in\Theta}M_{\theta,\theta^{\prime}%
}\left\langle e_{\theta^{\prime}}\right\vert \left(  \overline{\left[
\mathcal{U}\right]  }\right)  ^{-1}\left(  \left[  \mathcal{U}\right]
^{T}\right)  ^{-1}\left\vert e_{\theta}\right\rangle \\
&  =\sum_{\theta,\theta^{\prime}\in\Theta}M_{\theta,\theta^{\prime}}\left(
G_{\mathcal{U}}^{-1}\right)  _{\theta,\theta^{\prime}}.
\end{align*}

The map $\Lambda$ is unital if and only if the equality in the above
inequality holds. Thus we have (\ref{M-3}).
\end{proof}

\bigskip

\begin{remark}
Suppose $M_{\theta}=\left\vert v_{\theta}\right\rangle \left\langle v_{\theta
}\right\vert $, for each $\theta\in\Theta$. Then, by (\ref{Luu=Hvv}),
(\ref{M-1}) and (\ref{M-2}) become
\begin{align*}
\sum_{\theta,\theta^{\prime}\in\Theta}H_{\theta,\theta^{\prime}}\left\vert
v_{\theta}\right\rangle \left\langle v_{\theta^{\prime}}\right\vert
\otimes\left\vert e_{\theta}\right\rangle \left\langle e_{\theta^{\prime}%
}\right\vert  &  \geq0,\\
\sum_{\theta,\theta^{\prime}\in\Theta}\left\vert v_{\theta}\right\rangle
\left\langle v_{\theta^{\prime}}\right\vert H_{\theta,\theta^{\prime}}\left(
G_{\mathcal{U}}^{-1}\right)  _{\theta,\theta^{\prime}}  &  \leq I_{\mathcal{K}%
},
\end{align*}
respectively. The first inequality is verified by
\[
\sum_{\theta,\theta^{\prime}\in\Theta}H_{\theta,\theta^{\prime}}\left\vert
v_{\theta}\right\rangle \left\langle v_{\theta^{\prime}}\right\vert
\otimes\left\vert e_{\theta}\right\rangle \left\langle e_{\theta^{\prime}%
}\right\vert =A\sum_{\theta,\theta^{\prime}\in\Theta}\left\vert v_{1}%
\right\rangle \left\langle v_{1}\right\vert \otimes H_{\theta,\theta^{\prime}%
}\left\vert e_{\theta}\right\rangle \left\langle e_{\theta^{\prime}%
}\right\vert A^{\dagger}\geq0,
\]
where
\[
A:=\sum_{\theta\in\Theta}\left\vert v_{\theta}\right\rangle \left\langle
v_{1}\right\vert \otimes\left\vert e_{\theta}\right\rangle \left\langle
e_{\theta}\right\vert .
\]
The second inequality can be rewritten as
\begin{align*}
I_{\mathcal{K}}  &  \geq\sum_{\theta,\theta^{\prime}\in\Theta}\left\vert
v_{\theta}\right\rangle \left\langle v_{\theta^{\prime}}\right\vert
H_{\theta,\theta^{\prime}}\left(  G_{\mathcal{U}}^{-1}\right)  _{\theta
,\theta^{\prime}}\\
&  =\left[  \mathcal{V}\right]  H\circ\left(  G_{\mathcal{U}}^{-1}\right)
\left[  \mathcal{V}\right]  ^{\dagger}.
\end{align*}
Hence, if $\mathcal{V}=\left\{  v_{\theta}\right\}  _{\theta\in\Theta}$ is
linearly independent, we obtain (\ref{G>HG}).
\end{remark}

\bigskip

Inserting some $M_{\theta,\theta^{\prime}}$ satisfying (\ref{M-1}) into
(\ref{M-2}), one obtain sufficient condition for (\ref{convert}) to hold for a
CP subunital map $\Lambda^{\ast}$. For example,
\[
\left\Vert \sum_{\theta\in\Theta}M_{\theta}\left(  G_{\mathcal{U}}%
^{-1}\right)  _{\theta,\theta}\right\Vert \leq1,
\]
or%
\[
\left\Vert \sum_{\theta,\theta^{\prime}\in\Theta}\sqrt{M_{\theta}}%
\sqrt{M_{\theta^{\prime}}}\left(  G_{\mathcal{U}}^{-1}\right)  _{\theta
,\theta^{\prime}}\right\Vert \leq1,
\]
and so on.

But obtaining the necessary and sufficient condition for \ (\ref{M-1}) and
(\ref{M-2}), or for (\ref{M-1}) and (\ref{M-3}), is quite non-trivial task, in
general. So in the next subsection, we deal with an easy case, the case of
$\left\vert \Theta\right\vert =2$ and $\dim\mathcal{H}=\dim\mathcal{K}=2$.

\section{2-dimensional and $\left\vert \Theta\right\vert =2$ case}

In this section, we work on the case of $\left\vert \Theta\right\vert =2$ and
$\dim\mathcal{H}=\dim\mathcal{K}=2$. First, we note that the problem is reduce
to the case of $\mathrm{rank}\,L_{\theta}=1$ and $\left\Vert L_{\theta
}\right\Vert =1$ ($\theta=1,2$). Observe that (\ref{convert}) for a CP map
$\Lambda$ is equivalent to \
\begin{align*}
\Lambda\left(  L_{1}-t_{1}L_{2}\right)   &  =M_{1}-t_{1}M_{2},\\
\Lambda\left(  L_{2}-t_{2}L_{1}\right)   &  =M_{2}-t_{2}M_{1}.
\end{align*}
Choose $t_{1}$ and $t_{2}$ so that $L_{1}-t_{1}L_{2}$ and $L_{2}-t_{2}L_{1}$
is rank-1 positive operator, the problem is reduced to the case of
\ $\mathrm{rank}\,L_{\theta}=1$ ($\theta=1,2$). Since multiplying constant to
the input and the output does not change the problem, $\left\Vert L_{\theta
}\right\Vert =1$ ($\theta=1,2$) can be assumed without loss of generality, too.

\begin{remark}
In case that $\Lambda$ is unital, there is another way of reducing the problem
to the case of $\mathrm{rank}\,L_{\theta}=1$ and $\left\Vert L_{\theta
}\right\Vert =1$ ($\theta=1,2$). Observe that (\ref{convert}) for a CP
\textit{unital} map $\Lambda$ is the same as
\[
\Lambda\left(  a_{\theta}\left(  L_{\theta}-b_{\theta}I_{\mathcal{H}}\right)
\right)  =a_{\theta}\left(  M_{\theta}-b_{\theta}I_{\mathcal{K}}\right)  ,
\]
due to the fact that $\Lambda$ is linear and unital. Therefore, choosing
$b_{\theta}:=\left\Vert L_{\theta}\right\Vert $, we can reduce the problem to
the case where $L_{\theta}=\left\vert u_{\theta}\right\rangle \left\langle
u_{\theta}\right\vert $ for each $\theta\in\Theta$. Note here that by
(\ref{infty-monotone}) $M_{\theta}-\left\Vert L_{\theta}\right\Vert
I_{\mathcal{K}}$ is positive definite. Also, by choosing $a_{\theta}$
properly, we can assume, without loss of generality, $\left\Vert
u_{1}\right\Vert =\left\Vert u_{2}\right\Vert =1$.
\end{remark}

Also, $\left\{  u_{1},u_{2}\right\}  $ is assumed to be linearly independent,
since otherwise the problem becomes trivial. \ We choose the phase of
$u_{\theta}$ so that $\left\langle u_{1}\right\vert \left.  u_{2}\right\rangle
=c\geq0$. Also, we suppose $M_{1}>0$ . (when both of $M_{1}$ and $M_{2}$ are
not strictly positive, they are rank-1, and thus reduce to the case of
$\mathrm{rank}\,M_{\theta}=1$.)

Below, we study the condition that (\ref{convert}) holds for a CP unital map
$\Lambda$, or rewrite (\ref{M-1}) and (\ref{M-3}) in more `tractable' expression.

(\ref{M-3}) is the same as \ %

\[
M_{1}+M_{2}-c\left(  M_{1,2}+M_{2,1}\right)  =\left(  1-c^{2}\right)
I_{\mathcal{K}},
\]
or equivalently,
\[
M_{1}^{-1/2}M_{2,1}M_{1}^{-1/2}=M_{0}+\sqrt{-1}B,
\]
where
\[
M_{0}:=\frac{1}{2c}M_{1}^{-1/2}\left(  M_{1}+M_{2}-\left(  1-c^{2}\right)
I_{\mathcal{K}}\right)  M_{1}^{-1/2}%
\]
and $B$ is a Hermitian operator. Therefore, (\ref{M-1}) can be rewritten as
\begin{align*}
\left(  M_{0}+\sqrt{-1}B\right)  \left(  M_{0}-\sqrt{-1}B\right)   &
=M_{1}^{-1/2}M_{2,1}M_{1}^{-1}M_{1,2}M_{1}^{-1/2}\\
&  \leq M_{1}^{-1/2}M_{2}M_{1}^{-1/2}.
\end{align*}
Defining
\[
\mathcal{C}:=\left\{  M\,;\,M\geq B^{2}+\sqrt{-1}\left[  B,M_{0}\right]
\right\}  ,
\]
the above inequality is equivalent to
\[
M_{1}^{-1/2}M_{2}M_{1}^{-1/2}-M_{0}^{2}\in\mathcal{C}.
\]
So our task is to find a convenient expression of the set $\mathcal{C}$.

In the previous paper, we have already have solved this problem (Appendix C of
\cite{Matsumoto}). Choose an orthonormal basis of $\mathcal{K}$ so that
$M_{0}$ defined above is diagonalized. Let us parameterize $M_{0}$ and
$M_{1}^{-1/2}M_{2}M_{1}^{-1/2}-M_{0}^{2}$ as follows,
\begin{align*}
M_{0}  &  =l\sigma_{z}+mI_{2},\\
M_{1}^{-1/2}M_{2}M_{1}^{-1/2}-M_{0}^{2}  &  =l^{2}\left(  x\sigma_{x}%
+y\sigma_{y}+z\sigma_{z}+wI_{\mathcal{K}}\right)  ,
\end{align*}
where $\sigma_{x}$, $\sigma_{y}$, and $\sigma_{z}$ are Pauli matrices. Then,
\begin{align}
&  M_{1}^{-1/2}M_{2}M_{1}^{-1/2}-M_{0}^{2}\in\mathcal{C}\nonumber\\
&  \Leftrightarrow\exists s\in\left[  -2,2\right]  \,,\,\,x^{\prime}\sigma
_{x}+z\sigma_{z}+wI_{\mathcal{K}}\geq s\sigma_{x}+\frac{s^{2}}{4}%
I_{\mathcal{K}}\nonumber\\
&  \Leftrightarrow z=0,\,w\geq f_{1}\left(  x^{\prime}\right)  \,\text{
}\nonumber\\
\,\text{or }z  &  \neq0,w\geq f_{2}\left(  x^{\prime},z\right)  ,\,\,f_{3}%
\left(  x^{\prime},z,w\right)  \geq0 \label{convert-qubit}%
\end{align}
where we have defined
\begin{align*}
x^{\prime}  &  :=\sqrt{x^{2}+y^{2}}\\
f_{1}\left(  x\right)   &  :=\left\{
\begin{array}
[c]{cc}%
\left\vert x\right\vert -1, & \left(  \left\vert x\right\vert \geq2\right) \\
\frac{1}{4}x^{2}, & \left(  \left\vert x\right\vert \leq2\right)
\end{array}
\right.  ,\\
f_{2}\left(  x,z\right)   &  :=\left\{
\begin{array}
[c]{cc}%
\sqrt{x^{2}+z^{2}}-1, & \left(  x^{2}+z^{2}\geq4\right) \\
\frac{1}{4}\left\{  x^{2}+z^{2}\right\}  , & \left(  x^{2}+z^{2}\leq4\right)
\end{array}
,\,\right.
\end{align*}
and
\begin{align}
&  f_{3}\left(  x,z,w\right) \nonumber\\
&  :=16w^{4}+\left(  -8x^{2}+8z^{2}+32\right)  w^{3}+\left(  x^{4}+2x^{2}%
z^{2}-32x^{2}+z^{4}-8z^{2}+16\right)  w^{2}\\
&  +\left(  10x^{4}+2x^{2}z^{2}-8x^{2}-8z^{4}-32z^{2}\right)  w+\left(
x^{4}-3x^{4}z^{2}-x^{6}-3x^{2}z^{4}+20x^{2}z^{2}-z^{6}-8z^{4}-16z^{2}\right)
\allowbreak.\nonumber
\end{align}

The quantities $l$, $x^{\prime}$, $z$, and $w$, which are needed to check
(\ref{convert-qubit}), can be computed directly by the following formulas:%
\begin{align*}
l  &  =\frac{1}{2}\sqrt{\left(  \mathrm{tr}\,M_{0}\right)  ^{2}-4\left(  \det
M_{0}\right)  ^{2}},\\
w  &  =\frac{1}{2l^{2}}\mathrm{tr}\,M,\\
z  &  =\frac{1}{2l^{3}}\mathrm{tr}\,\left\{  M\left(  M_{0}-\frac{1}{2}\left(
\mathrm{tr}\,M_{0}\right)  I_{\mathcal{K}}\right)  \right\}  =\frac{1}{2l^{3}%
}\left(  \mathrm{tr}\,MM_{0}-\frac{1}{2}\left(  \mathrm{tr}\,M_{0}\right)
\left(  \mathrm{tr}\,M\right)  \right)  ,\\
x^{\prime}  &  ==\sqrt{\mathrm{tr}\,\left\{  \frac{1}{l^{2}}M-\frac{z}{l}%
M_{0}+\left(  \frac{z}{2l}\left(  \mathrm{tr}\,M_{0}\right)  -w\right)
I_{\mathcal{K}}\right\}  ^{2}},
\end{align*}
where $M:=M_{1}^{-1/2}M_{2}M_{1}^{-1/2}-M_{0}^{2}$.

\section{Randomization criteria}

In this section, $\Theta$ is any set.

\begin{lemma}
\label{lem:Fan-minimax}(Fan's minimax theorem, \cite{BorweinZhu} ) Suppose
that $\mathcal{X}$ be a compact convex subset of vector space, and
$\mathcal{Y}$ be a convex subset of a vector space. \ Assume that
$f:\mathcal{X}\times\mathcal{Y}\rightarrow%
\mathbb{R}
$ satisfies following conditions: (1) $x\rightarrow f\left(  x,y\right)  $ is
lower semi continuous and convex on $\mathcal{X}$ for every $y\in\mathcal{Y}$:
(2) $y\rightarrow f\left(  x,y\right)  $ is concave on $\mathcal{Y}$ for every
$x\in\mathcal{X}$. Then%
\[
\min_{x\in\mathcal{X}}\sup_{y\in\mathcal{Y}}f\left(  x,y\right)  =\sup
_{y\in\mathcal{Y}}\min_{x\in\mathcal{X}}f\left(  x,y\right)  .
\]

\end{lemma}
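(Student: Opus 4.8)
The plan is to deduce the theorem from a more symmetric two-function intersection lemma that is proved directly via a partition-of-unity / KKM-type argument. First I would reduce to the case where $f(x,y)$ is also convex in a suitable finite sense: fix any finite subset $\{y_1,\dots,y_n\}\subset\mathcal{Y}$ and any scalar $c<\sup_{y}\min_x f(x,y)$. The goal of this step is to show that $\min_x\max_{1\le j\le n} f(x,y_j) \ge c$ cannot be beaten, i.e. that there is no $x$ with $f(x,y_j)<c$ for all $j$ simultaneously; since the reverse inequality $\min_x\sup_y f \ge \sup_y\min_x f$ is trivial, establishing $\min_x\max_j f(x,y_j)\le \sup_y\min_x f$ for every finite $\{y_j\}$ and then taking a supremum over finite subsets (using lower semicontinuity plus compactness of $\mathcal{X}$ to pass from finite $\max$ to the full $\sup$) gives the result.

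Next I would carry out the core finite step. Suppose toward a contradiction that for each $x\in\mathcal{X}$ some index $j$ gives $f(x,y_j)<c$. Let $C_j:=\{x\in\mathcal{X}: f(x,y_j)\le c\}$; these are closed (lower semicontinuity) and convex (convexity in $x$), and by assumption $\bigcap_j C_j=\emptyset$, so the open sets $U_j:=\mathcal{X}\setminus C_j$ cover the compact set $\mathcal{X}$. Take a partition of unity $\{\varphi_j\}$ subordinate to $\{U_j\}$ and define the continuous map $g:\mathcal{X}\to\Delta_{n-1}$ into the simplex by $g(x)=(\varphi_1(x),\dots,\varphi_n(x))$, together with the affine map $h:\Delta_{n-1}\to\mathcal{X}$... actually, more cleanly: restrict attention to the convex hull $\Delta$ of the (finitely many relevant) points and apply Brouwer's fixed point theorem, or invoke the KKM lemma directly on the simplex spanned by $y_1,\dots,y_n$. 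Using concavity of $y\mapsto f(x,y)$, for the point $\bar y=\sum_j\varphi_j(x)\,y_j$ one gets $f(x,\bar y)\ge \sum_j\varphi_j(x) f(x,y_j)$; but each $j$ in the support of $\varphi_j(x)$ has $x\notin C_j$, hence $f(x,y_j)>c$, forcing $f(x,\bar y)>c$. Combining this with a fixed-point/selection argument that produces a single $x^\ast$ whose associated $\bar y$ makes $f(x^\ast,\cdot)$ exceed $c$ everywhere in the convex hull contradicts $c<\sup_y\min_x f$ only if we are careful — so the cleanest route is: apply Fan's KKM lemma to the closed sets $F_j:=\{y\in\mathrm{conv}\{y_1,\dots,y_n\}: \min_x f(x,\cdot)\text{-type condition}\}$, obtaining a common point, and read off the contradiction.

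The main obstacle I anticipate is getting the convexity/compactness hypotheses to interact correctly in the reduction from the full $\sup_{y\in\mathcal{Y}}$ to a finite $\max$: one must use lower semicontinuity of $x\mapsto f(x,y)$ together with compactness of $\mathcal{X}$ to justify that $\min_x$ is attained and that $\inf$ over finite $\max$'s of the $\min_x$ equals the $\min_x$ of the full $\sup$, which requires a Dini-type or elementary compactness argument. The second delicate point is invoking the right fixed-point or KKM statement with exactly the regularity available (merely lower semicontinuous in $x$, merely convex in $x$, merely concave in $y$, no continuity in $y$); the standard reference proof threads this needle by working entirely on the finite-dimensional simplex $\mathrm{conv}\{y_1,\dots,y_n\}$ where everything is well-behaved, and I would follow that route, citing \cite{BorweinZhu} for the KKM/Brouwer input. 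Since the statement is attributed to \cite{BorweinZhu}, I would in fact present only this sketch and refer the reader there for the full details.
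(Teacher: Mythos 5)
The paper never proves this lemma: it is imported verbatim from Borwein--Zhuang \cite{BorweinZhu} and used as a black box in the proof of Theorem \ref{th:randomization}. So your closing decision --- state the result, cite \cite{BorweinZhu}, and refer the reader there --- is exactly the paper's treatment, and on that level there is nothing to compare.

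If, however, your sketch is meant to stand as an actual proof, it has a genuine gap precisely at the core finite-dimensional step. Your overall architecture is the standard one (trivial inequality $\min_x\sup_y\geq\sup_y\min_x$; reduce the converse to showing $\min_x\max_{1\leq j\leq n}f(x,y_j)\leq\sup_y\min_x f$ for every finite $\{y_j\}$; then use lower semicontinuity of $x\mapsto f(x,y)$, closedness and convexity of the level sets $\{x:f(x,y)\leq c\}$, and compactness of $\mathcal{X}$ via the finite intersection property), and that part is sound apart from some reversed inequalities in your setup (e.g.\ ``for each $x$ some $j$ gives $f(x,y_j)<c$'' should be $f(x,y_j)>c$ if you want $\bigcap_j C_j=\emptyset$). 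But the finite step itself is never established: you oscillate between a partition-of-unity/Brouwer argument and an unspecified KKM application (``$\min_x f(x,\cdot)$-type condition'' is not a definition), and the Brouwer route founders on exactly the regularity issue you flag --- the map $x\mapsto\bar y(x)$ lands in $\mathrm{conv}\{y_j\}$, but to close a fixed-point loop you need a (semi)continuous selection of minimizers $y\mapsto x(y)$, which the hypotheses (only lower semicontinuity in $x$, no continuity in $y$) do not provide without further work. The standard ways to finish are either Kneser's elementary induction on $n$ (prove the two-point case $\min_x\max\{f(x,y_1),f(x,y_2)\}\leq\sup_{y\in[y_1,y_2]}\min_x f(x,y)$ directly from convexity in $x$, concavity in $y$, lower semicontinuity and compactness, then induct), or the short argument in \cite{BorweinZhu} itself; both avoid fixed-point machinery altogether. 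As written, your sketch identifies the right obstacles but does not overcome them, so it should remain a citation rather than a proof.
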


\begin{theorem}
\label{th:randomization}(randomization criteria)Let $e_{\theta}\geq0$,
$\theta\in\Theta$. There is a CP (sub)unital map $\Lambda$ satisfying \
\begin{equation}
\forall\theta\in\Theta,\,\,\left\Vert \Lambda\left(  L_{\theta}\right)
-M_{\theta}\right\Vert \leq e_{\theta} \label{convert-e}%
\end{equation}
(\ref{convert}) if and only if
\begin{equation}
\inf_{\Lambda_{1}}\,\sum_{\theta\in\Theta_{0}}p_{\theta}\mathrm{tr}%
\,\Lambda_{1}\left(  L_{\theta}\right)  X_{\theta}\leq\inf_{\Lambda_{2}}%
\,\sum_{\theta\in\Theta_{0}}\,p_{\theta}\mathrm{tr}\,\Lambda_{2}\left(
M_{\theta}\right)  X_{\theta}+e_{\theta} \label{randomization}%
\end{equation}
holds for any subset $\Theta_{0}$ of $\Theta$ with $\left\vert \Theta
_{0}\right\vert <\infty$, any probability distribution $\left\{  p_{\theta
}\right\}  _{\theta\in\Theta_{0}}$ on $\Theta_{0}$, and any family of
operators $\left\{  X_{\theta}\right\}  _{\theta\in\Theta}$ on $\mathcal{K}$
with $\left\Vert X_{\theta}\right\Vert _{1}\leq1$, $\forall\theta\in\Theta$.
Here, $\Lambda_{1}$ and $\Lambda_{2}$ moves over the set of all CP (sub)unital
maps from $\mathcal{L}\left(  \mathcal{H}\right)  $ to $\mathcal{L}\left(
\mathcal{K}\right)  $ and $\mathcal{L}\left(  \mathcal{K}\right)  $ to
$\mathcal{L}\left(  \mathcal{K}\right)  $, respectively. This in turn
equivalent to that
\begin{equation}
\,\,\sup_{\Lambda_{1}}\sum_{\theta\in\Theta_{0}}p_{\theta}\mathrm{tr}%
\,\Lambda_{1}\left(  L_{\theta}\right)  X_{\theta}+e_{\theta}\geq\sup
_{\Lambda_{2}}\,\sum_{\theta\in\Theta_{0}}p_{\theta}\mathrm{tr}\,\Lambda
_{2}\left(  M_{\theta}\right)  X_{\theta} \label{randomization-2}%
\end{equation}
holds for any subset $\Theta_{0}$ of $\Theta$ with $\left\vert \Theta
_{0}\right\vert <\infty$, any probability distribution $\left\{  p_{\theta
}\right\}  _{\theta\in\Theta_{0}}$ on $\Theta_{0}$, and any family of
operators $\left\{  X_{\theta}\right\}  _{\theta\in\Theta}$ on $\mathcal{K}$
with $\left\Vert X_{\theta}\right\Vert _{1}\leq1$, $\forall\theta\in\Theta$.
\end{theorem}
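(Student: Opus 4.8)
The plan is to prove Theorem~\ref{th:randomization} by a standard Hahn--Banach separation argument combined with Fan's minimax theorem (Lemma~\ref{lem:Fan-minimax}), applied to the convex set of achievable ``output tuples''. First I would fix a finite $\Theta_0\subset\Theta$ and consider the set
\[
\mathcal{S}_1:=\left\{\left(\Lambda_1(L_\theta)\right)_{\theta\in\Theta_0}\,:\,\Lambda_1\text{ a CP (sub)unital map }\mathcal{L}(\mathcal{H})\to\mathcal{L}(\mathcal{K})\right\}
\]
and the analogous perturbed set $\mathcal{S}_2^{e}:=\{(Y_\theta)_\theta: \exists\,\Lambda_2\text{ CP (sub)unital on }\mathcal{L}(\mathcal{K}),\ \|Y_\theta-\Lambda_2(M_\theta)\|\le e_\theta\}$. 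Both are convex subsets of the finite-dimensional real vector space $\bigoplus_{\theta\in\Theta_0}\mathcal{L}(\mathcal{K})_{\mathrm{sa}}$ (pairing operators via $\langle A,X\rangle=\mathrm{Re}\,\mathrm{tr}\,AX$), and in the subunital case one checks they are closed and bounded, hence compact, since the set of CP subunital maps is compact; in the unital case one argues on the relevant affine slice. The existence of a $\Lambda$ satisfying (\ref{convert-e}) is exactly the statement $\mathcal{S}_1\cap\mathcal{S}_2^{e}\neq\emptyset$.

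Next I would invoke the separating hyperplane theorem: $\mathcal{S}_1\cap\mathcal{S}_2^{e}=\emptyset$ if and only if there is a nonzero family $\{Z_\theta\}_{\theta\in\Theta_0}$ of Hermitian operators on $\mathcal{K}$ with $\sup_{\mathcal{S}_2^e}\sum_\theta\mathrm{tr}\,Y_\theta Z_\theta<\inf_{\mathcal{S}_1}\sum_\theta\mathrm{tr}\,\Lambda_1(L_\theta)Z_\theta$, equivalently (negating and renaming $X_\theta=-Z_\theta$) that (\ref{randomization}) fails for some choice. The perturbation enters cleanly: $\sup_{\mathcal{S}_2^e}\sum_\theta\mathrm{tr}\,Y_\theta X_\theta=\sup_{\Lambda_2}\sum_\theta\mathrm{tr}\,\Lambda_2(M_\theta)X_\theta+\sum_\theta e_\theta\|X_\theta\|$, since the $\ell^\infty$-type ball of radius $e_\theta$ in operator norm is dual to the trace norm. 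To pass from a general family $\{X_\theta\}$ to one with $\|X_\theta\|_1\le1$ together with a probability weight $p_\theta$, I would absorb the norms: write $X_\theta=p_\theta X_\theta'/\|X_\theta\|_1$ appropriately (rescaling the whole inequality is harmless because both sides are positively homogeneous of degree one in $(X_\theta)$, using that $\Lambda_1,\Lambda_2$ range over cones closed under the relevant operations), so the normalization $\|X_\theta\|_1\le1$ and the probability distribution $\{p_\theta\}$ are just a convenient bookkeeping of the single vector $(X_\theta)_\theta$. Replacing $e_\theta$ by $p_\theta e_\theta$ in the statement is matched by this same rescaling. The final equivalence of (\ref{randomization}) with (\ref{randomization-2}) is immediate by replacing $X_\theta$ with $-X_\theta$ and swapping $\inf\leftrightarrow\sup$.

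Where Fan's minimax theorem enters is in justifying that the two one-sided optimizations over $\Lambda_1$ and $\Lambda_2$ can be handled independently even though the separation argument a priori only gives a joint statement; alternatively it is used to interchange an $\inf_\Lambda$ with a $\sup$ over the separating functionals, confirming that the quantity $\inf_{\Lambda_1}\sum_\theta p_\theta\,\mathrm{tr}\,\Lambda_1(L_\theta)X_\theta$ is the correct ``value'' appearing on the left of (\ref{randomization}): the map $(\Lambda_1,X)\mapsto\sum_\theta p_\theta\,\mathrm{tr}\,\Lambda_1(L_\theta)X_\theta$ is bilinear, hence convex in $\Lambda_1$ and concave (affine) in $X$, and the CP (sub)unital maps form a compact convex set, so the hypotheses of Lemma~\ref{lem:Fan-minimax} hold.

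The main obstacle I anticipate is the compactness/closedness bookkeeping in the \emph{unital} (as opposed to subunital) case: the set of CP unital maps is compact, but $\mathcal{S}_1$ then lies in a proper affine subspace, and the separating functional must be chosen within the orthogonal complement's complement correctly, so one must be careful that the separation still yields a genuine inequality of the advertised form rather than a degenerate one supported on directions that are constant on all of $\mathcal{S}_1$. A related subtlety is ensuring the supremum over $\Lambda_2$ on $\mathcal{L}(\mathcal{K})\to\mathcal{L}(\mathcal{K})$ is attained (or at least that the $\inf$/$\sup$ are not improved by passing to closures), which again follows from compactness of the CP (sub)unital maps but should be stated explicitly. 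Everything else is routine convex duality together with the operator-norm/trace-norm duality $\|A\|=\sup\{|\mathrm{tr}\,AX|:\|X\|_1\le1\}$.
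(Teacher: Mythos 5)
Your pivotal reduction is not correct: the existence of a CP (sub)unital $\Lambda$ satisfying (\ref{convert-e}) is \emph{not} equivalent to $\mathcal{S}_1\cap\mathcal{S}_2^{e}\neq\emptyset$. Nonemptiness of that intersection only says that there exist CP (sub)unital maps $\Lambda_1,\Lambda_2$ with $\|\Lambda_1(L_\theta)-\Lambda_2(M_\theta)\|\le e_\theta$, i.e.\ you are also allowed to degrade the targets $M_\theta$, and this is strictly weaker. Concretely, take $\Theta=\{1\}$, $\mathcal{H}=\mathcal{K}=\mathbb{C}^2$, $L_1=\frac12 I$, $M_1=|0\rangle\langle 0|$, $e_1=0$: no unital $\Lambda$ works (it must send $\frac12 I$ to $\frac12 I$), and no subunital $\Lambda$ works (it would need $\Lambda(I)=2|0\rangle\langle 0|\not\le I$), yet $\Lambda_1=\mathrm{id}$ and the CP unital map $\Lambda_2(Y)=\mathrm{tr}(Y)\,\frac12 I$ give $\Lambda_1(L_1)=\Lambda_2(M_1)$, so $\mathcal{S}_1\cap\mathcal{S}_2^{0}\neq\emptyset$ while (\ref{randomization}) fails, as it must (in the unital case take $X_1=\frac12 I$ and $\Lambda_2(Y)=\langle 1|Y|1\rangle\,I$). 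The same defect appears on the dual side: separating $\mathcal{S}_1$ from your $\mathcal{S}_2^{e}$ yields, after normalization, the condition $\inf_{\Lambda_1}\sum_\theta p_\theta\,\mathrm{tr}\,\Lambda_1(L_\theta)X_\theta\le\sup_{\Lambda_2}\sum_\theta p_\theta\,\mathrm{tr}\,\Lambda_2(M_\theta)X_\theta+\sum_\theta p_\theta e_\theta$, with a \emph{supremum} over $\Lambda_2$ where (\ref{randomization}) has an infimum; the ``negating and renaming'' step cannot turn that supremum into an infimum, and the resulting condition is strictly weaker than both (\ref{randomization}) and (\ref{randomization-2}). So, even if completed, your argument proves an equivalence between two statements each strictly weaker than those in the theorem.

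The optimization over $\Lambda_2$ in (\ref{randomization}) is there to \emph{strengthen} the necessary condition, and its necessity comes from composition rather than from enlarging the target set: if $\Lambda$ satisfies (\ref{convert-e}), then $\Lambda_2\circ\Lambda$ is CP (sub)unital and $\|\Lambda_2(\Lambda(L_\theta))-\Lambda_2(M_\theta)\|\le\|\Lambda(L_\theta)-M_\theta\|\le e_\theta$ by (\ref{infty-monotone}), whence (\ref{randomization}) follows from $|\mathrm{tr}\,AX|\le\|A\|\,\|X\|_1$. For the substantive (``if'') direction you should freeze $\Lambda_2=\mathrm{id}$, i.e.\ replace $\mathcal{S}_2^{e}$ by the operator-norm ball $\{(Y_\theta):\|Y_\theta-M_\theta\|\le e_\theta\}$ around the fixed tuple $(M_\theta)$; since the infimum over $\Lambda_2$ in (\ref{randomization}) is at most its value at the identity, the hypothesis still applies, and your separation argument (or the paper's route, which applies Lemma\thinspace\ref{lem:Fan-minimax} directly together with trace-norm/operator-norm duality) then gives $\min_{\Lambda_1}\sum_\theta p_\theta\bigl(\|\Lambda_1(L_\theta)-M_\theta\|-e_\theta\bigr)\le 0$ for every finitely supported $p$. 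Even after this repair there remains a gap you do not address: you work with one fixed finite $\Theta_0$, which only produces a map good on $\Theta_0$, whereas the theorem (with $\Theta$ possibly infinite) requires a single $\Lambda$ with $\|\Lambda(L_\theta)-M_\theta\|\le e_\theta$ for all $\theta$ simultaneously; passing from all finite subsets to all of $\Theta$ needs a further step, e.g.\ the paper's second application of Lemma\thinspace\ref{lem:Fan-minimax} over finitely supported measures $p$, or a finite-intersection/compactness argument on the compact set of CP (sub)unital maps.
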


\begin{proof}
Since 'only if' part of the statement is trivial, we prove 'if' part. Let
\[
f_{1}\left(  \Lambda_{1},\left\{  X_{\theta}\right\}  \right)  :=\int
\,\mathrm{tr}\,\Lambda_{1}\left(  L_{\theta}\right)  X_{\theta}\,\mathrm{d}%
p\left(  \theta\right)  -\int\,\mathrm{tr}\,M_{\theta}X_{\theta}%
\mathrm{d}p\left(  \theta\right)  ,
\]
where $p$ is a measure whose support is with finite cardinality (The support
of $p$ is $\Theta_{0}$). Obviously, $f_{1}$ is bilinear and continuous. The
set of all CP (sub)unital maps is obviously compact, since $\mathcal{H}$ and
$\mathcal{K}$ are finite dimensional (even if they are infinite dimensional
separable Hilbert space, the set is compact with respect to a topology which
makes $f_{1}$ continuous in $\Lambda_{1}$. )

Also, by the assumption, $\min_{\Lambda_{1}}\,f_{1}\left(  \Lambda
_{1},\left\{  X_{\theta}\right\}  \right)  \leq0$ for each $\left\{
X_{\theta}\right\}  _{\theta\in\Theta}$ (let $\Lambda_{2}$ be the identity map
in (\ref{randomization})). Therefore, by Lemma\thinspace\ref{lem:Fan-minimax}%
,
\begin{align*}
e_{\theta}  &  \geq\sup\left\{  \min_{\Lambda_{1}}\,f_{1}\left(  \Lambda
_{1},\left\{  X_{\theta}\right\}  \right)  ;\,\left\Vert X_{\theta}\right\Vert
_{1}\leq1\right\} \\
&  =\min_{\Lambda_{1}}\sup\left\{  \,f_{1}\left(  \Lambda_{1},\left\{
X_{\theta}\right\}  \right)  ;\,\left\Vert X_{\theta}\right\Vert _{1}%
\leq1\right\} \\
&  =\min_{\Lambda_{1}}\sup\left\{  \sum_{\theta\in\text{\textrm{supp}}%
\,p}\mathrm{tr}\,\left(  \Lambda_{1}\left(  L_{\theta}\right)  -M_{\theta
}\right)  X_{\theta}\,;\left\Vert X_{\theta}\right\Vert _{1}\leq1\right\} \\
&  =\min_{\Lambda_{1}}\int\left\Vert \Lambda_{1}\left(  L_{\theta}\right)
-M_{\theta}\right\Vert \mathrm{d}p\left(  \theta\right)  .
\end{align*}
Next, let
\[
f_{2}\left(  \Lambda_{1},p\right)  :=\int\left\Vert \Lambda_{1}\left(
L_{\theta}\right)  -M_{\theta}\right\Vert \mathrm{d}p\left(  \theta\right)  ,
\]
which is lower semi-continuous, convex in $\Lambda_{1}$, and linear in $p$. By
Lemma\thinspace\ref{lem:Fan-minimax},%
\begin{align*}
e_{\theta}  &  \geq\sup_{p:\left\vert \text{\textrm{supp}}\,p\right\vert
<\infty}\min_{\Lambda_{1}}\int\left\Vert \Lambda_{1}\left(  L_{\theta}\right)
-M_{\theta}\right\Vert \mathrm{d}p\left(  \theta\right) \\
&  =\min_{\Lambda_{1}}\sup_{p:\left\vert \text{\textrm{supp}}\,p\right\vert
<\infty}\int\,\left\Vert \Lambda_{1}\left(  L_{\theta}\right)  -M_{\theta
}\right\Vert \mathrm{d}p\left(  \theta\right) \\
&  =\min_{\Lambda_{1}}\sup_{\theta\in\Theta}\,\left\Vert \Lambda_{1}\left(
L_{\theta}\right)  -M_{\theta}\right\Vert ,
\end{align*}
which means existence of (sub)unital map satisfying (\ref{convert-e}).
\end{proof}

\section{Commutative case}

In this section we suppose that
\begin{equation}
L_{\theta}=\sum_{i=1}^{d_{1}}l_{\theta,i}\left\vert e_{i}\right\rangle
\left\langle e_{i}\right\vert ,\,M_{\theta}=\sum_{i=1}^{d_{2}}m_{\theta
,i}\left\vert f_{i}\right\rangle \left\langle f_{i}\right\vert ,
\label{classical}%
\end{equation}
where $\left\{  e_{i}\right\}  $ and $\left\{  f_{i}\right\}  $ are a complete
orthonormal basis of $\mathcal{H}$ and $\mathcal{K}$, respectively.

\begin{theorem}
Suppose (\ref{classical}) holds. Then there is a CP unital map $\Lambda$
satisfying (\ref{convert}) if and only if for any set $\left\{  x_{\theta
}\right\}  _{\theta\in\Theta}$ of real numbers
\begin{equation}
\forall\left\{  x_{\theta}\right\}  _{\theta\in\Theta},\,\,x_{\theta}%
\in\mathbb{R},\,\lambda_{\max}\left(  \sum_{\theta\in\Theta}x_{\theta
}L_{\theta}\right)  \geq\lambda_{\max}\left(  \sum_{\theta\in\Theta}x_{\theta
}M_{\theta}\right)  . \label{lmax>lmax}%
\end{equation}
This is equivalent to
\begin{equation}
\forall\left\{  x_{\theta}\right\}  _{\theta\in\Theta},\,\,x_{\theta}%
\in\mathbb{R},\,\,\lambda_{\min}\left(  \sum_{\theta\in\Theta}x_{\theta
}L_{\theta}\right)  \leq\lambda_{\min}\left(  \sum_{\theta\in\Theta}x_{\theta
}M_{\theta}\right)  . \label{lmin<lmin}%
\end{equation}
Also, there is a CP subunital map satisfying (\ref{convert}) if and only if
\begin{equation}
\forall\left\{  x_{\theta}\right\}  _{\theta\in\Theta},\,\,x_{\theta}%
\in\mathbb{R},\,\max\left\{  \lambda_{\max}\left(  \sum_{\theta\in\Theta
}x_{\theta}L_{\theta}\right)  ,0\right\}  \geq\max\left\{  \lambda_{\max
}\left(  \sum_{\theta\in\Theta}x_{\theta}L_{\theta}\right)  ,0\right\}  ,
\label{lmax>lmax-2}%
\end{equation}
or equivalently,
\begin{equation}
\forall\left\{  x_{\theta}\right\}  _{\theta\in\Theta},\,\,x_{\theta}%
\in\mathbb{R},\,\min\left\{  \lambda_{\min}\left(  \sum_{\theta\in\Theta
}x_{\theta}L_{\theta}\right)  ,0\right\}  \geq\min\left\{  \lambda_{\min
}\left(  \sum_{\theta\in\Theta}x_{\theta}L_{\theta}\right)  ,0\right\}  ,
\label{lmin<lmin-2}%
\end{equation}

\end{theorem}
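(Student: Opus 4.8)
The plan is to reduce the commutative problem to the randomization criterion of Theorem~\ref{th:randomization}, which characterizes existence of a CP (sub)unital $\Lambda$ in terms of comparisons of infima (equivalently suprema) of expressions $\sum_\theta p_\theta \mathrm{tr}\,\Lambda_i(\cdot)X_\theta$ over all CP (sub)unital maps. The key observation is that, because all $L_\theta$ commute with one another (they are simultaneously diagonal in $\{e_i\}$) and likewise all $M_\theta$ commute, the quantity $\sup_{\Lambda_2}\sum_\theta p_\theta\,\mathrm{tr}\,\Lambda_2(M_\theta)X_\theta$ can be computed explicitly. First I would argue that in evaluating such a supremum one may replace the family $\{X_\theta\}$ together with the weights $p_\theta$ by a single operator-coefficient problem: writing $y_\theta := p_\theta x_\theta$ for scalar choices and noting that $\mathrm{tr}\,\Lambda_2\bigl(\sum_\theta y_\theta M_\theta\bigr)X = \mathrm{tr}\,\bigl(\sum_\theta y_\theta M_\theta\bigr)\Lambda_2^\ast(X)$, the optimization over CP unital $\Lambda_2$ and over $X$ with $\|X\|_1\le 1$ collapses, since $\Lambda_2^\ast$ ranges over CP unital maps and $\Lambda_2^\ast(X)$ can be made any operator with the same trace-norm bound whose "numerical range" is controlled --- so the optimal value of $\mathrm{tr}\,\bigl(\sum_\theta y_\theta M_\theta\bigr)Z$ over $\|Z\|_1\le 1$, $Z$ achievable, is exactly $\lambda_{\max}\bigl(\sum_\theta y_\theta M_\theta\bigr)$ (for the unital case), and $\max\{\lambda_{\max}(\cdot),0\}$ for the subunital case (since then $\Lambda_2^\ast$ may also annihilate, allowing the zero operator).

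The second step is to carry out this identification carefully on both sides. For the $L$-side, $\sup_{\Lambda_1}\sum_\theta p_\theta\,\mathrm{tr}\,\Lambda_1(L_\theta)X_\theta$ with $\Lambda_1$ ranging over CP unital maps from $\mathcal{L}(\mathcal H)$ to $\mathcal{L}(\mathcal K)$ and $\|X_\theta\|_1\le1$ should likewise equal $\lambda_{\max}\bigl(\sum_\theta p_\theta x_\theta L_\theta\bigr)$ after the reduction to scalar $x_\theta$: one direction is Lemma-type, using that $\Lambda_1$ unital implies $\|\Lambda_1\|=1$ (cf. the norm lemma invoked elsewhere) together with the fact that $\sum_\theta y_\theta L_\theta$ is normal, so its image's largest eigenvalue is bounded by $\lambda_{\max}$ of the input; the other direction is achieved by an explicit rank-one unital map sending the top eigenvector's spectral projection appropriately. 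Here the reduction from general trace-class $X_\theta$ to scalars $x_\theta$ is the place where commutativity of the $L_\theta$'s (resp.\ $M_\theta$'s) is essential: it lets one reduce the supremum over $\{X_\theta\}$ to a supremum over a single operator paired against $\sum_\theta y_\theta L_\theta$, because $\Lambda_1(L_\theta)$ and $\Lambda_1(L_{\theta'})$ need not commute in general but the relevant extremal configuration is attained on the commutative corner.

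Having both sides identified, Theorem~\ref{th:randomization} (in its $e_\theta=0$, supremum form \eqref{randomization-2}) gives: a CP unital $\Lambda$ with \eqref{convert} exists iff $\lambda_{\max}\bigl(\sum_\theta x_\theta L_\theta\bigr) \ge \lambda_{\max}\bigl(\sum_\theta x_\theta M_\theta\bigr)$ for all real $\{x_\theta\}$, which is \eqref{lmax>lmax}; absorbing the $p_\theta$ into the $x_\theta$ is harmless since any real vector is a scaled signed probability vector and both sides are positively homogeneous. The equivalence with \eqref{lmin<lmin} follows by replacing $x_\theta \mapsto -x_\theta$ and using $\lambda_{\min}(-A) = -\lambda_{\max}(A)$. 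For the subunital case one instead uses the characterization with $\Lambda_2^\ast$ allowed to be trace-decreasing, so the optimal pairing value becomes $\max\{\lambda_{\max}(\cdot),0\}$ on each side, yielding \eqref{lmax>lmax-2}; the $\lambda_{\min}$ form \eqref{lmin<lmin-2} again follows by sign reversal. The main obstacle I anticipate is the rigorous reduction from the general family $\{X_\theta\}_{\theta}$ (trace-norm bounded, not necessarily commuting, not necessarily aligned with the eigenbasis of anything) to the scalar comparison: one must show that nothing is lost by this restriction, which is exactly where the hypothesis \eqref{classical} that $\{L_\theta\}$ and $\{M_\theta\}$ are each commuting families is used, and making that argument precise --- presumably by diagonalizing and checking that the extremal $\Lambda_i^\ast$ and $X_\theta$ can be taken diagonal --- is the technical heart of the proof.
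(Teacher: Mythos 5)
You follow the same overall route as the paper---apply Theorem~\ref{th:randomization} with $e_\theta=0$ and try to compute the two suprema---but there are two problems. First, the evaluation you state is wrong: if you optimize over both the CP unital $\Lambda_2$ and $X$ with $\left\Vert X\right\Vert_1\leq1$, the value of $\mathrm{tr}\,\Lambda_2\left(\sum_\theta y_\theta M_\theta\right)X$ is the operator norm $\max\left\{\lambda_{\max},-\lambda_{\min}\right\}$, not $\lambda_{\max}$; and in (\ref{randomization-2}) the family $\left\{X_\theta\right\}$ is \emph{common} to both sides, so it cannot be optimized independently on the $M$-side anyway. (This part is repairable: fixing a Hermitian $X=X_+-X_-$ with $\mathrm{tr}\,X_\pm=t_\pm$ and taking $X_\theta=x_\theta X$, both suprema evaluate to $t_+\lambda_{\max}\left(\sum_\theta y_\theta\,\cdot\,\right)-t_-\lambda_{\min}\left(\sum_\theta y_\theta\,\cdot\,\right)$, which together with $y\mapsto-y$ gives exactly (\ref{lmax>lmax})--(\ref{lmin<lmin}); in fact the necessity direction needs no randomization at all, since $\lambda_{\max}$ is monotone under positive unital maps by the appendix estimates leading to (\ref{infty-monotone}).) Second, and decisively: the sufficiency direction of Theorem~\ref{th:randomization} requires (\ref{randomization-2}) for \emph{all} families $\left\{X_\theta\right\}$ of trace-norm-bounded, generally non-commuting operators, and you explicitly defer exactly this step, offering only that the extremal $X_\theta$ and $\Lambda_i^{\ast}$ ``can presumably be taken diagonal''. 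That is not substantiated, and it is not how the reduction works: pinching $X_\theta$ in the basis $\left\{f_j\right\}$ changes $\mathrm{tr}\,\Lambda_2\left(M_\theta\right)X_\theta$ because $\Lambda_2\left(M_\theta\right)$ need not be diagonal, so there is no a priori reason extremizers are diagonal. As written, your proposal proves necessity but leaves the ``if'' half unproven.

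The missing ingredient is a piece of classical convexity, which is what the paper's reduction to (sub)stochastic matrices in (\ref{supP>supP}) encodes. Condition (\ref{lmax>lmax}) says $\max_i\sum_\theta x_\theta l_{\theta,i}\geq\max_j\sum_\theta x_\theta m_{\theta,j}$ for every real vector $x$, i.e.\ the support function of $\mathrm{conv}\left\{\left(l_{\theta,i}\right)_{\theta}\,;\,i\right\}$ dominates each point $\left(m_{\theta,j}\right)_{\theta}$; by the separating hyperplane theorem every such point is a convex combination $m_{\theta,j}=\sum_i P_{j,i}\,l_{\theta,i}$ with $P_{j,i}\geq0$, $\sum_i P_{j,i}=1$ (and $\sum_i P_{j,i}\leq1$ in the subunital case, where the comparison is with $\max\left\{\cdot,0\right\}$). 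Then $\Lambda\left(Z\right):=\sum_{i,j}P_{j,i}\left\langle e_i\right\vert Z\left\vert e_i\right\rangle\,\left\vert f_j\right\rangle\left\langle f_j\right\vert$ is CP, (sub)unital, and satisfies (\ref{convert}); once $\Lambda$ exists, (\ref{randomization-2}) for arbitrary test families is automatic (e.g.\ transport an optimal POVM $\left\{B_j\right\}$ on the $M$-side to $A_i:=\sum_j P_{j,i}B_j$ on the $L$-side). So either supply this separation argument---which in fact makes Theorem~\ref{th:randomization} unnecessary for sufficiency---or give a genuine proof that general tests reduce to diagonal ones; without one of these, the technical heart you flag remains open.
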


\begin{proof}
By Theorem\thinspace\ref{th:randomization}, existence of CP unital map
$\Lambda$ satisfying (\ref{convert}) is equivalent to
\begin{equation}
\sup_{P}\sum_{i,j,\theta}x_{\theta,j}P_{j,i}l_{\theta,i}\geq\sup_{P}%
\sum_{i,j,\theta}x_{\theta,j}P_{j,i}m_{\theta,i}, \label{supP>supP}%
\end{equation}
where $P$ moves over the set of all column stochastic matrices, $P_{j,i}%
\geq0,\,\sum_{i}P_{j,i}=1$. Observe
\[
\sup_{P}\sum_{i,j,\theta}x_{\theta,j}P_{j,i}l_{\theta,i}=\sum_{j}\max_{i}%
\sum_{\theta}x_{\theta,j}l_{\theta,i}.
\]
Hence,
\[
\sum_{j}\lambda_{\max}\left(  \sum_{\theta\in\Theta}x_{\theta,j}L_{\theta
}\right)  \geq\sum_{j}\lambda_{\max}\left(  \sum_{\theta\in\Theta}x_{\theta
,j}M_{\theta}\right)
\]
holds for any $x_{\theta,j}$. This is equivalent to (\ref{lmax>lmax}).
(\ref{lmin<lmin}) is obtained by replacing $x_{\theta}$ by $-x_{\theta}.$

When $\Lambda$ is subunital, $P$ in (\ref{supP>supP}) is column sub
stochastic, $P_{j,i}\geq0,\,\sum_{i}P_{j,i}\leq1$. Therefore,
\[
\sup_{P}\sum_{i,j,\theta}x_{\theta,j}P_{j,i}l_{\theta,i}=\sum_{j}\max
_{i}\left\{  0,\sum_{\theta}x_{\theta,j}l_{\theta,i}\right\}  .
\]
Thus we obtain (\ref{lmax>lmax-2}). \ (\ref{lmin<lmin}) is obtained by
replacing $x_{\theta}$ by $-x_{\theta}$.
\end{proof}

Note that the condition
\begin{equation}
\forall\left\{  x_{\theta}\right\}  _{\theta\in\Theta},\,x_{\theta}%
\in\mathbb{R},\,\left\Vert \sum_{\theta\in\Theta}x_{\theta}L_{\theta
}\right\Vert \geq\left\Vert \sum_{\theta\in\Theta}x_{\theta}M_{\theta
}\right\Vert . \label{norm>norm}%
\end{equation}
is a necessary condition of (\ref{lmax>lmax}). In fact, by%

\begin{equation}
\left\Vert L\right\Vert =\max\left\{  \lambda_{\max}\left(  L\right)
,-\lambda_{\min}\left(  L\right)  \right\}  , \label{norm-lmax-lmin}%
\end{equation}
combining (\ref{lmax>lmax}) and (\ref{lmin<lmin}) leads to (\ref{norm>norm}).
Thus, (\ref{lmax>lmax}) implies (\ref{norm>norm}). But, suppose $\left\vert
\Theta\right\vert =1$ and $\lambda_{\max}\left(  L_{1}\right)  \leq
-\lambda_{\min}\left(  L_{1}\right)  $. Then $\left\Vert L_{1}\right\Vert
=-\lambda_{\min}\left(  L_{1}\right)  $, and
\begin{align*}
\left\Vert -L_{1}\right\Vert  &  =\max\left\{  \lambda_{\max}\left(
-L_{1}\right)  ,-\lambda_{\min}\left(  -L_{1}\right)  \right\} \\
&  =\max\left\{  -\lambda_{\min}\left(  L_{1}\right)  ,\lambda_{\max}\left(
L_{1}\right)  \right\} \\
&  =-\lambda_{\min}\left(  L_{1}\right)  ,
\end{align*}
both of which are not related to $\lambda_{\max}\left(  L_{1}\right)  $. Thus
if $\lambda_{\max}\left(  L_{1}\right)  \leq-\lambda_{\min}\left(
L_{1}\right)  $ and $\lambda_{\max}\left(  M_{1}\right)  \leq-\lambda_{\min
}\left(  M_{1}\right)  $, (\ref{norm>norm}) cannot be a sufficient condition.

Another necessary condition is
\begin{equation}
\forall\left\{  x_{\theta}\right\}  _{\theta\in\Theta},\,x_{\theta}%
\in\mathbb{R},\,\mathrm{sp}\left(  \sum_{\theta\in\Theta}x_{\theta}L_{\theta
}\right)  \leq\mathrm{sp}\left(  \sum_{\theta\in\Theta}x_{\theta}M_{\theta
}\right)  . \label{sp>sp}%
\end{equation}
Observe that one of $\lambda_{\max}\left(  L\right)  =\left\Vert L\right\Vert
$ or $\lambda_{\max}\left(  L\right)  =\mathrm{sp}\left(  L\right)
-\left\Vert L\right\Vert $ is always true. Therefore, the combination of
(\ref{norm>norm}) and (\ref{sp>sp}) is equivalent to (\ref{lmax>lmax}).

\bigskip

\appendix

\section{Monotone functionals}

Let $C$ be an arbitrary Hermitian operator on $\mathcal{H}$. Suppose
\begin{equation}
\lambda_{\max}\left(  C\right)  \geq0\geq\lambda_{\min}\left(  C\right)  .
\label{lamda>0>lamda}%
\end{equation}
Observe
\[
\mathrm{tr}\,\Lambda^{\ast}\left(  \rho\right)  =\mathrm{tr}\,\rho
\Lambda\left(  I_{\mathcal{H}}\right)  \leq\mathrm{tr}\,\left(  \rho\cdot
I_{\mathcal{K}}\right)  =1,
\]
holds for any CP subunital $\Lambda$. Therefore, if $\lambda_{\max}\left(
C\right)  \geq0$, we have
\begin{align*}
\lambda_{\max}\left(  \Lambda\left(  C\right)  \right)   &  =\max_{\rho
:\rho\geq0,\mathrm{tr}\,\rho=1}\mathrm{tr}\,\rho\Lambda\left(  C\right) \\
&  \leq\max_{\rho:\rho\geq0,\mathrm{tr}\,\rho\leq1}\mathrm{tr}\,\rho
\Lambda\left(  C\right) \\
&  =\max_{\rho:\rho\geq0,\mathrm{tr}\,\rho\leq1}\mathrm{tr}\,\Lambda^{\ast
}\left(  \rho\right)  C\\
&  \leq\max_{\rho:\rho\geq0,\mathrm{tr}\,\rho\leq1}\mathrm{tr}\,\rho C\\
&  =\max_{\rho:\rho\geq0,\mathrm{tr}\,\rho=1}\mathrm{tr}\,\rho C\\
&  =\lambda_{\max}\left(  C\right)  .
\end{align*}
and
\begin{align*}
\lambda_{\min}\left(  \Lambda\left(  C\right)  \right)   &  =\min_{\rho
:\rho\geq0,\mathrm{tr}\,\rho=1}\mathrm{tr}\,\rho\Lambda\left(  C\right) \\
&  =\min_{\rho:\rho\geq0,\mathrm{tr}\,\rho=1}\mathrm{tr}\,\Lambda^{\ast
}\left(  \rho\right)  C\\
&  \geq\min_{\rho:\rho\geq0,\mathrm{tr}\,\rho=1}\mathrm{tr}\,\rho C\\
&  \geq\lambda_{\min}\left(  C\right)  .
\end{align*}

If $\Lambda$ is CP and unital, these inequality holds without the restriction
(\ref{lamda>0>lamda}), due to almost parallel argument.

Since $\left\Vert C\right\Vert =\max\left\{  \lambda_{\max}\left(  C\right)
,-\lambda_{\min}\left(  C\right)  \right\}  $ for a Hermitian operator $C$,
the inequality
\begin{equation}
\left\Vert C\right\Vert \geq\left\Vert \Lambda\left(  C\right)  \right\Vert
\label{infty-monotone}%
\end{equation}
holds for any CP subunital map $\Lambda$ and any Hermitian operator $C$
without the restriction (\ref{lamda>0>lamda}).

\section{Multiplicative domain and finite dimensional $C^{\ast}$-algebra}

This section is based on Section 3 of \cite{HMPB}. When $\Lambda$ is
completely positive from $\mathcal{L}\left(  \mathcal{H}\right)  $ to
$\mathcal{L}\left(  \mathcal{K}\right)  $, we have Schwartz inequality
\begin{equation}
\Lambda\left(  L^{\dagger}\right)  \Lambda\left(  L\right)  \leq\left\Vert
\Lambda\right\Vert \Lambda\left(  L^{\dagger}L\right)  . \label{schwartz}%
\end{equation}
The multiplicative domain $\mathcal{M}_{\Lambda}$ of of completely positive
map $\Lambda$ is a set of operators on $\mathcal{H}$ such that
\[
\mathcal{M}_{\Lambda}:=\left\{  \Lambda\left(  L^{\dagger}\right)
\Lambda\left(  L\right)  =\left\Vert \Lambda\right\Vert \Lambda\left(
L^{\dagger}L\right)  \right\}  .
\]

\begin{remark}
When $\Lambda$ is a positive map which may not be 2-positive, we have to
replace $\left\Vert \Lambda^{\ast}\right\Vert $ in the above expressions with
$\left\Vert \Lambda^{\ast}\right\Vert _{S}$ defined by
\[
\left\Vert \Lambda\right\Vert _{S}:=\inf\left\{  c\,;\Lambda\left(
L^{\dagger}\right)  \Lambda\left(  L\right)  \leq c\Lambda\left(  L^{\dagger
}L\right)  ,L\in\mathcal{L}\left(  \mathcal{H}\right)  \right\}
\]
But as is remarked in Section 3 of \cite{HMPB}, $\left\Vert \Lambda\right\Vert
$ $=\left\Vert \Lambda\right\Vert _{S}$ when $\Lambda$ is 2-positive.
\end{remark}

\begin{lemma}
\label{lem:multiplicative}(Lemma$\,$3.9 of \cite{HMPB})Let $\Lambda$ be a
completely positive map from $\mathcal{L}\left(  \mathcal{H}\right)  $ to
$\mathcal{L}\left(  \mathcal{K}\right)  $. $\mathcal{M}_{\Lambda}$ is a vector
space closed by multiplication and $\dagger$, or constitutes a *-algebra.
Also, if $L\in\mathcal{M}_{\Lambda^{\ast}}$, for any $M\in\mathcal{L}\left(
\mathcal{H}\right)  $%
\[
\Lambda\left(  L\right)  \Lambda\left(  M\right)  =\left\Vert \Lambda
\right\Vert \Lambda\left(  LM\right)  .
\]

\end{lemma}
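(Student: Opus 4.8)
The plan is to run the classical multiplicative--domain argument in the normalization of (\ref{schwartz}). First I would reduce to $\left\Vert \Lambda\right\Vert =1$: the defining relation of $\mathcal{M}_{\Lambda}$ and the asserted identity $\Lambda\left(L\right)\Lambda\left(M\right)=\left\Vert \Lambda\right\Vert \Lambda\left(LM\right)$ are both homogeneous of degree two under $\Lambda\mapsto c\Lambda$ ($c>0$), a rescaling that leaves $\mathcal{M}_{\Lambda}$ unchanged, so we may assume $\left\Vert \Lambda\right\Vert =1$ and (\ref{schwartz}) reads $\Lambda\left(L^{\dagger}\right)\Lambda\left(L\right)\leq\Lambda\left(L^{\dagger}L\right)$. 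I read $\mathcal{M}_{\Lambda}$ as the set of $L$ achieving equality here \emph{and} in the companion inequality $\Lambda\left(L\right)\Lambda\left(L^{\dagger}\right)\leq\Lambda\left(LL^{\dagger}\right)$ (this is (\ref{schwartz}) applied to $L^{\dagger}$); the two equalities coincide for the self-adjoint $L_{\theta}=\left\vert u_{\theta}\right\rangle \left\langle u_{\theta}\right\vert $ to which the lemma is applied, but keeping both is what makes $\mathcal{M}_{\Lambda}$ stable under $\dagger$ in general.

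The core is a $2\times 2$ polarization of (\ref{schwartz}). Since $\Lambda$ is completely positive, its amplification $\Lambda\otimes\mathrm{id}_{2}$ to $2\times 2$ block matrices is CP of norm $1$; applying (\ref{schwartz}) to $\Lambda\otimes\mathrm{id}_{2}$ at $\begin{pmatrix}L & M\\ 0 & 0\end{pmatrix}$ and using $\Lambda\left(X^{\dagger}\right)=\Lambda\left(X\right)^{\dagger}$ (a CP map is Hermiticity preserving) gives
\[
\begin{pmatrix}\Lambda\left(L^{\dagger}L\right)-\Lambda\left(L^{\dagger}\right)\Lambda\left(L\right) & \Lambda\left(L^{\dagger}M\right)-\Lambda\left(L^{\dagger}\right)\Lambda\left(M\right)\\ \Lambda\left(M^{\dagger}L\right)-\Lambda\left(M^{\dagger}\right)\Lambda\left(L\right) & \Lambda\left(M^{\dagger}M\right)-\Lambda\left(M^{\dagger}\right)\Lambda\left(M\right)\end{pmatrix}\geq 0 .
\]
If $L\in\mathcal{M}_{\Lambda}$ the $(1,1)$ block vanishes, and a positive $2\times 2$ operator matrix with a zero diagonal block has zero off-diagonal blocks (a one-line estimate on vectors), whence $\Lambda\left(L^{\dagger}M\right)=\Lambda\left(L^{\dagger}\right)\Lambda\left(M\right)$ for all $M$, and by adjoints $\Lambda\left(ML\right)=\Lambda\left(M\right)\Lambda\left(L\right)$ for all $M$. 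Running the identical computation with $L^{\dagger}$ in place of $L$ — legitimate because $L^{\dagger}\in\mathcal{M}_{\Lambda}$, the two defining equalities being interchanged under $\dagger$ — gives $\Lambda\left(LM\right)=\Lambda\left(L\right)\Lambda\left(M\right)$ for all $M$, which after restoring $\left\Vert \Lambda\right\Vert $ is exactly the asserted $\Lambda\left(L\right)\Lambda\left(M\right)=\left\Vert \Lambda\right\Vert \Lambda\left(LM\right)$. Thus $\Lambda$ is two-sided multiplicative against every element of $\mathcal{M}_{\Lambda}$.

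It remains to check $\mathcal{M}_{\Lambda}$ is a $*$-algebra. Closure under $\dagger$ and under scalar multiples is immediate; for sums and products I use the bimodule relations just proved. For $L_{1},L_{2}\in\mathcal{M}_{\Lambda}$, expanding $\Lambda\left(\left(L_{1}+L_{2}\right)^{\dagger}\left(L_{1}+L_{2}\right)\right)$ and replacing each cross term $\Lambda\left(L_{i}^{\dagger}L_{j}\right)$ by $\Lambda\left(L_{i}\right)^{\dagger}\Lambda\left(L_{j}\right)$ collapses it to $\Lambda\left(L_{1}+L_{2}\right)^{\dagger}\Lambda\left(L_{1}+L_{2}\right)$; and $\Lambda\left(\left(L_{1}L_{2}\right)^{\dagger}L_{1}L_{2}\right)=\Lambda\left(L_{2}^{\dagger}\left(L_{1}^{\dagger}L_{1}\right)L_{2}\right)$ collapses by peeling $L_{2}^{\dagger}$ off the left and $L_{2}$ off the right (using $L_{2}\in\mathcal{M}_{\Lambda}$), then $L_{1}^{\dagger}L_{1}$ (using $L_{1}\in\mathcal{M}_{\Lambda}$), down to $\Lambda\left(L_{1}L_{2}\right)^{\dagger}\Lambda\left(L_{1}L_{2}\right)$; the companion $XX^{\dagger}$ relations are identical. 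The step I expect to require the most care is precisely this $\dagger$-bookkeeping — ensuring the one Schwarz equality in the definition is paired with its adjoint so that $\mathcal{M}_{\Lambda}$ is genuinely $*$-closed and the multiplicativity is two-sided — the $2\times 2$ block manipulation and the triple-product collapses being otherwise routine. Finally, if $\Lambda$ were merely positive rather than $2$-positive, one would run the same argument with $\left\Vert \Lambda\right\Vert $ replaced by $\left\Vert \Lambda\right\Vert _{S}$, as in the Remark preceding the lemma.
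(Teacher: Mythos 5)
The paper never actually proves this lemma---it is imported verbatim as Lemma\thinspace3.9 of \cite{HMPB}---so there is no in-paper argument to compare against; your proof has to stand on its own, and it does: it is the standard multiplicative-domain argument, and it is correct. Normalizing $\left\Vert \Lambda\right\Vert =1$ is legitimate by the homogeneity you point out; the amplification $\Lambda\otimes\mathrm{id}_{2}$ is CP with the same norm (by Lemma\thinspace\ref{lem:L-norm}, since $\left(  \Lambda\otimes\mathrm{id}_{2}\right)  \left(  I\right)  =\Lambda\left(  I_{\mathcal{H}}\right)  \otimes I_{2}$); applying (\ref{schwartz}) to it at the block operator with rows $(L\;\,M)$ and $(0\;\,0)$ and using that a positive $2\times2$ block operator with vanishing $(1,1)$ block has vanishing off-diagonal blocks gives $\Lambda\left(  L^{\dagger}M\right)  =\Lambda\left(  L^{\dagger}\right)  \Lambda\left(  M\right)  $ for all $M$, and repeating with $L^{\dagger}$ yields the displayed identity; the sum and triple-product bookkeeping for the $*$-algebra claim is routine once these bimodule relations are available. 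One point you raise deserves emphasis, because it is a genuine correction to the text rather than a cosmetic choice: with the one-sided equality as literally written in the paper's definition of $\mathcal{M}_{\Lambda}$, the block argument only gives multiplicativity on one side (left against $L^{\dagger}$, equivalently right against $L$), and the resulting set is an algebra that need not be $\dagger$-closed nor satisfy the asserted left identity $\Lambda\left(  L\right)  \Lambda\left(  M\right)  =\left\Vert \Lambda\right\Vert \Lambda\left(  LM\right)  $ --- for instance $\Lambda\left(  X\right)  =V^{\dagger}XV$ with $V$ a non-unitary isometry has one-sided equality exactly on the non-self-adjoint algebra of operators leaving $\mathrm{ran}\,V$ invariant. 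Your symmetric two-sided reading (equality for both $L$ and $L^{\dagger}$) is what the lemma needs, is presumably what \cite{HMPB} intends, and costs nothing in this paper since the lemma is only ever applied to self-adjoint $L_{\theta}$, where the two equalities coincide, as you note. Finally, the hypothesis ``$L\in\mathcal{M}_{\Lambda^{\ast}}$'' in the statement is evidently a typo for $L\in\mathcal{M}_{\Lambda}$, which is how you (rightly) read it.
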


\begin{lemma}
\label{lem:L-norm}(Corollary 2.3.8 of \cite{Bhatia})Let $\Lambda$ be a
positive map from $\mathcal{L}\left(  \mathcal{H}\right)  $ to $\mathcal{L}%
\left(  \mathcal{K}\right)  $. Then, $\left\Vert \Lambda\right\Vert
=\left\Vert \Lambda\left(  I_{\mathcal{H}}\right)  \right\Vert $.
\end{lemma}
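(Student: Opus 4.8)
The plan is to prove the two inequalities $\|\Lambda\|\ge\|\Lambda(I_{\mathcal{H}})\|$ and $\|\Lambda\|\le\|\Lambda(I_{\mathcal{H}})\|$ separately. The first is immediate: since $\|I_{\mathcal{H}}\|=1$, the definition of the operator norm of $\Lambda$ gives $\|\Lambda(I_{\mathcal{H}})\|\le\|\Lambda\|\,\|I_{\mathcal{H}}\|=\|\Lambda\|$. All the work is in the reverse bound, i.e. in showing $\|\Lambda(A)\|\le\|\Lambda(I_{\mathcal{H}})\|$ for every $A$ with $\|A\|\le1$.

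First I would dispose of the self-adjoint case. Recall that a positive map is Hermiticity preserving, $\Lambda(A^{\dagger})=\Lambda(A)^{\dagger}$, and sends positive operators to positive operators. If $A=A^{\dagger}$ and $\|A\|\le1$, then $-I_{\mathcal{H}}\le A\le I_{\mathcal{H}}$; applying $\Lambda$ and using positivity and linearity yields $-\Lambda(I_{\mathcal{H}})\le\Lambda(A)\le\Lambda(I_{\mathcal{H}})$, with $\Lambda(I_{\mathcal{H}})\ge0$. Sandwiching a self-adjoint operator between $\pm C$ for $C\ge0$ forces $\|\Lambda(A)\|\le\|C\|=\|\Lambda(I_{\mathcal{H}})\|$ (evaluate the quadratic form on unit vectors). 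This already gives the claim restricted to Hermitian inputs, in the same spirit as the argument establishing (\ref{infty-monotone}) in the appendix.

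The main obstacle is passing from self-adjoint $A$ to an arbitrary $A$. The naive Cartesian decomposition $A=\mathrm{Re}\,A+\sqrt{-1}\,\mathrm{Im}\,A$ only yields $\|\Lambda(A)\|\le2\|\Lambda(I_{\mathcal{H}})\|\,\|A\|$, because $\sup_{|\lambda|=1}\|\mathrm{Re}(\lambda B)\|$ equals the numerical radius of $B$, not $\|B\|$. To obtain the sharp constant I would reduce to the unital case and then invoke the Russo--Dye theorem. For the reduction, assume first that $P:=\Lambda(I_{\mathcal{H}})$ is invertible and set $\Phi(X):=P^{-1/2}\Lambda(X)P^{-1/2}$; then $\Phi$ is positive and unital, and $\|\Lambda(A)\|=\|P^{1/2}\Phi(A)P^{1/2}\|\le\|P\|\,\|\Phi(A)\|$, so it suffices to prove that a positive unital map is a contraction. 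A possibly singular $P$ is handled by applying this to $\Lambda(\cdot)+\epsilon\,\mathrm{tr}(\cdot)\,I_{\mathcal{K}}/\dim\mathcal{H}$, which is positive with invertible value $P+\epsilon I_{\mathcal{K}}$ at $I_{\mathcal{H}}$, and then letting $\epsilon\downarrow0$.

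For the unital contraction statement I would use that, by the Russo--Dye theorem (see \cite{Bhatia}), the closed unit ball of $\mathcal{L}(\mathcal{H})$ is the closed convex hull of its unitaries; since $X\mapsto\|\Phi(X)\|$ is convex and continuous, a maximum over the ball is attained at an extreme point, and in finite dimension the extreme points of the operator-norm unit ball are exactly the unitaries, so $\|\Phi\|=\sup_{U\text{ unitary}}\|\Phi(U)\|$. It then remains to bound $\|\Phi(U)\|$ for unitary $U$. Here I would exploit that the maps of interest in this paper are completely positive, hence $2$-positive, so Kadison's inequality gives $\Phi(U)^{\dagger}\Phi(U)\le\Phi(U^{\dagger}U)=\Phi(I_{\mathcal{H}})=I_{\mathcal{K}}$, whence $\|\Phi(U)\|\le1$; equivalently, for a $2$-positive $\Lambda$ the whole claim is immediate from positivity of the block matrix $\bigl(\begin{smallmatrix}I_{\mathcal{H}}&A\\ A^{\dagger}&I_{\mathcal{H}}\end{smallmatrix}\bigr)$ when $\|A\|\le1$, since applying $\Lambda\otimes\mathrm{id}_{2}$ and taking a Schur complement yields $\Lambda(A)^{\dagger}\Lambda(I_{\mathcal{H}})^{-1}\Lambda(A)\le\Lambda(I_{\mathcal{H}})$. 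The delicate point, and the part I expect to be hardest to make fully self-contained, is the unitary bound for a \emph{merely} positive (not $2$-positive) $\Lambda$, where Kadison--Schwarz can fail; there one genuinely needs the Russo--Dye circle of ideas rather than a one-line operator inequality.
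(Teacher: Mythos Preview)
The paper gives no proof of this lemma; it is simply quoted as Corollary~2.3.8 of \cite{Bhatia}. Your argument therefore has nothing in the paper to be compared against, and the $2$-positive route you spell out already covers every application made here, since only CP maps occur in the paper.

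Your write-up is correct up to the last point, which you flag as hard and then leave unresolved: bounding $\|\Phi(U)\|$ for unitary $U$ when $\Phi$ is merely positive and unital. The gap is smaller than you suggest. A unitary $U$ is normal, so the restriction of $\Phi$ to the commutative $C^{\ast}$-subalgebra generated by $U$ and $I_{\mathcal{H}}$ is a positive unital map out of a commutative algebra, and any such map is automatically completely positive. Kadison's inequality then applies to that restriction and gives $\Phi(U)^{\dagger}\Phi(U)\leq\Phi(U^{\dagger}U)=\Phi(I_{\mathcal{H}})=I_{\mathcal{K}}$, hence $\|\Phi(U)\|\leq 1$, with no $2$-positivity of $\Phi$ on all of $\mathcal{L}(\mathcal{H})$ assumed. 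So the ``Russo--Dye circle of ideas'' you invoke reduces, in this finite-dimensional setting, to that single observation, and your outline can be completed to a full proof of the lemma as stated.
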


\begin{lemma}
\label{lem:finite-dim}(Theorem\thinspace III.1.1 and Corollary III.2.1 of
\cite{Davidson}) Any finite dimensional $C^{\ast}$-algebra is *-isomorphic to
\[
\mathcal{L}\left(  \mathbb{C}^{n_{1}}\right)  \oplus\cdots\oplus
\mathcal{L}\left(  \mathbb{C}^{n_{k}}\right)  .
\]
Also, If $\pi$ is a non-degenerate *-representation of a finite dimensional
$C^{\ast}$-algebra above, then there are cardinal numbers $d_{1}$,$\cdots$,
$d_{k}$ so that it is unitarily equivalent to $\mathbf{I}_{\mathbb{C}^{n_{1}}%
}^{\left(  d_{1}\right)  }\oplus\cdots\oplus\mathbf{I}_{\mathbb{C}^{n_{k}}%
}^{\left(  d_{k}\right)  }$ .
\end{lemma}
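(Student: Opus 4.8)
The statement is the Artin--Wedderburn structure theorem for finite dimensional $C^{\ast}$-algebras over $\mathbb{C}$, together with the accompanying classification of their non-degenerate $\ast$-representations up to unitary equivalence. I would split the argument into the structure part and the representation part, the first being the substantive one.

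For the structure part, let $\mathcal{A}$ be a finite dimensional $C^{\ast}$-algebra. The plan is first to peel off the center: $\mathcal{Z}\left( \mathcal{A}\right) $ is a finite dimensional commutative $C^{\ast}$-algebra, hence $\ast$-isomorphic to $\mathbb{C}^{k}$, so it carries mutually orthogonal minimal projections $z_{1},\ldots,z_{k}$ summing to $1$; these are central, whence $\mathcal{A}=\bigoplus_{i}z_{i}\mathcal{A}$ as $C^{\ast}$-algebras and each block $z_{i}\mathcal{A}$ has center $\mathbb{C}z_{i}$. It then remains to show that a finite dimensional $C^{\ast}$-algebra $\mathcal{B}$ with trivial center is $\ast$-isomorphic to $\mathcal{L}\left( \mathbb{C}^{n}\right) $. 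I would do this by choosing a minimal nonzero projection $p\in\mathcal{B}$ (available since the projection lattice is finite), observing that minimality forces $p\mathcal{B}p=\mathbb{C}p$, and then building a full system of matrix units: a factor of finite dimension is simple, so the two-sided ideal of $p$ is all of $\mathcal{B}$, and one extracts mutually orthogonal projections $e_{11}=p,e_{22},\ldots,e_{nn}$, each Murray--von Neumann equivalent to $p$ and summing to $1$, together with the connecting partial isometries $e_{jl}$. Sending $e_{jl}\mapsto\left\vert j\right\rangle \left\langle l\right\vert $ then extends to a $\ast$-isomorphism $\mathcal{B}\cong M_{n}\left( p\mathcal{B}p\right) =\mathcal{L}\left( \mathbb{C}^{n}\right) $, and reassembling over $i$ yields $\mathcal{A}\cong\bigoplus_{i}\mathcal{L}\left( \mathbb{C}^{n_{i}}\right) $.

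For the representation part, let $\pi$ be a non-degenerate $\ast$-representation of $\mathcal{A}=\bigoplus_{i}\mathcal{L}\left( \mathbb{C}^{n_{i}}\right) $ on a Hilbert space $\mathcal{K}$; non-degeneracy is equivalent to $\pi\left( 1\right) =I_{\mathcal{K}}$. The projections $\pi\left( z_{i}\right) $ are mutually orthogonal with sum $I_{\mathcal{K}}$, so $\pi$ decomposes as the direct sum of its restrictions to the blocks, with the convention $d_{i}=0$ when $\pi\left( z_{i}\right) =0$. Hence it suffices to prove that any $\ast$-representation $\pi$ of $M_{n}\left( \mathbb{C}\right) $ with $\pi\left( 1\right) =I$ is unitarily equivalent to $\mathbf{I}_{\mathbb{C}^{n}}^{\left( d\right) }$, i.e. to the identity action on $\mathbb{C}^{n}\otimes\mathcal{W}$ for a Hilbert space $\mathcal{W}$. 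I would take $\mathcal{W}:=\pi\left( e_{11}\right) \mathcal{K}$, define $V:\mathbb{C}^{n}\otimes\mathcal{W}\to\mathcal{K}$ by $V\left( \left\vert j\right\rangle \otimes\xi\right) :=\pi\left( e_{j1}\right) \xi$, and check directly from the matrix-unit relations that $V$ is isometric, that $\pi\left( 1\right) =\sum_{j}\pi\left( e_{jj}\right) $ makes $V$ surjective, and that $V^{\ast}\pi\left( e_{jl}\right) V=\left\vert j\right\rangle \left\langle l\right\vert \otimes I_{\mathcal{W}}$; setting $d=\dim\mathcal{W}$, any cardinal, gives the claim.

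The only place where any real care is needed is the passage from a finite dimensional factor to a full matrix algebra: producing the minimal projection $p$, verifying $p\mathcal{B}p=\mathbb{C}p$, and, the genuinely subtle point, using simplicity to guarantee that a maximal family of pairwise orthogonal projections equivalent to $p$ actually sums to $1$ --- for otherwise its complement would dominate a further minimal projection, necessarily again equivalent to $p$ by simplicity, contradicting maximality. Once the matrix units are in hand, both the isomorphism in the structure part and the entire representation part are routine bookkeeping with the relations $e_{jl}e_{mq}=\delta_{lm}e_{jq}$ and $e_{jl}^{\ast}=e_{lj}$.
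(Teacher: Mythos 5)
Your argument is correct and is essentially the standard structure-theorem proof that the paper does not reproduce but imports by citation from Davidson (Theorem III.1.1 and Corollary III.2.1): central decomposition via minimal central projections, matrix units in each factor, and the intertwining isometry $V(\left\vert j\right\rangle \otimes\xi)=\pi(e_{j1})\xi$ for the representation part. One small repair: the projection lattice of a finite dimensional $C^{\ast}$-algebra need not be finite (already $M_{2}(\mathbb{C})$ contains a continuum of rank-one projections), so you should obtain a minimal projection not from finiteness of the lattice but from finite dimensionality, e.g.\ by choosing a nonzero projection $p$ minimizing $\dim p\mathcal{B}p$; with that adjustment the rest of your outline goes through as written.
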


\end{document}